\newif\iflong
\lstdefinelanguage{program}{%
  keywords={%
    let,pass,function,%
    var,const,bool,int,void,atomic,%
    while,do,if,then,else,assume,assert,call,return,rule,forall,with,new,choose,skip,%
    task,async,yield,for,wait,%
    type,relation,init, action, safety, invariant, axiom, input,repeat
  },
  morecomment=[l]{//},
  morecomment=[s]{/*}{*/},
  morecomment=[n]{(*}{*)},
  mathescape=true,
  escapeinside=`',
}
\newcommand\currentcoordinate{\the\tikz@lastxsaved,\the\tikz@lastysaved}
\newcommand{\refappendix}[1]{\Cref{#1}}
\newcommand{\refappendix}[1]{the extended version~\cite{extendedVersion}}
\newcommand{\toolong}[1]{#1}
\newcommand{\toolong}[1]{}
\Crefname{equation}{Eq.}{Eqs.}
\Crefname{conjecture}{Conjecture}{Conjectures}
\Crefname{proposition}{Proposition}{Propositions}
\Crefname{lemma}{Lemma}{Lemmas}
\Crefname{corollary}{Corollary}{Corollaries}
\Crefname{example}{Example}{Examples}
\Crefname{definition}{Def.}{Defs.}
\Crefname{algorithm}{Alg.}{Alg.}
\Crefname{theorem}{Thm.}{Thms.}
\Crefname{figure}{Fig.}{Fig.}
\crefname{line}{line}{lines}
\newcommand{\para}[1]{\vspace{2pt}\noindent\textbf{\textit{#1.}}}
\newcommand{\dom}[1]{dom({#1})}
\newcommand{\ov}{\overline}
\newcommand{\card}[1]{{\left\vert{#1}\right\vert}} %
\renewcommand{\implies}{\Longrightarrow}
\newcommand{\notimplies}{\centernot\implies}
\newcommand{\true}{{\textit{true}}}
\newcommand{\false}{{\textit{false}}}
\newcommand{\vocabulary}{\Sigma}
\newcommand{\voc}{\vocabulary}
\newcommand{\Init}{{\textit{Init}}}
\newcommand{\Bad}{\textit{Bad}}
\newcommand{\States}{{\mbox{States}}[\voc]}
\newcommand{\tr}{\delta}
\newcommand{\SATKw}{\textsc{sat}}
\newcommand{\SAT}[1]{\SATKw(#1)}
\newcommand{\Frameai}{\xi}
\newcommand{\NP}{\textbf{NP}}
\renewcommand{\vec}{\ov}
\newcommand{\project}[2]{\pi_{#2}({#1})}
\newcommand{\set}[1]{\{{#1}\}}
\newcommand{\eqdef}{\stackrel{\rm def}{=}}
\newcommand{\prop}{x}
\newcommand{\reflextr}[1]{\underline{#1}}
\newcommand{\bmcunroll}[2]{{\reflextr{#1}}^{{#2}}}
\newcommand{\bmc}[3]{\bmcunroll{#1}{#3}({#2})}
\newcommand{\dnfsize}[1]{\card{#1}_{\rm dnf}}
\newcommand{\cnfsize}[1]{\card{#1}_{\rm cnf}}
\newcommand{\cubemon}[2]{\textit{cube}_{{#2}}({#1})}
\newcommand{\moncube}[2]{\cubemon{#1}{#2}}
\newcommand{\monox}[2]{\mathcal{M}_{#2}({#1})}
\newcommand{\boundarypos}[1]{\partial^{+}({#1})}
\newcommand{\boundaryneg}[1]{\partial^{-}({#1})}
\newcommand{\bigO}{O}
\newcommand{\equivalencequery}[1]{{\rm EQ}\left({#1}\right)}
\newcommand{\membershipquery}[1]{{\rm MQ}\left({#1}\right)}
\newcommand{\mspan}[1]{{\rm MSpan}({#1})}
\newcommand{\mhull}[2]{{\rm MHull}_{#2}({#1})}
\newcommand{\absr}[1]{{#1}^{\musDoubleSharp}}
\newcommand{\malpha}[1]{\alpha_{#1}}
\newcommand{\madom}[1]{\mathbb{M}[{#1}]}
\newcommand{\bkcube}{b}
\newcommand{\reflect}[1]{\textit{Ref}({#1})}
\newcommand{\join}{\mathbin{\sqcup}}
\newcommand{\litabs}[1]{\textcolor{blue}{#1}}
\newcommand{\cubdom}[1]{\dom{#1}}
\newcommand{\dotsym}[1]{{\vphantom{#1}\mathpalette\d@tsym{#1}\relax}}
\newcommand{\d@tsym}[1]{%
  \ooalign{\hidewidth$\m@th\cdot$\hidewidth\cr$\m@th #1$\cr}%
}
\newcommand{\dotdelta}{{\vphantom{\delta}\mathpalette\d@td@lta\relax}}
\newcommand{\d@td@lta}[2]{%
  \ooalign{\hidewidth$\m@th#1\cdot$\hidewidth\cr$\m@th#1\delta$\cr}%
}
\newcommand\mathbox[1]{\mathord{\ThisStyle{%
  \fboxsep0\LMpt\relax\kern1\LMpt\fbox{$\SavedStyle#1$}\kern1\LMpt}}}
\newcommand{\cubejoin}[1]{\mathbox{#1}}
\newcommand{\restrict}[2]{{#1}\big|_{#2}}
\newcommand{\superefficientalgname}{Monotonize}
\newcommand{\dualourinterpolationalgname}{\mbox{\rm Dual-MB-ITP}}
\newcommand{\cubedom}[1]{\cubdom{#1}}
\newcommand{\aibound}{\zeta}
\newcommand{\hamminginterval}[2]{\cubejoin{{\, #1},{#2 \,}}}
\begin{document}

\newif\ifcomments
\commentsfalse
\nochangebars
\definecolor{dg}{cmyk}{0.60,0,0.88,0.27}

\newcommand{\sharonnew}[1]{\sharon{#1}}
\newcommand{\yotamnew}[1]{\yotamsmall{#1}}
\newcommand{\yotamforlater}[1]{}

\ifcomments
\newcommand{\artem}[1]{{\footnotesize\color{olive}[{\bf Artem}: #1]}}
\newcommand{\yotamsmall}[1]{{\footnotesize\color{magenta}[{\bf Yotam}: #1]}}

\newcommand{\sharon}[1]{{\textcolor{purple}{SS: {\em #1}}}}
\newcommand{\mooly}[1]{{\textcolor{cyan}{MS: {\em #1}}}}
\newcommand{\yotam}[1]{{\textcolor{magenta}{{\bf #1}}}}
\newcommand{\jrw}[1]{{\textcolor{green}{JRW: {\em #1}}}}
\newcommand{\TODO}[1]{{\textcolor{red}{TODO: {\em #1}}}}

\else
\newcommand{\sharon}[1]{}
\newcommand{\adam}[1]{}
\newcommand{\mooly}[1]{}
\newcommand{\neil}[1]{}
\newcommand{\jrw}[1]{}
\newcommand{\yotam}[1]{}
\newcommand{\TODO}[1]{}
\newcommand{\artem}[1]{}
\newcommand{\yotamsmall}[1]{}

\fi

\newcommand{\commentout}[1]{}
\newcommand{\OMIT}[1]{}  %
\title{Invariant Inference With Provable Complexity From the Monotone Theory}
\author{Yotam M.\ Y.\ Feldman \and Sharon Shoham}
\institute{Tel Aviv University}
\maketitle              %
\begin{abstract}
\vspace{-0.5cm}
Invariant inference algorithms such as interpolation-based inference and IC3/PDR show that it is feasible, in practice, to find inductive invariants for many interesting systems, but non-trivial upper bounds on the computational complexity of such algorithms are scarce, and limited to simple syntactic forms of invariants.
In this paper we achieve invariant inference algorithms, in the domain of propositional transition systems, with \emph{provable upper bounds} on the number of SAT calls.
We do this by building on the \emph{monotone theory}, developed by Bshouty for exact learning Boolean formulas.
We prove results for two invariant inference frameworks:
(i) \emph{model-based interpolation}, where we show an algorithm that, under certain conditions about reachability, %
efficiently infers invariants
when they have both short CNF and DNF representations (transcending previous results about monotone invariants); and
(ii) \emph{abstract interpretation} in a domain based on the monotone theory that was previously studied in relation to \emph{property-directed reachability}, where we propose an efficient implementation of the best abstract transformer, leading to overall complexity bounds on the number of SAT calls.
These results build on a novel procedure for computing least monotone overapproximations.
\vspace{-0.3cm}
\end{abstract} 
\section{Introduction}
In a fruitful, recent trend, many that aspire to innovate in verification seek advice from research in machine learning~\cite[e.g.][]{DBLP:conf/cav/SharmaNA12,DBLP:conf/sas/0001GHAN13,DBLP:conf/esop/0001GHALN13,DBLP:conf/icse/JhaGST10,ICELearning,DBLP:journals/pacmpl/EzudheenND0M18,DBLP:journals/acta/JhaS17,DBLP:journals/pacmpl/FeldmanISS20,DBLP:conf/pldi/KoenigPIA20,DBLP:journals/pacmpl/FeldmanSSW21}.
The topic of this paper is the application of the \emph{monotone theory}, developed by Bshouty in exact concept learning, to improve theoretical complexity results for \emph{inductive invariant inference}.

One of the \emph{modi operandi} for automatically proving that a system is safe---that it cannot reach a state it should not---is an \emph{inductive invariant}, which is an assertion that
\begin{inparaenum}[(i)]
	\item holds for the initial states,
	\item does not hold in any bad state, and
	\item is closed under transitions of the system.
\end{inparaenum}
These properties are reminiscent of a data classifier, separating good from bad points, prompting the adaptation of algorithms from classical classification to invariant inference~\cite[e.g.][]{DBLP:conf/cav/SharmaNA12,DBLP:conf/sas/0001GHAN13,DBLP:conf/esop/0001GHALN13,DBLP:journals/fmsd/SharmaA16,ICELearning,DBLP:journals/pacmpl/EzudheenND0M18,DBLP:conf/popl/0001NMR16,DBLP:journals/pacmpl/FeldmanSSW21}. In this paper we focus on inductive invariants for propositional transition systems, which are customary in hardware verification and also applicable to software systems through predicate abstraction~\cite{DBLP:conf/cav/GrafS97,DBLP:conf/popl/FlanaganQ02}.

The \emph{monotone theory} %
by Bshouty~\cite{DBLP:journals/iandc/Bshouty95} is a celebrated achievement in learning theory (most notably in exact learning with queries~\cite{DBLP:journals/ml/Angluin87}) that is the foundation for learning Boolean formulas with complex syntactic structures. At its core, the monotone theory studies the
\emph{monotonization} $\monox{\varphi}{\bkcube}$ of a formula $\varphi$ w.r.t.\ a valuation $\bkcube$, which is the smallest $\bkcube$-monotone formula that overapproximates $\varphi$. (This concept is explained in~\Cref{sec:monotone-background}.) %
In Bshouty's work, several monotonizations are used to efficiently reconstruct $\varphi$.

Recently, the monotone theory has been applied to theoretical studies of invariant inference in the context of two prominent SAT-based inference approaches. In this paper, we solve open problems in each, using a new efficient algorithm to compute monotonizations.

\para{Efficient interpolation-based inference}
The study~\cite{DBLP:journals/pacmpl/FeldmanSSW21} of interpolation-based invariant inference---a hugely influential approach pioneered by McMillan~\cite{DBLP:conf/cav/McMillan03}%
---identified the fence condition as a property of systems and invariants under which the success of a model-based inference algorithm~\cite{DBLP:conf/hvc/ChocklerIM12,DBLP:conf/lpar/BjornerGKL13} is guaranteed. %
(We explain the fence condition in~\Cref{sec:background-itp}.)
Under this condition, %
the number of SAT calls (specifically, bounded model checking queries)
of the original model-based interpolation algorithm was shown to be polynomial in the DNF size (the number of terms in the smallest DNF representation) of the invariant, but only when the invariant is \emph{monotone} (containing no negated variables)~\cite{DBLP:journals/pacmpl/FeldmanSSW21}. Based on the monotone theory, %
the authors of~\cite{DBLP:journals/pacmpl/FeldmanSSW21} further
introduced an algorithm that, under the same fence condition, could efficiently infer invariants that were \emph{almost} monotone (containing $\bigO(1)$ terms with negated variables).
However, their techniques could not extend to mimic the pinnacle result of Bshouty's paper: the CDNF algorithm~\cite{DBLP:journals/iandc/Bshouty95}, which can learn formulas in a number of queries that is
polynomial in their DNF size, their CNF size (the number of clauses in their smallest CNF representation), and the number of variables. It was unclear whether an analogous result is possible in invariant inference without strengthening the fence condition, %
e.g.\ to assume that the fence condition holds both forwards and backwards (see~\Cref{sec:related}).

We solve this question, and introduce an algorithm that can infer an invariant in a number of SAT queries (specifically, bounded model checking queries)
that is polynomial in the invariant's DNF size, CNF size, and the number of variables, under the assumption that the invariant satisfies the fence condition, without further restrictions (\Cref{thm:lambda-dual-itp-efficient}). In particular, this implies that invariants that are representable by a small decision tree can be inferred efficiently.
The basic idea is to learn an invariant $I$ as a conjunction of monotonizations $\monox{I}{\sigma}$ where $\sigma$ are chosen as counterexamples to induction. The challenge is that $I$ is unknown, and
the relatively weak assumption on the transition system of the fence condition does not allow the use of several operations (e.g.\ membership queries) that learning algorithms rely on to efficiently generate such a representation. %

\para{Efficient abstract interpretation}
The study of IC3/PDR~\cite{ic3,pdr} revealed that part of the overapproximation this sophisticated algorithm performs is captured by an abstract interpretation procedure, in an abstract domain founded on the monotone theory~\cite{DBLP:journals/pacmpl/FeldmanSSW22}. In this procedure, dubbed $\Lambda$-PDR, each iteration involves several monotonizations of the set of states reachable in one step from the value of the previous iteration. Upper bounds on the number of iterations in $\Lambda$-PDR were investigated to shed light on the number of frames of PDR~\cite{DBLP:journals/pacmpl/FeldmanSSW22}.
However, it was unclear whether the abstract domain itself can be implemented in an efficient manner, and whether efficient complexity bounds on the number of SAT queries (and not just the number of iterations) in $\Lambda$-PDR can be obtained.

We solve this question, and show that $\Lambda$-PDR can be implemented to yield an overall upper bound on the number of SAT calls which is polynomial in the same quantity that was previously used to bound the number of iterations in $\Lambda$-PDR (\Cref{thm:efficient-eepdr}).
This is surprising because, until now, %
there was no way to compute monotonizations of the post-image of the previous iteration that did not suffer from the fact that the exact post-image of a set of states may be much more complex to represent than its abstraction.

\para{Super-efficient monotonization}
Bshouty~\cite{DBLP:journals/iandc/Bshouty95} provided an algorithm to compute the monotonization $\monox{\varphi}{\bkcube}$,
but the complexity of this algorithm depends on the DNF size of the original formula $\varphi$.
Our aforementioned results build on a new algorithm for the same task, whose complexity depends on the DNF size of the monotonization $\monox{\varphi}{\bkcube}$ (\Cref{thm:efficient-monox-efficiency}), which may be much smaller (and never larger).
This enables our efficient interpolation-based inference algorithm and our efficient implementation of abstract interpretation, although each result requires additional technical sophistication:
For our efficient model-based interpolation result, the key idea is that the monotonization of an invariant satisfying the fence condition can be computed through the monotonization of the set of states reachable in at most $s$ steps, and our new monotonization algorithm allows to do this efficiently even when the latter set is complex to represent exactly.
For our efficient abstract interpretation result, the key idea is that the DNF size of an abstract iterate is bounded by a quantity related to monotonizations of the transition relation, and our new monotonization algorithm allows to compute it efficiently w.r.t\ the same quantity even though the DNF size of the exact post-image of the previous iterate may be larger.

Overall, we make the following contributions:
\begin{itemize}
	\item We introduce a new efficient algorithm to compute monotonizations, whose complexity in terms of the number of SAT queries is proportional to the DNF size of its output (\Cref{sec:efficient-monox}).

	\item We prove that an invariant that satisfies the fence condition can be inferred in a number of SAT (bounded model checking) queries that is polynomial in its CNF size, its DNF size, and the number of variables; in particular, invariants represented by small decision trees can be efficiently inferred (\Cref{sec:cdnf}).

	\item We prove an efficient complexity upper bound for the number of SAT queries performed by abstract interpretation in a domain based on the monotone theory (\Cref{sec:efficient-eedpr}).
\end{itemize}
\Cref{sec:prelim} sets preliminary notation and~\Cref{sec:monotone-background} provides background on the monotone theory. \Cref{sec:related} discusses related work and~\Cref{sec:conclusion} concludes.  
\section{Preliminaries}
\label{sec:prelim}
We work with propositional transition systems defined over a vocabulary $\voc = \set{p_1,\ldots,p_n}$ of $n$ Boolean variables.
We identify a formula with the set of its valuations, and at times also identify a set of valuations with an arbitrary formula that represents it which is chosen arbitrarily (one always exists in propositional logic). $\varphi \implies \psi$ denotes the validity of the formula $\varphi \to \psi$.

\para{States, transition systems, inductive invariants}
A \emph{state} is a \emph{valuation} to $\voc$. %
If $x$ is a state, $x[p]$ is the value ($\true/\false$ or $1/0$) that $x$ assigns to the variable $p \in \voc$.
A \emph{transition system} is a triple $(\Init,\tr,\Bad)$ where $\Init,\Bad$ are formulas over $\voc$ denoting the set of initial and bad states respectively, and the \emph{transition relation} $\tr$ is a formula over $\voc \uplus \voc'$, where $\voc' = \{ \prop' \mid \prop \in \voc\}$ is a copy of the vocabulary used to describe the post-state of a transition.
If $\tilde{\voc},\tilde{\voc}'$ are distinct copies of $\voc$, $\tr[\tilde{\voc},\tilde{\voc}']$ denotes the substitution in $\tr$ of each $p \in \voc$ by its corresponding in $\tilde{\voc}$ and likewise for $\voc',\tilde{\voc'}$.
Given a set of states $S$, the \emph{post-image} of $S$ is $\tr(S) = \set{\sigma' \mid \exists \sigma \in S. \ (\sigma,\sigma') \models \tr}$.
A transition system is \emph{safe} if all the states that are reachable from $\Init$ via any number of steps of $\tr$ satisfy $\neg \Bad$. %
An \emph{inductive invariant} is a formula $I$ over $\voc$ such that
\begin{inparaenum}[(i)]
	\item $\Init \implies I$,
	\item \label{it:prelim-inductiveness} $I \land \tr \implies I'$, and
	\item $I \implies \neg\Bad$, where $I'$ %
denotes the result of substituting each $\prop \in \voc$ for $\prop' \in \voc'$ in $I$.
\end{inparaenum}
In the context of propositional logic, a transition system is safe iff it has an inductive invariant.

\para{Use of SAT in invariant inference}
Given a candidate, the requirements for being an inductive invariant can be verified using SAT; we refer to the SAT query that checks requirement~\ref{it:prelim-inductiveness} by the name \emph{inductiveness check}.
When an inductiveness checks fails, a SAT solver returns a \emph{counterexample to induction}, which is a transition $(\sigma,\sigma')$ with $\sigma \models I$ but $\sigma' \not\models I$.
Another important check in invariant inference algorithms that can be implemented using SAT is \emph{bounded model checking (BMC)}~\cite{DBLP:conf/tacas/BiereCCZ99}, which asks whether a set of states described by a formula $\psi$ is forwards unreachable in a bounded number $s \in \mathbb{N}$ of steps; we write this as the check $\bmc{\tr}{\Init}{s} \cap \psi \overset{?}{=} \emptyset$. %
Using SAT it is also possible to obtain a counterexample $\sigma \in \bmc{\tr}{\Init}{s} \cap \psi$ if it exists.

We measure the \emph{complexity} of a SAT-based inference algorithm by the number of SAT calls it performs (including inductiveness checks, BMC, and other SAT calls), and the number of other steps, when each SAT call is considered one step (an oracle call).

\para{Literals, Cubes, Clauses, CNF, DNF}
A \emph{literal} $\ell$ is a variable $p$ or its negation $\neg p$.
A \emph{clause} $c$ is a disjunction of %
literals. %
The empty clause is $\false$.
A formula is in \emph{conjunctive normal norm (CNF)} if it is a conjunction of clauses.
A \emph{cube} or \emph{term} $d$ is a conjunction of a consistent set of literals; at times, we also refer directly to the set and write $\ell \in d$. The empty cube is $\true$.
A formula is in \emph{disjunctive normal form (DNF)} if it is a disjunction of terms.
The \emph{domain}, $\cubdom{d}$, of a cube $d$ is the set of variables that appear in it (positively or negatively).
Given a state $\sigma$, we use the state and the (full) cube that consists of all the literals that are satisfied in $\sigma$ interchangeably.
$\dnfsize{\varphi}$ is the minimal number of terms in any DNF representation of $\varphi$.
$\cnfsize{\varphi}$ is the minimal number of clauses in any CNF representation of $\varphi$.
\section{Background: The Monotone Theory}
\label{sec:monotone-background}
This section provides necessary definitions and results from the monotone theory by Bshouty~\cite{DBLP:journals/iandc/Bshouty95} as used in this paper. Our presentation is based on~\cite{DBLP:journals/pacmpl/FeldmanSSW21,DBLP:journals/pacmpl/FeldmanSSW22} (lemmas that are stated here slightly differently are proved in~\refappendix{sec:proofs-appendix}).

Boolean functions which are \emph{monotone} are special in many ways; one is that they are easier to learn~\cite[e.g.][]{DBLP:journals/cacm/Valiant84,DBLP:journals/ml/Angluin87}. Syntactically, a monotone function can be written in DNF so that all variables appear positively. This is easily generalized to $b$-monotone formulas, where each variable appears only at one polarity specified by $b$ (\Cref{def:b-monotonicity}). The \emph{monotone theory} aims to handle functions that are not monotone through the conjunction of $b$-monotone formulas.
\Cref{sec:monox} considers the (over)approximation of a formula by a $b$-monotone formula, the ``monotonization'' of a formula; \Cref{sec:monotone-hull} studies the conjunction of several such monotonizations through the \emph{monotone hull} operator. %

\subsection{Least $b$-Monotone Overapproximations}
\label{sec:monox}
\begin{definition}[$b$-Monotone Order]
\label{def:b-monotone-order}
Let $b$ be a cube. We define a partial order over states where $v \leq_b x$ when $x,v$ agree on all variables not present in $b$, and $x$ disagrees with $b$ on all variables on which also $v$ disagrees with $b$:
$\forall p \in \voc. \ x[p] \neq v[p] \mbox{ implies } p \in \cubdom{b} \land v[p]=b[p]$.
\end{definition}
Intuitively, $v \leq_b x$
when $x$ can be obtained from $v$ by flipping bits to the opposite of their value in $b$.
\begin{definition}[$b$-Monotonicity]
\label{def:b-monotonicity}
A formula $\psi$ is $b$-monotone for a cube $b$ if
$
\forall v \leq_b x. \ v \models \psi \mbox{ implies } x \models \psi.
$
\end{definition}
That is, if $v$ satisfies $\psi$, so do all the states that are farther away from $b$ than $v$.
For example, if $\psi$ is $000$-monotone and $100 \models \psi$, then because $100 \leq_{000} 111$ (starting in $100$ and moving away from $000$ can reach $111$), also $111 \models \psi$.
In contrast, $100 \not\leq_{000} 011$ (the same process cannot flip the $1$ bit that already disagrees with $000$), so $011$ does not necessarily belong to $\psi$.
($000$-monotonicity corresponds to the usual notion of monotone formulas.)

\begin{definition}[Least $b$-Monotone Overapproximation]
\label{def:monox}
For a formula $\varphi$ and a cube $b$, the \emph{least $b$-monotone overapproximation} of $\varphi$ is a formula $\monox{\varphi}{b}$ defined by
\begin{equation*}
	x \models \monox{\varphi}{b} \mbox{ iff } \exists v. \ v \leq_b x 	\land	 v \models \varphi.
\end{equation*}
\end{definition}
For example, if $100 \models \varphi$, then $100 \models \monox{\varphi}{000}$ because $\monox{\varphi}{000}$ is an overapproximation, and hence $111 \models \monox{\varphi}{000}$ because it is $000$-monotone, as above.
Here, thanks to minimality, $011$ does not belong to $\monox{\varphi}{000}$, unless $000$, $001$, $010$, or $011$ belong to $\varphi$.

The minimality property of $\monox{\varphi}{b}$ is formalized as follows:
\begin{lemma}
\label{lem:monox-minimality}
$\monox{\varphi}{b}$ (\Cref{def:monox}) is the least $b$-monotone formula $\psi$ (\Cref{def:b-monotonicity}) s.t.\ $\varphi \implies \psi$
(i.e., for every other $b$-monotone formula $\psi$, if $\varphi \implies \psi$ then $\monox{\varphi}{b} \implies \psi$).
\end{lemma}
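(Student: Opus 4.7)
The statement packages three claims about $\monox{\varphi}{b}$: that it overapproximates $\varphi$, that it is itself $b$-monotone, and that it is the \emph{smallest} such $b$-monotone overapproximation. My plan is to verify each in turn directly from the witness-style definition $x \models \monox{\varphi}{b} \iff \exists v.\ v \leq_b x \land v \models \varphi$.

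For overapproximation, I would observe that $x \leq_b x$ always holds: the defining implication ``$x[p] \neq v[p]$ implies $p \in \cubdom{b} \land v[p] = b[p]$'' is vacuous when $v = x$. Hence any $x \models \varphi$ witnesses itself and lies in $\monox{\varphi}{b}$. For minimality, I would take any $b$-monotone $\psi$ with $\varphi \implies \psi$ and an arbitrary $x \models \monox{\varphi}{b}$; unpacking the definition gives $v \leq_b x$ with $v \models \varphi$, so $v \models \psi$ by assumption, and then $x \models \psi$ by $b$-monotonicity of $\psi$. This step is essentially immediate once the definitions are unfolded.

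The only nontrivial part is showing that $\monox{\varphi}{b}$ is itself $b$-monotone. Given $v \leq_b x$ and $v \models \monox{\varphi}{b}$, one obtains $v'$ with $v' \leq_b v$ and $v' \models \varphi$, and the goal reduces to $v' \leq_b x$, i.e., to \emph{transitivity of $\leq_b$}. I would prove this separately as a small lemma by case analysis on each coordinate $p$ where $x[p] \neq v'[p]$: either $x[p] = v[p]$, in which case $v[p] \neq v'[p]$ and $v' \leq_b v$ supplies $p \in \cubdom{b}$ and $v'[p] = b[p]$; or $x[p] \neq v[p]$, in which case $v \leq_b x$ forces $p \in \cubdom{b}$ with $v[p] = b[p]$, and then regardless of whether $v[p] = v'[p]$ or not (the latter handled again by $v' \leq_b v$), one concludes $v'[p] = b[p]$.

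The main ``obstacle'', such as it is, is simply getting the transitivity case analysis right, since the definition of $\leq_b$ mixes a disagreement condition with a fixed-polarity condition on variables in $\cubdom{b}$. Once transitivity of $\leq_b$ is in hand, all three parts of the lemma follow in a few lines, and no additional machinery beyond Defs.~\ref{def:b-monotone-order}--\ref{def:monox} is needed.
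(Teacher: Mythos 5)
Your proof is correct and follows essentially the same route as the paper's: the minimality argument (unfold the witness $v \leq_b x$, use $\varphi \implies \psi$ and then $b$-monotonicity of $\psi$) is identical. The only difference is that you explicitly verify reflexivity and transitivity of $\leq_b$ to establish that $\monox{\varphi}{b}$ is a $b$-monotone overapproximation, which the paper dismisses as immediate from the definition; your case analysis for transitivity is accurate.
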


An immediate but useful fact is that $\monox{\cdot}{b}$ is a monotone operator:
\begin{lemma}
\label{lem:bshouty-monox-monotone}
If $\varphi_1 \implies \varphi_2$ then $\monox{\varphi_1}{b} \implies \monox{\varphi_2}{b}$.
\end{lemma}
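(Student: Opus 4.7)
The statement is a basic monotonicity property, and there are two natural routes, both short. I would present the direct route via the definition of $\monox{\cdot}{b}$ (\Cref{def:monox}), and mention the alternative via the minimality characterization (\Cref{lem:monox-minimality}) as a one-line sanity check.

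The direct plan is as follows. Fix a cube $b$ and formulas $\varphi_1 \implies \varphi_2$. Let $x$ be an arbitrary state satisfying $\monox{\varphi_1}{b}$. Unfolding \Cref{def:monox}, there exists a state $v$ with $v \leq_b x$ and $v \models \varphi_1$. Since $\varphi_1 \implies \varphi_2$, the same $v$ satisfies $\varphi_2$. Thus $v$ witnesses membership of $x$ in $\monox{\varphi_2}{b}$, again by \Cref{def:monox}, so $x \models \monox{\varphi_2}{b}$. As $x$ was arbitrary, $\monox{\varphi_1}{b} \implies \monox{\varphi_2}{b}$.

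As an alternative proof, one can argue by minimality. By \Cref{lem:monox-minimality}, $\monox{\varphi_2}{b}$ is a $b$-monotone formula with $\varphi_2 \implies \monox{\varphi_2}{b}$; composing with the hypothesis gives $\varphi_1 \implies \monox{\varphi_2}{b}$. Since $\monox{\varphi_1}{b}$ is the \emph{least} $b$-monotone overapproximation of $\varphi_1$ (again \Cref{lem:monox-minimality}), we conclude $\monox{\varphi_1}{b} \implies \monox{\varphi_2}{b}$.

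There is no real obstacle here: the lemma is a direct consequence of the existential form of \Cref{def:monox}, since the witness $v$ for $\varphi_1$ is automatically a witness for $\varphi_2$. I would keep the proof to a few lines and prefer the direct argument, since it does not rely on \Cref{lem:monox-minimality} (which is itself a separately proved fact).
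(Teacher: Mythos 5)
Your direct argument is correct and is exactly the ``immediate'' argument the paper has in mind (the paper states this lemma without proof, as a direct consequence of \Cref{def:monox}: any witness $v \leq_b x$ for $\varphi_1$ is also a witness for $\varphi_2$). The alternative via \Cref{lem:monox-minimality} is also valid; either version would serve as the omitted proof.
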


\para{Syntactic intuition}
Ordinary monotone formulas are $\vec{0}$-monotone;
for general $b$, a formula $\psi$ in DNF is $b$-monotone if interchanging $p,\neg p$ whenever $b[p]=\true$ results in a formula that is monotone DNF per the standard definition.\footnote{
	When $b$ is a full cube, another way to say this is that $\psi$ is $b$-monotone if it is monotone in the ordinary sense under the translation~\cite{wiedemann1987hamming} specified by $b$.
}
When $\varphi$ is not $b$-monotone, the monotonization $\monox{\varphi}{b}$ is the ``closest thing'', in the sense that it is the smallest $b$-monotone $\psi$ s.t.\ $\varphi \implies \psi$. As we shall see, $\monox{\varphi}{b}$ can be efficiently obtained from $\varphi$ by deleting literals.
The syntactic viewpoint is key for our results in~\Cref{sec:efficient-monox} and~\Cref{sec:cdnf}.

\para{Geometric intuition}
Geometrically, $\psi$ is $b$-monotone if $v \models \psi \implies x \models \psi$ for every states $v,x$ s.t.\ $v \leq_b x$; the partial order $\leq_b$ indicates that $x$ is ``farther away'' from $b$ in the Hamming cube than $v$ from $b$, namely, that there is a shortest path w.r.t.\ Hamming distance from %
$b$ to $x$
(or from %
$\project{x}{b}$---%
the projection of $x$ onto $b$---to $x$, when $b$ is not a full cube) that goes through $v$. %
A formula $\psi$ is $b$-monotone when it is closed under this operation, of getting farther from $b$.
In this way, $\monox{\varphi}{b}$ corresponds to the set of states $x$ %
to which there is a shortest path from $b$
that intersects $\varphi$.\footnote{
	This is reminiscent of visibility in Euclidean geometry~\cite[e.g.][]{DBLP:reference/cg/ORourke04}: picturing $b$ as a guard, the source of visibility, then $\monox{\varphi}{b}$ is the set of states that are visible in $\neg \varphi$, that is, the set of states $\sigma$ s.t.\ the ``line segment'' $[b,\sigma]$ is contained in $\neg \varphi$. Here $[b,\sigma]$ is the Hamming interval~\cite[e.g.][]{wiedemann1987hamming} between $b,\sigma$, the union of all the multiple shortest paths between the states (each path corresponds to a different permutation of the variables on which the states disagree).
}
The geometric viewpoint is key for our results in~\Cref{sec:efficient-monox} and for the abstract domain in~\Cref{sec:efficient-eedpr}.

\para{Disjunctive form}
The monotone overapproximation can be obtained from a DNF representation of the original formula, a fact that is useful for algorithms that compute the monotone overapproximation.
Starting with a DNF representation of $\varphi$, we can derive a DNF representation of $\monox{\varphi}{b}$
by dropping in each term the literals that agree with $b$.
Intuitively, if $\ell$ agrees with $b$, the ``constraint'' that $\sigma \models \ell$ is dropped from $\monox{t}{b}$ because
if $\sigma \models \monox{t}{b}$ then flipping the value of $\ell$ in $\sigma$ results in a state $\tilde{\sigma}$ such that $\sigma \leq_{b} \tilde{\sigma}$ and hence also $\tilde{\sigma} \models \monox{t}{b}$.

\begin{lemma}
\label{lem:bshouty-mon-mindnf}
Let $\varphi = t_1 \lor \ldots \lor t_m$ in DNF. Then the monotonization $\monox{\varphi}{b} \equiv \monox{t_1}{b} \lor \ldots \lor \monox{t_m}{b}$ where $\monox{t_i}{b} \equiv t_i \setminus b = \bigwedge \set{\ell \in t_i \land \ell \not\in b}$.
\end{lemma}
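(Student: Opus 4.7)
The plan is to split the equivalence into two parts: (i) distribution of $\monox{\cdot}{b}$ over disjunction, giving $\monox{\varphi}{b} \equiv \bigvee_i \monox{t_i}{b}$, and (ii) the syntactic characterization for a single term, $\monox{t}{b} \equiv t \setminus b$. Part (i) is the easier of the two and I would dispatch it first, either directly from \Cref{def:monox} by pushing the existential over the disjunction,
\[
x \models \monox{\varphi}{b} \;\Longleftrightarrow\; \exists v \leq_b x.\ \bigvee_i v \models t_i \;\Longleftrightarrow\; \bigvee_i \exists v \leq_b x.\ v \models t_i \;\Longleftrightarrow\; \bigvee_i x \models \monox{t_i}{b},
\]
or, more abstractly, by combining monotonicity of $\monox{\cdot}{b}$ (\Cref{lem:bshouty-monox-monotone}), which gives $\bigvee_i \monox{t_i}{b} \implies \monox{\varphi}{b}$, with the minimality property (\Cref{lem:monox-minimality}) applied to the $b$-monotone upper bound $\bigvee_i \monox{t_i}{b}$ of $\varphi$.

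For Part (ii), write $t' \eqdef t \setminus b$ and establish two facts. First, $t'$ is $b$-monotone: each literal $\ell \in t'$ is either on a variable outside $\cubdom{b}$ (preserved by $\leq_b$, since $v \leq_b x$ requires $v,x$ to agree off $\cubdom{b}$) or on a variable in $\cubdom{b}$ where $\ell$ disagrees with $b$ (also preserved, since $v \leq_b x$ forces $x$ to disagree with $b$ wherever $v$ does). Since $t \implies t'$ (dropping literals weakens the cube), \Cref{lem:monox-minimality} then yields $\monox{t}{b} \implies t'$. Second, for the reverse direction $t' \implies \monox{t}{b}$, I would build, given an arbitrary $x \models t'$, an explicit witness $v \leq_b x$ with $v \models t$ as follows: set $v[p] \eqdef $ the value forced by $t$ for every $p \in \cubdom{t}$, and $v[p] \eqdef x[p]$ for $p \notin \cubdom{t}$. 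Then $v \models t$ by construction, and a short case analysis checks $v \leq_b x$: on variables outside $\cubdom{b}$, agreement with $x$ holds by construction or because the corresponding literal of $t$ lies in $t'$ and is satisfied by $x$; on variables inside $\cubdom{b}$ where $v$ disagrees with $b$, the literal of $t$ (if any) lies in $t'$ and so is satisfied by $x$, forcing $x$ to also disagree with $b$.

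The only step that requires a bit of care is the reverse direction of Part (ii): the definition of $\leq_b$ has two conditions (agreement outside $\cubdom{b}$ and monotone flipping inside $\cubdom{b}$) and one must check both while juggling the three cases for a variable (in $\cubdom{t} \cap \cubdom{b}$, in $\cubdom{t} \setminus \cubdom{b}$, or outside $\cubdom{t}$). Everything else is definitional. I would present the argument for a single term (ii) first, since (i) is then immediate, although in the writeup the order of the statement suggests proving (i) first and then (ii).
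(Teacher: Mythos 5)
Your proof is correct. Note that the paper does not actually prove this lemma itself: it is imported background from Bshouty's monotone theory (the appendix of missing proofs covers \Cref{lem:monox-minimality} and \Cref{lem:moncube-model-of-monotonization} but not this one), and the main text only offers the informal justification that literals agreeing with $b$ can be dropped because flipping them moves farther from $b$. Your two-part decomposition --- distribution of $\monox{\cdot}{b}$ over disjunction by commuting the existential with the disjunction, followed by the single-term computation --- is the standard argument and is consistent with that informal sketch. Both directions of part (ii) check out: $b$-monotonicity of $t \setminus b$ plus \Cref{lem:monox-minimality} gives one inclusion, and your explicit witness $v$ (forced to $t$'s values on $\cubdom{t}$, copied from $x$ elsewhere) correctly establishes $v \leq_b x$ for the other, including the observation that any variable where $v$ and $x$ could disagree must carry a literal of $t$ that was dropped, i.e.\ one agreeing with $b$, which is exactly the condition $\leq_b$ requires.
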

This fact has several useful corollaries.
First, for the important special case of a state (full cube) $v$, the monotonization $\monox{v}{b}$ is the conjunction of all literals that hold in $v$ except those that are present in $b$, written %
$$
	\cubemon{v}{b} \eqdef \monox{v}{b} =
					 \bigwedge{\set{p \ | \ v[p]=\true, \, p \not\in b}} \land
					 \bigwedge{\set{\neg p_i \ | \ v[p]=\false, \, \neg p \not\in b}}.
$$
In particular, if $v \models \varphi$ then $\cubemon{v}{b} \implies \monox{\varphi}{b}$ (follows from~\Cref{lem:bshouty-mon-mindnf} thinking about the representation $v \lor \varphi$).
A similar property holds under the weaker premise that $v$ is known to belong to the monotonization:
\begin{lemma}
\label{lem:moncube-model-of-monotonization}
If $v \models \monox{\varphi}{b}$ then $\cubemon{v}{b} \implies \monox{\varphi}{b}$.
\end{lemma}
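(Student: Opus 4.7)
The plan is to reduce the statement to the $b$-monotonicity of $\monox{\varphi}{b}$ via a syntactic-to-semantic characterization of the cube $\cubemon{v}{b}$. Specifically, I will show that the set of states satisfying $\cubemon{v}{b}$ coincides with the upward cone $\{x : v \leq_b x\}$, and then apply $b$-monotonicity of the monotonization to transport satisfaction from $v$ to all such $x$.

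\textbf{Step 1: semantic description of $\cubemon{v}{b}$.} Unfolding the definition
$\cubemon{v}{b} = \bigwedge \{p : v[p]=\true,\, p \notin b\} \land \bigwedge \{\neg p : v[p]=\false,\, \neg p \notin b\}$,
a state $x$ satisfies $\cubemon{v}{b}$ iff $x[p]=v[p]$ for every variable $p$ where the literal of $v$ at $p$ is not already forced by $b$, i.e., for every $p$ such that either $p \notin \cubdom{b}$ or $v[p] \neq b[p]$. Taking the contrapositive, this reads: $x[p] \neq v[p]$ implies $p \in \cubdom{b}$ and $v[p] = b[p]$, which is precisely the condition $v \leq_b x$ from~\Cref{def:b-monotone-order}. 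Hence $\cubemon{v}{b}$ is exactly the set of states $x$ with $v \leq_b x$.

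\textbf{Step 2: apply $b$-monotonicity.} By~\Cref{lem:monox-minimality}, $\monox{\varphi}{b}$ is itself $b$-monotone. Therefore, assuming $v \models \monox{\varphi}{b}$, for every state $x$ with $v \leq_b x$ we have $x \models \monox{\varphi}{b}$ by~\Cref{def:b-monotonicity}. Combined with Step 1, every $x \models \cubemon{v}{b}$ satisfies $\monox{\varphi}{b}$, which is the desired implication.

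\textbf{Main obstacle.} There is no real difficulty here; the entire content of the lemma lies in recognizing that the cube $\cubemon{v}{b}$ describes the $\leq_b$-cone above $v$, after which the $b$-monotonicity of $\monox{\varphi}{b}$ does the rest. The only small care needed is in the bookkeeping around the three cases in Step 1 (namely $p \in b$, $\neg p \in b$, and $p \notin \cubdom{b}$), in order to match the literal-dropping convention in the definition of $\cubemon{v}{b}$ with the quantifier ``$p \in \cubdom{b} \land v[p] = b[p]$'' in the definition of $\leq_b$.
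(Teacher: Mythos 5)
Your proof is correct, and it takes a genuinely different (more semantic) route than the paper's. You identify $\cubemon{v}{b}$ as exactly the upward cone $\set{x \mid v \leq_b x}$ --- your case analysis matching the literal-dropping convention against the quantifier in \Cref{def:b-monotone-order} is accurate --- and then invoke the $b$-monotonicity of $\monox{\varphi}{b}$ (guaranteed by \Cref{lem:monox-minimality}) to push satisfaction from $v$ up the cone. The paper instead argues syntactically: from $v \models \monox{\varphi}{b}$ it deduces $\cubemon{v}{b} \implies \monox{\monox{\varphi}{b}}{b}$ (the remark preceding the lemma, applied with $\monox{\varphi}{b}$ in place of $\varphi$), and then closes with the idempotence $\monox{\monox{\varphi}{b}}{b} \equiv \monox{\varphi}{b}$, obtained from the DNF characterization in \Cref{lem:bshouty-mon-mindnf}. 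Your argument is self-contained at the level of the definitions and makes the geometric content explicit (the cube \emph{is} the cone), which is arguably the cleaner explanation of why the lemma holds; the paper's version is shorter on the page because it reuses two facts already in hand, at the cost of routing through a DNF-representation argument and an idempotence step that your proof does not need.
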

Another corollary is that the DNF size cannot increase from $\varphi$ to $\monox{\varphi}{b}$:
\begin{lemma}
\label{lem:bshouty-monox-dnfsize}
$\dnfsize{\monox{\varphi}{b}} \leq \dnfsize{\varphi}$.
\end{lemma}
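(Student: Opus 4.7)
The plan is to derive the bound directly from the structural characterization given in \Cref{lem:bshouty-mon-mindnf}. The idea is that constructing a DNF for $\monox{\varphi}{b}$ from a DNF for $\varphi$ only \emph{removes} literals within terms and does not add new terms, so the term count cannot increase.

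Concretely, I would start by fixing a minimum-size DNF representation $\varphi \equiv t_1 \lor \ldots \lor t_m$ with $m = \dnfsize{\varphi}$. By \Cref{lem:bshouty-mon-mindnf}, we then have
\[
	\monox{\varphi}{b} \;\equiv\; \monox{t_1}{b} \lor \ldots \lor \monox{t_m}{b},
\]
where each $\monox{t_i}{b}$ is itself a (possibly smaller) term obtained by discarding from $t_i$ those literals that agree with $b$ (with the convention that the empty conjunction denotes $\true$). Thus the right-hand side is already a DNF representation of $\monox{\varphi}{b}$ using at most $m$ terms.

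Since $\dnfsize{\monox{\varphi}{b}}$ is the minimum number of terms over all DNF representations, it is bounded above by the size of this particular representation, yielding $\dnfsize{\monox{\varphi}{b}} \leq m = \dnfsize{\varphi}$, as required. The only mild subtlety to note is the degenerate case in which some $\monox{t_i}{b}$ is the empty term $\true$, which only makes $\monox{\varphi}{b} \equiv \true$ and hence of DNF size at most $1 \leq m$, so the bound is preserved; no nontrivial obstacle arises and the argument is a one-line consequence of the preceding lemma.
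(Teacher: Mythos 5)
Your proof is correct and is exactly the argument the paper intends: the lemma is stated as an immediate corollary of \Cref{lem:bshouty-mon-mindnf}, obtained by applying that lemma to a minimum-size DNF of $\varphi$ and observing that the resulting representation of $\monox{\varphi}{b}$ has no more terms. The handling of the degenerate $\true$-term case is a fine (if unnecessary) extra precaution.
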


\subsection{Monotone Hull}
\label{sec:monotone-hull}
We now define the monotone hull, which is a conjunction of $b$-monotone overapproximations over all the $b$'s from a fixed set of states $B$.
\begin{definition}[Monotone Hull]
\label{def:monotone-hull}
The \emph{monotone hull} of a formula $\varphi$ w.r.t.\ a set of states $B$ is $\mhull{\varphi}{B} = \bigwedge_{b \in B}{\monox{\varphi}{b}}$. %
\end{definition}

The monotone hull can be simplified to use a succinct DNF representation of the basis $B$
instead of a conjunction over all states.
\begin{lemma}
\label{lem:mhull-dnf-base}
If $B \equiv b_1 \lor \ldots \lor b_m$ where $b_1,\ldots,b_m$ are cubes, then $\mhull{\varphi}{B} \equiv \monox{\varphi}{b_1} \land \ldots \land \monox{\varphi}{b_m}$.
\end{lemma}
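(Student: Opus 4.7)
The plan is to reduce the lemma to the special case of a single cube by proving the sublemma: for any cube $b$,
$$
\monox{\varphi}{b} \;\equiv\; \bigwedge_{v \models b} \monox{\varphi}{v},
$$
where $v$ ranges over \emph{full} states satisfying $b$. Given this sublemma, the lemma follows by unfolding \Cref{def:monotone-hull}: $\mhull{\varphi}{B} = \bigwedge_{v \in B} \monox{\varphi}{v}$, and partitioning $\set{v \mid v \models B}$ according to which $b_i$ each full state $v$ satisfies gives $\bigwedge_i \bigwedge_{v \models b_i} \monox{\varphi}{v} \equiv \bigwedge_i \monox{\varphi}{b_i}$.

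For the easy direction of the sublemma, $\monox{\varphi}{b} \implies \monox{\varphi}{v}$ whenever $v \models b$. Given $x \models \monox{\varphi}{b}$, take a witness $u \leq_b x$ with $u \models \varphi$. I will argue that the same $u$ witnesses $x \models \monox{\varphi}{v}$: at every position $p$ where $u[p] \neq x[p]$, the definition of $\leq_b$ forces $p \in \cubdom{b}$ and $u[p] = b[p]$; since $v$ is full and $v \models b$, we have $p \in \cubdom{v}$ and $v[p] = b[p] = u[p]$, so $u \leq_v x$.

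For the hard (and main) direction, assume $x \models \monox{\varphi}{v}$ for every full $v \models b$. I would construct a specific convenient $v_0$: let $v_0$ agree with $b$ on $\cubdom{b}$ and with $x$ elsewhere, so $v_0 \models b$. Take a witness $u \leq_{v_0} x$ with $u \models \varphi$. The key calculation is to show $u \leq_b x$: if $u[p] \neq x[p]$, then by $u \leq_{v_0} x$ we get $u[p] = v_0[p]$, but outside $\cubdom{b}$ we built $v_0[p] = x[p]$, contradicting $u[p] \neq x[p]$; hence $p \in \cubdom{b}$, and there $v_0[p] = b[p]$, so $u[p] = b[p]$, as required. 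This establishes $x \models \monox{\varphi}{b}$.

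The main obstacle is just setting up the right $v_0$ in the hard direction—once that choice is made, the arithmetic on $\leq_b$ is a direct unfolding of \Cref{def:b-monotone-order}. No use of the syntactic form (\Cref{lem:bshouty-mon-mindnf}) or of monotonicity (\Cref{lem:bshouty-monox-monotone}) is needed, though the result is consistent with the syntactic intuition that monotonization only ``drops'' literals, and dropping literals that agree with $b$ can be recovered as the intersection of the more aggressive drops induced by each $v \models b$.
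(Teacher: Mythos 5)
Your proof is correct. Note that the paper itself does not prove \Cref{lem:mhull-dnf-base}---it is imported from the cited prior work and is not among the lemmas re-proved in \Cref{sec:proofs-appendix}---so there is no in-paper argument to compare against; I am judging the proposal on its own. Your reduction to the single-cube sublemma $\monox{\varphi}{b} \equiv \bigwedge_{v \models b}\monox{\varphi}{v}$ is exactly the right bridge between \Cref{def:monotone-hull}, which conjoins over \emph{full states} of $B$, and the statement, which conjoins over the \emph{cubes} $b_i$; once that sublemma is in place, the rest is bookkeeping (and the fact that a state may satisfy several $b_i$, so the ``partition'' is really a cover, is immaterial for a conjunction). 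Both directions of the sublemma check out against \Cref{def:b-monotone-order}: in the easy direction a witness $u \leq_b x$ is also a witness for $\leq_v$ because a full $v \models b$ agrees with $b$ on all of $\cubdom{b}$; in the hard direction the choice of $v_0$ (equal to $b$ on $\cubdom{b}$, to $x$ elsewhere) is precisely what forces any $\leq_{v_0}$-witness to be a $\leq_b$-witness, since disagreements of $u$ with $x$ outside $\cubdom{b}$ are ruled out by construction. The argument is purely semantic, degenerate cases (e.g., the empty cube, where $v_0 = x$ and $\monox{\varphi}{\true} \equiv \varphi$) go through, and it is self-contained, which is arguably preferable to a syntactic detour through \Cref{lem:bshouty-mon-mindnf}.
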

Note that when $B = b$ is a single cube, $\mhull{\varphi}{b} = \monox{\varphi}{b}$.

Similarly to $\monox{\varphi}{b}$, the monotone hull is an overapproxmation:
\begin{lemma}
\label{lem:mhull-overapproximation}
$\varphi \implies \mhull{\varphi}{B}$.
\end{lemma}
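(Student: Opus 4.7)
The plan is to reduce the monotone hull case to the single-cube case and then invoke the definition (or minimality) of $\monox{\varphi}{b}$. Concretely, since $\mhull{\varphi}{B} = \bigwedge_{b \in B}\monox{\varphi}{b}$ by definition, it suffices to show that $\varphi \implies \monox{\varphi}{b}$ for every single $b \in B$, and then take the conjunction over all $b$.

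To establish $\varphi \implies \monox{\varphi}{b}$, I would unfold \Cref{def:monox}: fix an arbitrary state $x \models \varphi$ and exhibit a witness $v$ satisfying both $v \leq_b x$ and $v \models \varphi$. The obvious choice is $v \eqdef x$. Then $v \models \varphi$ holds by assumption, and $v \leq_b x$ holds vacuously from \Cref{def:b-monotone-order}, since $x$ and $v = x$ agree on every variable, so the implication ``$x[p] \neq v[p]$ implies $p \in \cubdom{b} \land v[p]=b[p]$'' has a false premise. Hence $x \models \monox{\varphi}{b}$, as required.

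Alternatively, one can appeal to \Cref{lem:monox-minimality}, which states that $\monox{\varphi}{b}$ is the least $b$-monotone formula implied by $\varphi$; in particular $\varphi \implies \monox{\varphi}{b}$ is part of that statement. Either route gives the claim for each fixed $b \in B$, and conjoining over $b \in B$ yields $\varphi \implies \bigwedge_{b \in B}\monox{\varphi}{b} = \mhull{\varphi}{B}$.

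There is no real obstacle here: the lemma is essentially a definitional consequence of overapproximation. The only point worth flagging is that $B$ may be infinite (in principle any set of states), but since the argument is pointwise in $b$ and propositional entailment commutes with arbitrary conjunctions, this causes no difficulty, and in the uses of interest $B$ is anyway finitely represented via \Cref{lem:mhull-dnf-base}.
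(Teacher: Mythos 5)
Your proof is correct and matches the paper's (essentially definitional) treatment: the paper leaves this lemma unproved as immediate, and its appendix proof of \Cref{lem:monox-minimality} notes exactly your single-cube observation that $\monox{\varphi}{b}$ overapproximates $\varphi$ by reflexivity of $\leq_b$. Conjoining over $b \in B$ as you do then gives $\varphi \implies \mhull{\varphi}{B}$.
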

\label{sec:monotone-basis}
In general, $\mhull{\varphi}{B}$ is not equivalent to $\varphi$. However, we can always choose $B$ so that $\mhull{\varphi}{B} \equiv \varphi$. A set $B$ that suffices for this is called a basis:
\begin{definition}[Monotone Basis]
\label{def:monotone-basis}
A \emph{monotone basis} is a set of states $B$.
It is a basis \emph{for a formula} $\varphi$ if
$
	\varphi \equiv \mhull{\varphi}{B}.
$
\end{definition}

Conversely, given a set $B$, we are interested in the set of formulas for which $B$ forms a basis:
\begin{definition}[Monotone Span]
\label{def:monotone-span}
$\mspan{B} = \set{\mhull{\varphi}{B} \mid \varphi \mbox{ over } \voc}$,
the set of formulas for which $B$ is a monotone basis.
\end{definition}
The following theorem provides a syntactic characterization of $\mspan{B}$, as the set of all formulas that can be written in CNF using clauses that exclude states from the basis.
The connection between CNF and the monotone basis is useful in~\Cref{sec:cdnf} where a monotone basis is constructed automatically, and in~\Cref{sec:efficient-eedpr} where it is used to define an abstract domain.
\begin{theorem}[\cite{DBLP:journals/iandc/Bshouty95}]
\label{lem:basis-conj-monox}
$\varphi \in \mspan{B}$ iff
there exist clauses $c_1,\ldots,c_s$ such that $\varphi \equiv c_1 \land \ldots \land c_s$ and for every $1 \leq i \leq s$ there exists $b_j \in B$ such that $b_j \not \models c_i$.
\end{theorem}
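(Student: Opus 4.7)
The plan is to prove both directions of the biconditional separately, using the minimality of $\monox{\varphi}{b}$ (\Cref{lem:monox-minimality}) for one direction and a syntactic CNF characterization of $b$-monotonicity for the other.

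For the $(\Leftarrow)$ direction, suppose $\varphi \equiv c_1 \land \ldots \land c_s$ with each $c_i$ not satisfied by some $b_{j_i} \in B$. First I would observe that such a clause $c_i$ is itself $b_{j_i}$-monotone. Indeed, because $b_{j_i}$ is a (full) state falsifying $c_i$, every literal $\ell \in c_i$ disagrees with $b_{j_i}$. Thus if $v \models c_i$, some literal $\ell \in c_i$ satisfies $v \models \ell$, so $v$ disagrees with $b_{j_i}$ on $\ell$'s variable; the condition $v \leq_{b_{j_i}} x$ forbids flipping a bit that already disagrees with $b_{j_i}$, so $x$ also disagrees with $b_{j_i}$ on that variable, giving $x \models \ell$ and $x \models c_i$. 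Combined with the trivial fact $\varphi \implies c_i$, the minimality in \Cref{lem:monox-minimality} yields $\monox{\varphi}{b_{j_i}} \implies c_i$. Conjoining over $i$ gives $\mhull{\varphi}{B} \implies \varphi$, and \Cref{lem:mhull-overapproximation} gives the reverse implication, so $\varphi \equiv \mhull{\varphi}{B} \in \mspan{B}$.

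For the $(\Rightarrow)$ direction, suppose $\varphi \in \mspan{B}$, so $\varphi \equiv \mhull{\varphi}{B} = \bigwedge_{b \in B} \monox{\varphi}{b}$ by \Cref{lem:mhull-dnf-base}. The heart of the argument is to show that every $b$-monotone formula $\psi$ (for a state $b$) admits a CNF representation in which every clause has all its literals disagreeing with $b$ (and is therefore not satisfied by $b$). To see this, consider the involutive renaming $\rho$ that substitutes $\neg p$ for each variable $p$ with $b[p] = \true$ and leaves the other variables unchanged. Under $\rho$, the definition of $b$-monotonicity translates exactly to classical monotonicity (for each variable $p$, flipping $v[p]$ from $0$ to $1$ in the renamed image corresponds to flipping $v[p]$ away from $b[p]$ in the original). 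Since classically monotone Boolean functions admit CNF representations in which every literal is positive, pulling back through $\rho^{-1}$ produces a CNF of $\psi$ in which every literal disagrees with $b$. Applying this to each $\monox{\varphi}{b}$ and conjoining the resulting CNFs produces a CNF for $\varphi$ in which every clause originates from some $\monox{\varphi}{b}$ and is therefore not satisfied by the corresponding $b \in B$.

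The main obstacle is the syntactic CNF characterization used in the $(\Rightarrow)$ direction: the paper states the DNF analogue as ``syntactic intuition'' but does not prove the CNF version. Deriving it via the renaming $\rho$ is the cleanest route, though one must take care that $\rho$ is a bijection on states that preserves the truth of formulas and maps the partial order $\leq_b$ to the coordinate-wise order on $\{0,1\}^n$; once this is verified, the reduction to the classical monotone-CNF fact is immediate.
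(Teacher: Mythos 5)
Your proof is correct. Note that the paper itself does not prove this theorem---it is imported from Bshouty's work with a citation, and the appendix of missing proofs does not cover it---so there is no in-paper argument to compare against; your two directions are essentially the standard ones from Bshouty. The $(\Leftarrow)$ direction (a clause falsified by the state $b$ has all literals disagreeing with $b$, hence is $b$-monotone, hence absorbs $\monox{\varphi}{b}$ by \Cref{lem:monox-minimality}) is exactly right, and the $(\Rightarrow)$ direction via the renaming $\rho$ reducing $b$-monotonicity to ordinary monotonicity is the same device the paper itself invokes informally in the ``syntactic intuition'' paragraph and in the proof of \Cref{lem:lambda-itp-monox-efficiency}. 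Two cosmetic points: the identity $\mhull{\varphi}{B}=\bigwedge_{b\in B}\monox{\varphi}{b}$ is \Cref{def:monotone-hull} rather than \Cref{lem:mhull-dnf-base}; and passing from ``$\varphi\in\mspan{B}$'' (i.e., $\varphi\equiv\mhull{\psi}{B}$ for some $\psi$) to ``$\varphi\equiv\mhull{\varphi}{B}$'' implicitly uses idempotence of the hull, which the paper's \Cref{def:monotone-span} also takes for granted, so you are entitled to it.
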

In particular, a basis $B$ for $\varphi$
can be constructed by writing a CNF representation of $\varphi$ and choosing for $B$ a state $b_j \not\models c_j$ for each clause $c_j$.

\para{Exact learning using the monotone theory}
\label{sec:monotone-theory-learning-background}
The monotone theory was first developed by Bshouty for the purpose of exact learning formulas that are not monotone. Essentially, the idea is to reconstruct the formula $\varphi$ by finding a monotone basis $B = \set{b_1,\ldots,b_t}$ for it, and constructing $\mhull{\varphi}{B}$ while using equivalence and membership queries. The CDNF algorithm~\cite{DBLP:journals/iandc/Bshouty95} achieves efficient learning in terms of the DNF \& CNF size of the target formula, and its code is shown in~\refappendix{sec:bshouty-cdnf}. Our CDNF invariant inference algorithm (\Cref{sec:cdnf}) is inspired by it, although it departs from it in significant ways (see~\Cref{rem:comparison-bshouty-cdnf}).  
\section{Super-Efficient Monotonization}
\label{sec:efficient-monox}
In this section we develop an efficient procedure to compute $\monox{\varphi}{b}$, which is a technical enabler of the results in following sections.
The algorithm, presented in~\Cref{alg:efficient-monox}, satisfies the following:
\begin{theorem}
\label{thm:efficient-monox-efficiency}
Let $\varphi$ be a formula and $\bkcube$ a cube.
The algorithm $\textsc{\superefficientalgname}(\varphi,\bkcube)$ computes $\monox{\varphi}{\bkcube}$ in $\bigO(n^2 \dnfsize{\monox{\varphi}{\bkcube}})$ SAT queries and time.
\end{theorem}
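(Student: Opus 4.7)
The algorithm maintains a DNF $H = c_1 \lor \ldots \lor c_k$ (initially $H := \false$) of $\bkcube$-monotone cubes $c_i \implies \monox{\varphi}{\bkcube}$, extending $H$ by one cube per outer iteration until $H \equiv \monox{\varphi}{\bkcube}$. Each iteration first issues a single SAT-encoded equivalence query $\varphi(v) \land v \leq_{\bkcube} x \land \neg H(x)$: by \Cref{def:monox} the conjuncts on $v$ encode $x \models \monox{\varphi}{\bkcube}$, so a satisfying assignment yields a state $x \in \monox{\varphi}{\bkcube} \setminus H$, while unsatisfiability certifies $H \equiv \monox{\varphi}{\bkcube}$ and the algorithm returns $H$. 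From the found $x$, it then forms the seed cube $\cubemon{x}{\bkcube}$---implied by $\monox{\varphi}{\bkcube}$ via \Cref{lem:moncube-model-of-monotonization}---and greedily drops literals that remain \emph{safe}, i.e., such that the remaining cube still implies $\monox{\varphi}{\bkcube}$, obtaining a prime implicant of $\monox{\varphi}{\bkcube}$ that is added to $H$.

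\textbf{Safety checks as SAT queries.} The key enabler is that for any $\bkcube$-monotone candidate cube $c'$, the $\bkcube$-monotonicity of $\monox{\varphi}{\bkcube}$ (\Cref{lem:bshouty-mon-mindnf}) reduces the check $c' \implies \monox{\varphi}{\bkcube}$ to verifying that each $\leq_{\bkcube}$-minimum of $c'$ lies in $\monox{\varphi}{\bkcube}$; each such membership test is itself a SAT query of the same shape as the equivalence query. Dropping the literal for a variable $p \in \cubdom{\bkcube}$ creates a unique new minimum (obtained by setting $p := \bkcube[p]$) and costs one SAT query, whereas dropping the literal for $p \notin \cubdom{\bkcube}$ splits the minima on $p$'s value and is handled incrementally; careful scheduling, exploiting the syntactic form of $\bkcube$-monotone cubes, keeps the total cost of the whole generalization phase at $O(n^2)$ SAT queries for the at-most-$n$ drop attempts of a single iteration.

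\textbf{Complexity and main obstacle.} The outer loop terminates within $\dnfsize{\monox{\varphi}{\bkcube}}$ iterations: each added cube is a prime implicant of $\monox{\varphi}{\bkcube}$, and since $\monox{\varphi}{\bkcube}$ is $\bkcube$-monotone, its (unique) minimal DNF consists of exactly its prime implicants, by the standard result for monotone Boolean functions applied under the bit-flipping symmetry induced by $\bkcube$. Since each iteration adds a previously-unseen prime implicant (lest the new counterexample $x$ be subsumed by $H$), at most $\dnfsize{\monox{\varphi}{\bkcube}}$ iterations occur, and combined with the per-iteration $O(n^2)$ SAT-query cost this yields the claimed $O(n^2 \cdot \dnfsize{\monox{\varphi}{\bkcube}})$ bound in both query and time complexity. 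The main technical obstacle is the safety check: the condition $c' \implies \monox{\varphi}{\bkcube}$ is a $\forall\exists$ statement that does not admit a direct single-SAT encoding, and naively enumerating $\leq_{\bkcube}$-minima of $c'$ can explode when multiple literals outside $\cubdom{\bkcube}$ have been dropped; the proof controls this by carefully scheduling the literal drops and exploiting the syntactic characterization of \Cref{lem:bshouty-mon-mindnf}.
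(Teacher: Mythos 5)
Your overall architecture matches the paper's: build $\monox{\varphi}{\bkcube}$ term by term, make each disjoined cube a non-trivial prime implicant, and bound the number of outer iterations by $\dnfsize{\monox{\varphi}{\bkcube}}$ via the uniqueness of the irredundant DNF of a $\bkcube$-monotone function (the paper's \Cref{lem:lambda-itp-gen-closest,lem:lambda-itp-monox-efficiency}). Your correctness and iteration-count arguments are sound; your outer query certifies $\monox{\varphi}{\bkcube}\subseteq H$ directly, where the paper instead checks $\SAT{\varphi\land\neg H}$ and derives the same inclusion from $\bkcube$-monotonicity of $H$ and minimality of $\monox{\varphi}{\bkcube}$ --- a cosmetic difference.

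The gap is exactly where you flag it, and flagging it is not the same as closing it. Your generalization phase shrinks $\cubemon{x}{\bkcube}$ by greedily dropping literals subject to the check $c'\implies\monox{\varphi}{\bkcube}$, reduced to membership tests on the $\leq_{\bkcube}$-minima of $c'$. But states comparable under $\leq_{\bkcube}$ must agree on every variable outside $\cubdom{\bkcube}$ (\Cref{def:b-monotone-order}), so once $k$ literals over such variables have been dropped, $c'$ has $2^{k}$ minima, and even an incremental schedule pays $1+2+\cdots+2^{k-1}=2^{k}-1$ membership tests; no scheduling argument is given that brings this to $O(n)$ per drop, and the underlying condition is genuinely $\Pi_2$, so it does not collapse to one SAT call for free. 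The paper sidesteps this entirely: \textsc{generalize} in \Cref{alg:efficient-monox} never manipulates the cube, it walks the witness \emph{state} $v$ one coordinate at a time toward $\bkcube$, flipping only positions where $v$ disagrees with $\bkcube$, so every intermediate candidate $\cubemon{v}{\bkcube}$ retains all literals outside $\cubdom{\bkcube}$ and has a single relevant minimum; and it decides whether the flipped state $x$ still satisfies $x\models\monox{\varphi}{\bkcube}$ with the single existential query $\SAT{\varphi\land\hamminginterval{x}{\project{x}{\bkcube}}}$, whose soundness and completeness are what \Cref{lem:sem-underapprox} and \Cref{lem:lambda-itp-gen-closest} establish. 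That one-query membership test for the monotonization (a disjointness query on a cube, which is also what lets \Cref{sec:cdnf} realize it as a BMC check) is the missing idea; without it or an equivalent device your $O(n^2)$ per-iteration budget, and hence the theorem's bound, is unsupported.
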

(Throughout this paper, $n$ denotes the number of propositional variables $\card{\voc}$.)

What distinguishes~\Cref{thm:efficient-monox-efficiency} is that the complexity bound depends on the DNF size of the \emph{output}, the monotonization $\monox{\varphi}{\bkcube}$, and \emph{not} on the size of the \emph{input} $\varphi$, in contrast to the algorithm by Bshouty~\cite{DBLP:journals/iandc/Bshouty95} (see~\Cref{remark:bshouty-monotonization}).

\begin{algorithm}[H]
\caption{Super-Efficient Monotonization}
\label{alg:efficient-monox}
\vspace{-0.5cm}
\begin{multicols}{2}
\begin{algorithmic}[1]
\begin{footnotesize}
\Procedure{\superefficientalgname}{$\varphi$, $\bkcube$}
	\State $H$ $\gets$ $\false$
	\While{$\SAT{\varphi \land \neg H}$} \label{ln:lambda-dual-itp:inv-monotonization-loop}
		\State \textbf{let} $\sigma_r \models \varphi \land \neg H$ $\label{ln:lambda-dual-itp:positive-example}$
		\State $v$ $\gets$ \Call{generalize}{$\varphi$, $\bkcube$, $\sigma_r$} $\label{ln:efficient-monox:call-gen}$
		\State $H$ $\gets$ $H \lor \cubemon{v}{\bkcube}$ $\label{ln:efficient-monox:disjoin-cube}$
	\EndWhile
	\State \Return $H$
\EndProcedure
\\
\\
\\
\\
\Procedure{generalize}{$\varphi$, $\bkcube$, $\sigma_r$}
	\State $v \gets \sigma_r$; walked $\gets$ $\true$
	\While{walked} $\label{ln:lambda-dual-itp:gen-post-loop}$
		\State walked $\gets$ $\false$
		\For{$j=1,\ldots,n$} \label{ln:lambda-dual-itp:check-closer-loop}
			\If{$\bkcube[p_j] = v[p_j]$}
				\State \textbf{continue}
			\EndIf
			\State $x \gets v[p_j \mapsto \bkcube[p_j]]$ $\label{ln:lambda-dual-itp:check-closer}$
			\If{$\SAT{\varphi \land \hamminginterval{x}{\project{x}{\bkcube}}}$} $\label{ln:efficient-monox-cube-disjointness}$
				\State $v \gets x$; walked $\gets$ $\true$
			\EndIf
		\EndFor
	\EndWhile
	\State \Return $v$
\EndProcedure
\end{footnotesize}
\end{algorithmic}
\end{multicols}
\vspace{-0.3cm}
\end{algorithm}
Starting from the candidate $H = \false$, the algorithm iteratively samples---through satisfying models of a SAT query---states that belong in $\varphi$ but not yet included in $H$. Every such state $\sigma_r$ generates a new term in $H$. Since $H$ is supposed to be $\bkcube$-monotone, the minimal term to include is $\moncube{\sigma_r}{\bkcube}$. To be efficient, the algorithm generalizes each example, trying to flip bits to find an example $v$ that also should be included in $\monox{\varphi}{\bkcube}$ and is closer in Hamming distance to $\bkcube$, which would result in a smaller term $\moncube{v}{\bkcube}$, thereby including more states in each iteration and converging faster. The criterion for $v$ is that a bit cannot be flipped if this would result in a state $x$ where the Hamming interval $\hamminginterval{x}{\project{x}{\bkcube}}$ does not intersect $\varphi$. Here, $\project{x}{\bkcube}$,
the projection~\cite[e.g.][]{wiedemann1987hamming} of $x$ onto the (possibly partial) cube $\bkcube$ is the state s.t.\
\begin{equation*}
\project{x}{\bkcube} = \begin{cases}
		\bkcube[p] & p \in \cubedom{\bkcube}
		\\
		x[p] & \text{otherwise}
	\end{cases},
 \end{equation*}
and the Hamming interval $\hamminginterval{\sigma_1}{\sigma_2}$ between two states $\sigma_1,\sigma_2$ is the smallest cube that contains both---the conjunction of the literals where these agree.
In sum, $\hamminginterval{x}{\project{x}{\bkcube}}$ is the conjunction of the literals where $x,\bkcube$ agree and the literals of $x$ over variables that are not present in $\bkcube$.
As we will show, $\hamminginterval{x}{\project{x}{\bkcube}}$ intersecting with $\varphi$ is an indicator for $x$ belonging to the monotonization of $\varphi$. %

The use of SAT queries in the algorithm does not necessarily assume that $\varphi$ is given explicitly, and indeed in~\Cref{sec:cdnf} we apply this algorithm with an implicit representation of $\varphi$ (using additional copies of the vocabulary).

The rest of this section proves~\Cref{thm:efficient-monox-efficiency}.
First, the result $v$ of generalization is so that when we disjoin the term $\moncube{v}{\bkcube}$, we do not ``overshoot'' to include states that do not belong to the true monotonization: %
\begin{lemma}
\label{lem:lambda-itp-gen-in-inv}
\label{lem:sem-underapprox}
If $\sigma_r \models \varphi$,
then $\textsc{generalize}(\varphi, \bkcube, \sigma_r)$ returns $v$ s.t.\ $\moncube{v}{\bkcube} \implies \monox{\varphi}{\bkcube}$. %
\end{lemma}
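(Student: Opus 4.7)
The plan is to establish a loop invariant for \textsc{generalize}: at every iteration, the current state $v$ satisfies $v \models \monox{\varphi}{\bkcube}$. Once this is proved, the conclusion follows immediately from~\Cref{lem:moncube-model-of-monotonization}, which gives $\cubemon{v}{\bkcube} \implies \monox{\varphi}{\bkcube}$ whenever $v \models \monox{\varphi}{\bkcube}$.

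For the base case, we start with $v = \sigma_r$ and $\sigma_r \models \varphi$. By~\Cref{lem:mhull-overapproximation} applied to the singleton basis (or directly from \Cref{def:monox}), $\varphi \implies \monox{\varphi}{\bkcube}$, so the invariant holds initially.

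For the inductive step, assume the invariant holds for $v$, and consider an iteration in which the algorithm sets $v \gets x$, where $x = v[p_j \mapsto \bkcube[p_j]]$ and the SAT check $\varphi \land \hamminginterval{x}{\project{x}{\bkcube}}$ succeeded, witnessed by some $w \models \varphi$ with $w \in \hamminginterval{x}{\project{x}{\bkcube}}$. The key step is to argue $w \leq_\bkcube x$, from which $w \models \varphi$ yields $x \models \monox{\varphi}{\bkcube}$ by~\Cref{def:monox}. Unpacking the interval: the literals fixed by $\hamminginterval{x}{\project{x}{\bkcube}}$ are precisely those of $x$ on variables outside $\cubdom{\bkcube}$ together with those of $x$ on variables in $\cubdom{\bkcube}$ where $x$ agrees with $\bkcube$. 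Thus $w$ agrees with $x$ on every variable outside $\cubdom{\bkcube}$ (the first condition of $\leq_\bkcube$); and wherever $w[p] \neq \bkcube[p]$ for $p \in \cubdom{\bkcube}$, the interval forbids $p$ from being a variable where $x$ agrees with $\bkcube$, so $x$ also disagrees with $\bkcube$ on $p$ (the second condition). This gives $w \leq_\bkcube x$, as needed.

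Combining base case and inductive step, the invariant $v \models \monox{\varphi}{\bkcube}$ holds at termination, and~\Cref{lem:moncube-model-of-monotonization} delivers $\cubemon{v}{\bkcube} \implies \monox{\varphi}{\bkcube}$. I do not expect a real obstacle here; the main care needed is the bookkeeping that translates ``$w$ lies in the Hamming interval $\hamminginterval{x}{\project{x}{\bkcube}}$'' into the two clauses of~\Cref{def:b-monotone-order}, i.e., correctly identifying which variables are pinned by the interval and which are left free, so that exactly the constraint $w \leq_\bkcube x$ is obtained.
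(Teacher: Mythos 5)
Your proposal is correct and follows essentially the same route as the paper: the paper maintains the invariant that $\hamminginterval{v}{\project{v}{\bkcube}} \cap \varphi \neq \emptyset$ and converts a witness $\tilde\sigma$ in that interval into $\tilde\sigma \leq_{\bkcube} v$ exactly as you do in your inductive step, then likewise concludes via \Cref{lem:moncube-model-of-monotonization}. Phrasing this as the loop invariant ``$v \models \monox{\varphi}{\bkcube}$'' rather than ``the interval intersects $\varphi$'' is only a cosmetic repackaging of the same argument.
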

\begin{proof}
$v$ is chosen s.t.\ $\hamminginterval{v}{\project{v}{\bkcube}} \cap \varphi \neq \emptyset$---note that this holds trivially in the initial choice of $v$ which is $\sigma_r \models \varphi$.
Let $\tilde{\sigma} \models \varphi$ s.t.\ $\tilde{\sigma} \models \hamminginterval{v}{\project{v}{\bkcube}}$.
The latter means that $\tilde{\sigma} \leq_{\bkcube} v$, because $\hamminginterval{v}{\project{v}{\bkcube}}$
consists of all the literals in $v$ except for those that disagree with $\bkcube$, so $\sigma$ agrees with $v$ whenever $v,\bkcube$ agree. %
In more detail, $\tilde{\sigma}$ agrees with $v$ on all $p \not\in \cubedom{\bkcube}$ (because $\project{v}{\bkcube}[p] = v[p]$ on such variables), and for $p \in \cubedom{p}$, if $\tilde{\sigma}[p] \neq v[p]$, if $v,\bkcube$ agree on $p$ then likewise $\tilde{\sigma}$ agrees with them (because then $v[p]=\project{v}{\bkcube}[p]$ and $p$ is retained in the conjunction that forms the Hamming interval), which satisfies~\Cref{def:b-monotonicity}.
As also $\tilde{\sigma} \models \varphi$, this implies that $v \models \monox{\varphi}{\bkcube}$ per~\Cref{def:monox}.
Hence $\moncube{v}{\bkcube} \implies \monox{\varphi}{\bkcube}$, by~\Cref{lem:moncube-model-of-monotonization}.
\qed
\end{proof}
This shows that it is reasonable to disjoin the term $\moncube{v}{\bkcube}$ to $H$ in the hope of eventually obtaining $H = \monox{\varphi}{\bkcube}$.
The following lemma argues that the algorithm continues to sample states until
it converges to the true monotonization.
\begin{lemma}
\label{lem:lambda-itp-monox}
$\textsc{\superefficientalgname}(\varphi, \bkcube)$ terminates and returns $\monox{\varphi}{\bkcube}$.
\end{lemma}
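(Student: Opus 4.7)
The plan is to prove termination and correctness of $\textsc{\superefficientalgname}$ in two stages. The pivotal technical fact is a progress guarantee for \textsc{generalize}: whenever it is invoked with $\sigma_r \models \varphi$, it returns a state $v$ whose cube $\cubemon{v}{\bkcube}$ is still satisfied by $\sigma_r$. To see this, I would establish the inner-loop invariant that $v$ differs from $\sigma_r$ only at variables $p_j \in \cubedom{\bkcube}$ that have already been flipped to $\bkcube[p_j]$; since the guard $\bkcube[p_j]=v[p_j]$ causes an immediate continue, a bit is never flipped twice. Consequently, every literal contributing to $\cubemon{v}{\bkcube}$---either a literal of $v$ over a variable outside $\cubedom{\bkcube}$ (never flipped) or a literal of $v$ that still disagrees with $\bkcube$ (never flipped)---is identical to the corresponding literal in $\sigma_r$, so $\sigma_r \models \cubemon{v}{\bkcube}$.

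Termination then follows quickly. For \textsc{generalize} itself, each successful flip permanently reduces by one the Hamming distance between $v$ and $\project{v}{\bkcube}$, bounding the total number of flips by $|\cubedom{\bkcube}|$, after which a complete pass with no flip exits. For $\textsc{\superefficientalgname}$, each outer iteration samples some $\sigma_r \models \varphi\land\neg H$ and then disjoins $\cubemon{v}{\bkcube}$ into $H$; by the progress lemma, the updated $H$ satisfies $\sigma_r$, so the model set of $H$ strictly grows, and the loop must terminate in at most $2^n$ iterations.

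For correctness I would prove $H \equiv \monox{\varphi}{\bkcube}$ by two implications. For soundness, $H \implies \monox{\varphi}{\bkcube}$ is a loop invariant: every disjunct added to $H$ is of the form $\cubemon{v}{\bkcube}$ with $v$ produced from some $\sigma_r \models \varphi$, and~\Cref{lem:sem-underapprox} gives $\cubemon{v}{\bkcube} \implies \monox{\varphi}{\bkcube}$. For completeness, upon exit the failure of $\SAT{\varphi \land \neg H}$ gives $\varphi \implies H$, and $H$ is a disjunction of $\bkcube$-monotone formulas $\cubemon{v}{\bkcube}=\monox{v}{\bkcube}$ and is therefore itself $\bkcube$-monotone (disjunction preserves~\Cref{def:b-monotonicity} by a one-line argument). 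The minimality property in~\Cref{lem:monox-minimality} then yields $\monox{\varphi}{\bkcube} \implies H$.

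The main obstacle is pinning down the never-reflip invariant and using it to conclude $\sigma_r \models \cubemon{v}{\bkcube}$: both $v$ and the cube $\cubemon{v}{\bkcube}$ change throughout the inner loop, and one must carefully separate variables inside and outside $\cubedom{\bkcube}$ and the two polarities w.r.t.\ $\bkcube$. Once this progress statement is secured, the rest of the proof is a routine combination of~\Cref{lem:sem-underapprox},~\Cref{lem:monox-minimality}, and closure of $\bkcube$-monotonicity under disjunction.
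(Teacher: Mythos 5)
Your proposal is correct and follows essentially the same route as the paper's proof: soundness of each disjunct via \Cref{lem:sem-underapprox}, completeness from the $\bkcube$-monotonicity of $H$ together with the minimality in \Cref{lem:monox-minimality}, and termination from the strict growth of $H$ because the returned $v$ satisfies $\sigma_r \models \cubemon{v}{\bkcube}$ (your never-reflip invariant is just an explicit unfolding of the paper's observation that $v \leq_{\bkcube} \sigma_r$). The only additions are your explicit termination bound for the inner loop and the direct closure-under-disjunction argument for $\bkcube$-monotonicity, both of which are fine.
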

\begin{proof}
First we show that when it terminates, the result is correct.
Always $H \subseteq \monox{\varphi}{\bkcube}$, because in each iteration we disjoin to $H$ a formula that satisfies the same property, by~\Cref{lem:sem-underapprox}.
The algorithm terminates when $\varphi \subseteq H$, and $H$ is always a $\bkcube$-monotone formula (%
by~\cref{lem:bshouty-mon-mindnf} the monotonization of $H$ is $H$ itself, which is $b$-monotone as in~\Cref{lem:monox-minimality}). From the minimality of $\monox{\varphi}{\bkcube}$ (%
\Cref{lem:monox-minimality}), necessarily also $\monox{\varphi}{\bkcube} \subseteq H$.

To show termination it suffices to show that $H$ strictly increases in each iteration (because the number of non-equivalent propositional formulas is finite).
To see this, note that \textsc{generalize}($\varphi$, $\bkcube$, $\sigma_r$) returns $v$ s.t.\ $v \leq_{\bkcube} \sigma_r$, since the procedure starts with $\sigma_r$ and only flips literals to agree with $\bkcube$. %
This implies that $\sigma_r \models \cubemon{v}{\bkcube}$, so after the iteration $\sigma_r \models H$ whereas previously $\sigma_r \not\models H$.
\qed
\end{proof}

The novelty of the algorithm is its efficiency, which we now turn to establish.
The crucial point is the generalization is able to produce, term by term, a minimal representation
of $\monox{\varphi}{\bkcube}$. To this end, we first show that $\cubemon{v}{\bkcube}$ that the algorithm computes in~\crefrange{ln:efficient-monox:call-gen}{ln:efficient-monox:disjoin-cube} is a \emph{prime implicant} of $\monox{\varphi}{\bkcube}$.
Recall that a term $t$ is an \emph{implicant} of a formula $\psi$ if $t \implies \psi$, and it is \emph{prime} if this no longer holds after dropping a literal, that is, for every $\ell \in t$ (as a set of literals), $\left(\land \left(t \setminus \set{\ell}\right)\right) \not\implies \psi$. It is \emph{non-trivial} if $t \not\equiv \false$ (not an empty set of literals).
\begin{lemma}
\label{lem:lambda-itp-gen-closest}
If $\sigma_r \models \varphi$, then $\textsc{generalize}(\varphi, \bkcube, \sigma_r)$ returns $v$ s.t.\ $\cubemon{v}{\bkcube}$ is a non-trivial prime implicant of $\monox{\varphi}{\bkcube}$.
\end{lemma}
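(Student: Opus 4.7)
The plan is to derive primality directly from the termination condition of \textsc{generalize}, using an equivalence between its SAT check and membership in the monotonization. Non-triviality and the implicant property are already in hand: $v \models \cubemon{v}{\bkcube}$ rules out $\cubemon{v}{\bkcube} \equiv \false$, and $\cubemon{v}{\bkcube} \implies \monox{\varphi}{\bkcube}$ is exactly Lemma~\ref{lem:lambda-itp-gen-in-inv}. So the real work is to show that for each literal $\ell \in \cubemon{v}{\bkcube}$, the cube $\cubemon{v}{\bkcube} \setminus \set{\ell}$ no longer implies $\monox{\varphi}{\bkcube}$.

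The central stepping stone I would establish first is the following: for every state $x$, one has $x \models \monox{\varphi}{\bkcube}$ iff $\varphi \land \hamminginterval{x}{\project{x}{\bkcube}}$ is satisfiable. This follows by unwinding the definitions: the states satisfying $\hamminginterval{x}{\project{x}{\bkcube}}$ are precisely those that agree with $x$ on every variable outside $\cubedom{\bkcube}$ and on every $p \in \cubedom{\bkcube}$ where $x[p] = \bkcube[p]$, while being free on the $p \in \cubedom{\bkcube}$ where $x[p] \neq \bkcube[p]$. This is exactly the flexibility afforded by the order $\leq_\bkcube$ of Definition~\ref{def:b-monotone-order}, so the interval coincides with $\set{u : u \leq_\bkcube x}$, and its satisfiability against $\varphi$ is the condition of Definition~\ref{def:monox}.

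With this equivalence, primality follows. Fix $\ell \in \cubemon{v}{\bkcube}$ on some variable $p_j$. By the syntactic characterization of $\cubemon{v}{\bkcube}$, $p_j$ is exactly a variable that the inner for-loop of \textsc{generalize} probes, and on its last scan the candidate $x = v[p_j \mapsto \bkcube[p_j]]$ was tested and the SAT query $\SAT{\varphi \land \hamminginterval{x}{\project{x}{\bkcube}}}$ returned UNSAT---otherwise $v$ would have been updated and \texttt{walked} set, contradicting termination. By the equivalence, $x \not\models \monox{\varphi}{\bkcube}$. On the other hand, $x$ differs from $v$ only at $p_j$, so it still satisfies every literal of $\cubemon{v}{\bkcube}$ other than $\ell$, i.e., $x \models \cubemon{v}{\bkcube} \setminus \set{\ell}$. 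This is a witness that $\cubemon{v}{\bkcube} \setminus \set{\ell}$ is not an implicant of $\monox{\varphi}{\bkcube}$, which is the required primality property for $\ell$.

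The part I expect to demand the most care is making the characterization of $\hamminginterval{x}{\project{x}{\bkcube}}$ as the $\leq_\bkcube$-interval both directions-tight, and matching up the set of variables $p_j$ that the loop probes against the domain of $\cubemon{v}{\bkcube}$---in particular that the \texttt{continue} condition skips exactly those $p_j$ whose literals are absent from $\cubemon{v}{\bkcube}$, so that every literal in $\cubemon{v}{\bkcube}$ genuinely has an associated failed SAT check $x$. Once this correspondence is pinned down, primality is a one-line consequence of the equivalence plus termination.
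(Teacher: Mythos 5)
Your proof is correct and follows essentially the same route as the paper's: both rest on the correspondence between $x \models \monox{\varphi}{\bkcube}$ and the satisfiability of $\varphi \land \hamminginterval{x}{\project{x}{\bkcube}}$, and both use the single-flip state $x = v[p_j \mapsto \bkcube[p_j]]$ whose failed SAT check at termination witnesses that the corresponding literal cannot be dropped (the paper phrases this as a contradiction from assumed non-primality; you phrase it directly, which is just the contrapositive). The caveat you flag at the end is the right one: when $\bkcube$ is a partial cube, $\cubemon{v}{\bkcube}$ also contains literals over variables outside $\cubedom{\bkcube}$ that the loop never probes, and the paper's own proof glosses over exactly this case (asserting that a literal present in $\cubemon{v}{\bkcube}$ must have $p \in \cubedom{\bkcube}$), so the argument is fully tight only when $\bkcube$ is a full cube, as in the application of \Cref{sec:cdnf}.
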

\begin{proof}
\Cref{lem:sem-underapprox} shows that it is an implicant. %
It is non-trivial because $\sigma_r$ is a model of it, as shown as part of the proof of~\Cref{lem:lambda-itp-monox}.
Suppose that $\moncube{v}{\bkcube}$ is not prime.
Then there a literal over some variable $p$ that can be dropped. It is present in $\moncube{v}{\bkcube}$, which means that $p \in \cubedom{\bkcube}$ and $v[p] \neq \bkcube[p]$. Then the cube obtained from dropping the literal can be written as $\moncube{x}{\bkcube}$ where $x = v[p \mapsto \neg v[p]]$. If this cube is an implicant of $\monox{\varphi}{\bkcube}$, then, because $x \models \moncube{x}{\bkcube}$, in particular $x \models \monox{\varphi}{\bkcube}$. By~\Cref{def:monox}, there is $\tilde{\sigma} \models \varphi$ such that $\tilde{\sigma} \leq_{\bkcube} x$.
But the latter implies that $\tilde{\sigma} \in \hamminginterval{x}{\project{x}{\bkcube}}$, because, by~\Cref{def:b-monotonicity}, for every $p \not\in \cubedom{\bkcube}$, $\tilde{\sigma}[p] = x[p] = \project{x}{\bkcube}[p]$ and for every $p \in \cubedom{\bkcube}$ where $x,\project{x}{\bkcube}$ agree also $\tilde{\sigma},\bkcube$ agree (because $\project{x}{\bkcube}[p] = \bkcube[p]$).
Thus $\hamminginterval{x}{\project{x}{\bkcube}} \cap \varphi \neq \emptyset$, in contradiction to the choice of $v$, according to the check in~\cref{ln:efficient-monox-cube-disjointness}.
\qed
\end{proof}
Thanks to the fact that $\monox{\varphi}{\bkcube}$ is $\bkcube$-monotone, through one of the basic properties of monotone functions that dates back to Quine~\cite{quine1954two}, a prime implicant reproduces a term from the %
(unique) minimal representation of $\monox{\varphi}{\bkcube}$. We use this to show that the monotonization is computed in few iterations:
\begin{lemma}
\label{lem:lambda-itp-monox-efficiency}
The number of iterations of the loop in~\cref{ln:lambda-dual-itp:inv-monotonization-loop} of $\textsc{\superefficientalgname}(\varphi, \bkcube)$ is at most $\dnfsize{\monox{\varphi}{\bkcube}}$.
\end{lemma}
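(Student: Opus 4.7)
The plan is to show that every iteration of the outer loop produces a distinct non-trivial prime implicant of $\monox{\varphi}{\bkcube}$, and then invoke the classical fact that a $\bkcube$-monotone formula has at most $\dnfsize{\monox{\varphi}{\bkcube}}$ non-trivial prime implicants.

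First, I would observe that for each iteration, \Cref{lem:lambda-itp-gen-closest} applied to $\sigma_r \models \varphi$ (as chosen in~\cref{ln:lambda-dual-itp:positive-example}) guarantees that the cube $\cubemon{v}{\bkcube}$ added in~\cref{ln:efficient-monox:disjoin-cube} is a non-trivial prime implicant of $\monox{\varphi}{\bkcube}$.

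Next, I would argue that the prime implicants produced in distinct iterations are pairwise distinct. The key point is that $\textsc{generalize}$ only flips bits of $\sigma_r$ towards $\bkcube$, so it returns $v$ with $v \leq_{\bkcube} \sigma_r$; consequently $\sigma_r$ agrees with $v$ on every variable retained in $\cubemon{v}{\bkcube}$, hence $\sigma_r \models \cubemon{v}{\bkcube}$. If the same cube $\cubemon{v}{\bkcube}$ had been added to $H$ in some earlier iteration, we would already have $\sigma_r \models H$, contradicting the fact that $\sigma_r$ was chosen to witness $\SAT{\varphi \land \neg H}$. Therefore the sequence of cubes disjoined into $H$ across iterations consists of pairwise distinct non-trivial prime implicants of $\monox{\varphi}{\bkcube}$.

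Finally, I would invoke the classical result (due to Quine~\cite{quine1954two}) that any monotone Boolean function has a unique irredundant DNF, consisting precisely of all its non-trivial prime implicants, and equal in size to the minimal DNF. Applied to $\monox{\varphi}{\bkcube}$ viewed as a $\bkcube$-monotone formula (after the obvious relabelling that flips variables at polarities specified by $\bkcube$, as noted in the syntactic intuition in~\Cref{sec:monox}), this shows that the number of distinct non-trivial prime implicants of $\monox{\varphi}{\bkcube}$ is at most $\dnfsize{\monox{\varphi}{\bkcube}}$. Combining with the previous paragraph, the loop terminates after at most $\dnfsize{\monox{\varphi}{\bkcube}}$ iterations. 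The main subtlety is the proper translation between general $\bkcube$-monotonicity (for a possibly partial cube $\bkcube$) and the classical notion of monotonicity used by Quine, but this is straightforward from \Cref{lem:bshouty-mon-mindnf}, which shows that the minimal DNF of $\monox{\varphi}{\bkcube}$ uses only literals disagreeing with $\bkcube$ (on $\cubedom{\bkcube}$) or literals on variables outside $\cubedom{\bkcube}$, so the reduction to the purely monotone case is immediate. \qed
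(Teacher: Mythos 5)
Your proposal is correct and follows essentially the same route as the paper's proof: each iteration yields a non-trivial prime implicant of $\monox{\varphi}{\bkcube}$ by \Cref{lem:lambda-itp-gen-closest}, Quine's theorem (lifted to $b$-monotone formulas by relabelling) identifies these with terms of the unique minimal DNF, and distinctness follows because $\sigma_r \models \cubemon{v}{\bkcube}$ while $\sigma_r \not\models H$ at the start of the iteration. Your write-up is merely a bit more explicit about the pairwise-distinctness step, which the paper delegates to the termination argument of \Cref{lem:lambda-itp-monox}.
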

\begin{proof}
\Cref{lem:lambda-itp-gen-closest} shows that in each iteration we disjoin a prime implicant.
It is a property of monotone functions that they have a unique DNF representations with irredundant, which consists of the disjunction of all non-trivial prime implicants~\cite{quine1954two}, and this extends to $b$-monotone functions (through a simple renaming of variables to make the function monotone). Thus the non-trivial prime implicant we disjoin is a term of the minimal DNF representation of $\monox{\varphi}{\bkcube}$.
Each additional $\sigma_r$ produces a new term, as shown in~\Cref{lem:lambda-itp-monox}. %
\qed
\end{proof}

We are now ready to prove that the algorithm overall is efficient.
\begin{proof}[Proof of~\Cref{thm:efficient-monox-efficiency}]
By~\Cref{lem:lambda-itp-monox-efficiency} the number of iterations of the loop in~\cref{ln:lambda-dual-itp:inv-monotonization-loop} is bounded by $\dnfsize{\monox{\varphi}{\bkcube}}$. Each iteration calls \textsc{generalize}, which performs at most $n$ iterations of the loop in~\cref{ln:lambda-dual-itp:gen-post-loop} because the same variable is never flipped twice. Each iteration of this loop performs $n$ SAT queries in~\cref{ln:efficient-monox-cube-disjointness}. Note that the cube $\hamminginterval{x}{\project{x}{\bkcube}}$ is straightforward to compute in linear time. %
\qed
\end{proof}

\begin{remark}
\label{remark:bshouty-monotonization}
Bshouty~\cite{DBLP:journals/iandc/Bshouty95} used an algorithm for computing $\monox{\varphi}{\bkcube}$ whose complexity is bounded by the DNF input size $\dnfsize{\varphi}$, whereas \Cref{alg:efficient-monox}'s complexity is bounded by the DNF \emph{output} size, $\dnfsize{\monox{\varphi}{\bkcube}}$, which is never worse (\Cref{lem:bshouty-monox-dnfsize}), and sometimes significantly smaller. When considered as learning algorithms, the improved complexity of~\Cref{alg:efficient-monox} comes at the expense of the need for richer queries: Bshouty's algorithm is similar to~\Cref{alg:efficient-monox} (using an equivalence query in~\cref{ln:lambda-dual-itp:inv-monotonization-loop} that produces a positive example---see \textsc{genMQ} of~\Cref{alg:cdnf-exact} in~\Cref{sec:bshouty-cdnf}), except that the condition in~\cref{ln:efficient-monox-cube-disjointness} is replaced by checking whether $x \models \varphi$. This is a membership query to $\varphi$, whereas our check amounts to a disjointness query~\cite{DBLP:journals/ml/Angluin87}.
\end{remark}  
\section{Efficient Inference of CDNF Invariants}
\label{sec:cdnf}

\begin{figure}
\noindent
\begin{minipage}[t]{0.51\textwidth}
\vspace{-0.5cm}
\begin{algorithm}[H]
\caption{Dual of model-based interpolation-based inference~\cite{DBLP:conf/hvc/ChocklerIM12,DBLP:conf/lpar/BjornerGKL13}}
\label{alg:dual-itp-termmin}
\begin{algorithmic}[1]
\begin{footnotesize}
\Procedure{\dualourinterpolationalgname}{$\Init$,$\tr$,$\Bad$,$s$}
	\If{$\bmc{\tr}{\Init}{s} \cap \Bad \neq \emptyset$}
		\State \textbf{unsafe} $\label{ln:dual-itp-unsafe}$
	\EndIf
	\State $\varphi \gets \neg \Bad$ $\label{ln:dual-itp-frame0}$
    \While{$\varphi$ not inductive}
				\State \textbf{let} $\sigma, \sigma' \models \varphi \land \tr \land \neg \varphi'$ $\label{ln:dual-itp-cex}$
		\If{$\bmc{\tr}{\Init}{s} \cap \set{\sigma} \neq \emptyset$} $\label{ln:dual-itp-restart}$
			\State \textbf{restart} with larger $s$
		\EndIf
		\State take minimal clause $c \subseteq \neg \sigma$ s.t.\ $\bmc{\tr}{\Init}{s} \implies c$  $\label{ln:dual-itp-bmc}$
		\State $\varphi \gets \varphi \land c$ $\label{ln:dual-itp-learn}$
	\EndWhile
	\State \Return $\varphi$
\EndProcedure
\end{footnotesize}
\end{algorithmic}
\end{algorithm}
 \end{minipage}%
\hspace{2pt}\begin{minipage}[t]{0.49\textwidth}
\vspace{-0.5cm}
\begin{algorithm}[H]
\caption{Interpolation-based inference of CDNF invariants\vspace{2pt}}
\label{alg:lambda-itp}
\begin{algorithmic}[1]
\begin{footnotesize}
\Procedure{CDNF-ITP}{$\Init$,$\tr$,$\Bad$,$s$}
	\If{$\bmc{\tr}{\Init}{s} \cap \Bad \neq \emptyset$}
		\State \textbf{unsafe} $\label{ln:lambda-itp-unsafe}$
	\EndIf
	\State $\varphi \gets \neg \Bad$ $\label{ln:lambda-dual-itp-frame0}$
	\While{$\varphi$ not inductive} $\label{ln:lambda-dual-itp:ind-check}$
		\State \textbf{let} $\sigma, \sigma' \models \varphi \land \tr \land \neg \varphi'$ $\label{ln:lambda-dual-itp:cti}$
		\If{$\bmc{\tr}{\Init}{s} \cap \set{\sigma} \neq \emptyset$} $\label{ln:lambda-dual-itp:check-no-overapprox}$
			\State \textbf{restart} with larger $s$ $\label{ln:lambda-dual-itp-termmin:fail}$
		\EndIf
		\State $H$ $\gets$ \Call{\superefficientalgname}{$\bmc{\tr}{\Init}{s}$, $\sigma$} $\label{ln:lambda-dual-itp:block}$
		\State $\varphi \gets \varphi \land H$ $\label{ln:lambda-dual-itp:add-conjunct}$
	\EndWhile
	\State \Return $I$
\EndProcedure
\end{footnotesize}
\end{algorithmic}
\end{algorithm}  \end{minipage}
\end{figure}

In this section we build on the algorithm of~\Cref{sec:efficient-monox} to devise a new model-based interpolation-based algorithm that can efficiently infer invariants that have poly-size CNF and DNF representations (dubbed ``CDNF invariants'').
We start with background on the theoretical condition that guarantees the success of the original model-based algorithm for simpler forms of invariants.

\subsection{Background: Interpolation With the Fence Condition}
\label{sec:background-itp}
The essence of interpolation-based invariant inference (ITP) is to generalize a proof of \emph{bounded} unreachability---i.e., bounded model checking~\cite{DBLP:conf/tacas/BiereCCZ99}---into a proof of \emph{un}bounded reachability, that is, a part of the inductive invariant.
The reader is more likely to be familiar with the structure of the original algorithm by McMillan~\cite{DBLP:conf/cav/McMillan03}, which uses bounded unreachability \emph{to} the \emph{bad} states, where in each iteration the algorithm adds states to the candidate by \emph{disjoining} a formula from which it is impossible to reach a bad state in $s$ steps. %
However, for our purposes here, it is more convenient to consider the \emph{dual} version,
which uses bounded unreachability \emph{from} the \emph{initial} states, where in each iteration the algorithm excludes states from the candidate by \emph{conjoining} a formula which does not exclude any state that the system can reach in $s$ steps from an initial state (i.e., the candidate $\varphi$ is updated by $\varphi \gets \varphi \land H$ where $H$ is a formula s.t.\ $\bmc{\tr}{\Init}{s} \subseteq H$).

The original interpolation-based algorithm by McMillan uses a procedure that relies on the internals of the SAT solver~\cite{DBLP:conf/cav/McMillan03}. Complexity bounds on interpolation-based algorithms analyze later approaches that exercise control on how interpolants are generated, and do this in a model-based fashion~\cite{DBLP:conf/hvc/ChocklerIM12,DBLP:conf/lpar/BjornerGKL13} inspired by IC3/PDR~\cite{ic3,pdr}.
The dual %
version
is presented in~\Cref{alg:dual-itp-termmin}.
After starting the candidate $\varphi$ as $\neg \Bad$, each iteration checks for a counterexample to induction (\cref{ln:dual-itp-cex}), whose pre-state $\sigma$ is excluded from $\varphi$ at the end of the iteration (\cref{ln:dual-itp-learn}). Many states are excluded in each iteration beyond the counterexample, by conjoining to the candidate a minimal \emph{clause} that excludes $\sigma$ but retains all the states that are reachable in the system in $s$ steps (\cref{ln:dual-itp-bmc}---this involves up to $n$ queries of $s$-BMC, each time dropping a literal and checking whether the clause is still valid). If the counterexample cannot be blocked, because it is in fact reachable in $s$ steps,
this is an indication that $s$ needs to be larger (\cref{ln:dual-itp-restart}) to find a proof or a safety violation. 
The algorithm detects that the transition system is unsafe in~\cref{ln:dual-itp-unsafe} when $s$ is enough to find an execution from $\Init$ to $\Bad$ with at most $s$ transitions. (Our analysis of the algorithms in the paper focuses on the safe case, the complexity of finding an invariant.)

A condition that guarantees that $s$ is large enough for~\Cref{alg:dual-itp-termmin} to successfully find an inductive invariant, called the \emph{fence condition}, was recently put forward~\cite{DBLP:journals/pacmpl/FeldmanSSW21}, involving the Hamming-geometric boundary of the invariant.
\begin{definition}[Boundary]
Let $I$ be a set of states. Then the (inner) boundary of $I$, denoted $\boundarypos{I}$, is the set of states $\sigma^+ \models I$ s.t.\ there is a state $\sigma^-$ that differs from $\sigma$ in exactly one variable, and $\sigma^+ \models I, \sigma^- \not\models I$.
\end{definition}
\begin{definition}[Fence Condition]
Let $I$ be an inductive invariant for a transition system $(\Init,\tr,\Bad)$ and $s \in \mathbb{N}$. Then $I$ is $s$-forwards fenced if $\boundarypos{I} \subseteq \bmc{\tr}{\Init}{s}$.
\end{definition}
\begin{example}
\label{ex:fence-twobits}
Let $I$ be the set of all states where at least two bits are $0$ \emph{and} at least two bits are $1$. %
Then $\boundarypos{I}$ is the set where exactly two bits are $0$ (and at least two bits are $1$) or exactly two bits are $1$ (and at least two bits are $0$). %
Note that $I \setminus \boundarypos{I}$ contains many (most) states---those where three or more bits are $0$ and three or more bits are $1$.
The fence condition requires only from the states in $\boundarypos{I}$ to be reachable in $s$ steps.
\TODO{make this into a transition system}
\end{example}
The fence condition guarantees that throughout the algorithm's execution, $\varphi$ contains the fenced invariant, giving rise to the following correctness property:
\begin{theorem}[\cite{DBLP:journals/pacmpl/FeldmanSSW21}]
If there exists an $s$-forwards fenced invariant for $(\Init,\tr,\Bad,s)$, then $\mbox{\dualourinterpolationalgname}(\Init,\tr,\Bad,s)$ successfully finds an invariant.
\end{theorem}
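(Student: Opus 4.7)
The plan is to establish the loop invariant $I \subseteq \varphi$, show that the restart branch at \cref{ln:dual-itp-restart} is never taken, and conclude termination with an inductive invariant from the strict decrease of $\varphi$.

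For the loop invariant, the base case is immediate: $\varphi_0 = \neg\Bad$ and $I \implies \neg\Bad$ since $I$ is an inductive invariant. For the inductive case, assume $I \subseteq \varphi$ and consider the CTI $(\sigma,\sigma')$ sampled at \cref{ln:dual-itp-cex}. From $\sigma' \not\models \varphi$ we get $\sigma' \not\models I$, and inductiveness of $I$ forces $\sigma \not\models I$. Combined with $\bmc{\tr}{\Init}{s} \subseteq I$ (all reachable states lie in $I$), we immediately conclude $\sigma \not\in \bmc{\tr}{\Init}{s}$, so the restart branch is skipped.

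The crux is to show that the minimal clause $c \subseteq \neg\sigma$ satisfying $\bmc{\tr}{\Init}{s} \implies c$ also satisfies $I \implies c$. I would prove this by contradiction via a Hamming path argument leveraging the fence condition. Suppose some $\sigma^* \in I$ with $\sigma^* \not\models c$. Since $c \subseteq \neg\sigma$, any state falsifying $c$ must agree with $\sigma$ on every variable appearing in $c$; in particular, $\sigma^*$ and $\sigma$ agree on $\cubdom{c}$. Consider any Hamming path from $\sigma^*$ to $\sigma$ that only flips variables outside $\cubdom{c}$—every state along the path still falsifies $c$. Since $\sigma^* \in I$ but $\sigma \not\in I$, there is a consecutive pair $\sigma^+, \sigma^-$ on the path with $\sigma^+ \in I$, $\sigma^- \not\in I$, differing in one variable. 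Then $\sigma^+ \in \boundarypos{I}$, so by the fence condition $\sigma^+ \in \bmc{\tr}{\Init}{s}$; yet $\sigma^+ \not\models c$, contradicting $\bmc{\tr}{\Init}{s} \implies c$. Hence $I \implies c$, preserving $I \subseteq \varphi \land c$. This step is the main obstacle; everything else is bookkeeping.

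Termination follows because each iteration strictly shrinks $\varphi$ (we conjoin $c$ with $\sigma \models \varphi \land \neg c$), and there are only finitely many propositional formulas. At termination, $\varphi$ is inductive by the loop guard; $\varphi \implies \neg\Bad$ is preserved from $\varphi_0 = \neg\Bad$; and $\Init \subseteq \bmc{\tr}{\Init}{s} \subseteq \varphi$ holds because the initial $\bmc{\tr}{\Init}{s} \subseteq \neg\Bad$ is guaranteed by the safety check at \cref{ln:dual-itp-unsafe}, and each conjoined clause is implied by $\bmc{\tr}{\Init}{s}$. Thus $\varphi$ is an inductive invariant, completing the proof.
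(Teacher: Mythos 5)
Your proof is correct and follows exactly the route the paper indicates for this result (which it cites from~\cite{DBLP:journals/pacmpl/FeldmanSSW21} and only sketches): maintain $I \subseteq \varphi$, observe that CTI pre-states fall outside $I$ and hence outside $\bmc{\tr}{\Init}{s}$, and show that a clause $c \subseteq \neg\sigma$ preserved by $\bmc{\tr}{\Init}{s}$ cannot exclude any state of $I$. Your Hamming-path boundary-crossing argument for that last step is precisely the content of \Cref{lem:mhull-boundary} specialized to clauses (a clause all of whose literals disagree with $\sigma$ is $\sigma$-monotone), so the two arguments coincide.
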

The idea is that the property that $I \implies \varphi$ is maintained because the fence condition ensures that it suffices to verify that the clause that \textsc{generalize} computes, starting from a state $\sigma \models \varphi$ that is outside of $I$, does not exclude a state from $\bmc{\tr}{\Init}{s}$ to guarantee that it also does not exclude a state from $I$.
(Note that the fence condition does not provide a way to know whether an \emph{arbitrary} state belongs to $I$.)

The fence condition ensures the algorithm's success, but not that it is efficient---the number of iterations until convergence may be large, even when there is a fenced inductive invariant that has a short representation in CNF. (Note that in~\Cref{alg:dual-itp-termmin}, $\varphi$ is always in CNF.)
This was ameliorated in~\cite{DBLP:journals/pacmpl/FeldmanSSW21} by the assumption that the invariant is \emph{monotone}:
\begin{theorem}[\cite{DBLP:journals/pacmpl/FeldmanSSW21}]
If $I$ is an $s$-forwards fenced inductive invariant for $(\Init,\tr,\Bad,s)$ and $I$ is \emph{monotone} (can be written in CNF/DNF with all variables un-negated), then $\mbox{\dualourinterpolationalgname}(\Init,\tr,\Bad,s)$ successfully finds an inductive invariant in $\bigO(\cnfsize{I})$ inductiveness checks, and $\bigO\left(n \cdot \cnfsize{I}\right)$ checks of $s$-BMC and time.\footnote{
	A similar result applies when the invariant is \emph{unate}, that is, can be written in CNF/DNF so that every variable is either always negated or always un-negated.
}
\end{theorem}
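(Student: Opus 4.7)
The plan is to establish correctness by applying the preceding fence-condition theorem, and then to bound the iteration count by $\cnfsize{I}$ by showing that each iteration learns a \emph{distinct prime implicate} of $I$. Since $I$ is $s$-forwards fenced, the previous theorem gives that \dualourinterpolationalgname\ maintains the invariant $I \implies \varphi$, so the algorithm never takes the restart branch on line~\ref{ln:dual-itp-restart} and terminates only when $\varphi$ is inductive; because $\varphi$ begins as $\neg\Bad$ and is only strengthened, the returned $\varphi$ is sandwiched between $I$ and $\neg\Bad$, hence a valid inductive invariant.

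For efficiency, I would analyze a single iteration. The CTI $\sigma$ produced on line~\ref{ln:dual-itp-cex} satisfies $\sigma \models \varphi$ and $\sigma' \not\models \varphi'$, so necessarily $\sigma \not\models I$ (else inductiveness of $I$ would give $\sigma' \models I \implies \varphi$, a contradiction). The minimal subclause $c \subseteq \neg\sigma$ with $\bmc{\tr}{\Init}{s} \implies c$ computed on line~\ref{ln:dual-itp-bmc} is in fact a prime implicate of $I$. For implication: the fence argument from the preceding theorem shows that if some $\rho \in I$ had $\rho \not\models c$, then together with $\sigma \notin I$ also in $\neg c$ (a subcube, since $\neg c$ is a cube), the Hamming interval between $\rho$ and $\sigma$ would lie entirely in $\neg c$ and any path inside it from $\rho$ to $\sigma$ would have to cross $\boundarypos{I} \subseteq \bmc{\tr}{\Init}{s}$, contradicting $\bmc{\tr}{\Init}{s} \implies c$. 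For minimality with respect to $I$: since $\bmc{\tr}{\Init}{s} \subseteq I$, any strict subclause $c' \subsetneq c$ failed by $\bmc{\tr}{\Init}{s}$ (which exists by minimality of $c$) is also failed by $I$. I would then invoke the standard fact that every prime implicate of a monotone formula is itself monotone and appears in its unique minimal monotone CNF, of which there are exactly $\cnfsize{I}$ clauses.

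Distinctness across iterations is immediate: once $c$ is conjoined to $\varphi$, every subsequent state satisfying $\varphi$ satisfies $c$, so no later CTI lies in $\neg c$ and no later minimization can reproduce $c$. Combining, the main loop executes at most $\cnfsize{I}$ times, contributing $O(\cnfsize{I})$ inductiveness checks; each iteration additionally uses $O(n)$ $s$-BMC calls for the literal-by-literal minimization, plus $O(n)$ time for bookkeeping, yielding $O(n \cdot \cnfsize{I})$ BMC checks and total time. The main obstacle I anticipate is making the fence-implies-$I$ step rigorous for the clauses actually produced by the greedy minimization---rather than just for the clauses of $I$'s minimal CNF---since without it the identification of each $c$ as a prime implicate of $I$, and hence both the distinctness and counting arguments, breaks down.
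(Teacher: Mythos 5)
This theorem is imported from~\cite{DBLP:journals/pacmpl/FeldmanSSW21} and not reproved in this paper, but your reconstruction is correct and is essentially the standard argument (the same two ingredients the paper itself reuses for its new results: the fence-condition boundary-crossing argument, and Quine-style uniqueness of the irredundant prime normal form of a monotone function, which makes the number of distinct prime implicates exactly $\cnfsize{I}$). The obstacle you flag at the end is not actually a gap: your Hamming-interval argument nowhere uses that $c$ belongs to a minimal CNF of $I$ --- it only uses that $c \subseteq \neg\sigma$ with $\sigma \not\models I$ and that every state of $\bmc{\tr}{\Init}{s}$ satisfies $c$, so it applies verbatim to the clauses produced by the greedy minimization, and primality then follows from $\bmc{\tr}{\Init}{s} \subseteq I$ exactly as you say.
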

However, when $I$ is \emph{not} monotone, it is possible for the algorithm to require an exponential number of iterations even though $\cnfsize{I}$ is small (and in fact, even though \emph{every} representation of $I$ without redundant clauses is small).

The challenge that we address in this section is to create an invariant inference algorithm that efficiently infers inductive invariants that are not monotone, while relying only on the fence condition.

\subsection{CDNF Inference With the Fence Condition}
We now present our new invariant inference algorithm (\Cref{alg:lambda-itp}), that is guaranteed to run in time polynomial in $n,\cnfsize{I},\dnfsize{I}$ of a fenced invariant $I$:
\begin{theorem}
\label{thm:lambda-dual-itp-efficient}
Let $I$ be a forwards $s$-fenced inductive invariant for $(\Init,\tr,\Bad)$.
Then $\textsc{CDNF-ITP}(\Init,\tr,\Bad,s)$ finds an inductive invariant in at most $\cnfsize{I} \cdot \dnfsize{I} \cdot n^2$ of $s$-BMC checks, $\cnfsize{I}$ inductiveness checks, and $\bigO(\cnfsize{I} \cdot \dnfsize{I} \cdot n^2)$ time.
\end{theorem}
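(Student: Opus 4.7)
The plan combines three ingredients: (i) adapting the fence-condition soundness argument of \dualourinterpolationalgname to handle a full $\sigma_k$-monotonization rather than a single clause per iteration; (ii) invoking \Cref{thm:efficient-monox-efficiency} with a tight DNF-size bound; and (iii) proving a combinatorial bound of $\cnfsize{I}$ on the number of iterations, by showing that distinct iterations witness distinct clauses of a minimal CNF of $I$.

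First, I would maintain the loop invariant $I \implies \varphi$. At iteration $k$, since $I$ is inductive and the post-state $\sigma'_k$ of the CTI fails $\varphi_{k-1} \supseteq I$, the pre-state satisfies $\sigma_k \not\models I$. Write $H_k = \monox{\bmc{\tr}{\Init}{s}}{\sigma_k}$. By \Cref{lem:basis-conj-monox}, $H_k$ decomposes as a conjunction of clauses $c$ with $\sigma_k \not\models c$ and $\bmc{\tr}{\Init}{s} \implies c$. I would extend the fence argument to each such $c$: any state $\tau \in I \cap \neg c$ would yield a Hamming path from $\sigma_k \in \neg c$ to $\tau$ inside the cube $\neg c$ that must cross $\boundarypos{I} \subseteq \bmc{\tr}{\Init}{s}$, contradicting $\bmc{\tr}{\Init}{s} \implies c$. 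Hence $I \implies H_k$. Since $H_k$ is $\sigma_k$-monotone and contains $I$, while monotonicity of $\monox{\cdot}{\sigma_k}$ (\Cref{lem:bshouty-monox-monotone}) applied to $\bmc{\tr}{\Init}{s} \subseteq I$ gives $H_k \implies \monox{I}{\sigma_k}$, minimality of $\monox{I}{\sigma_k}$ (\Cref{lem:monox-minimality}) yields the key equality $H_k \equiv \monox{I}{\sigma_k}$.

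The principal obstacle is bounding the iteration count by $\cnfsize{I}$. Fix a minimal CNF $I \equiv d_1 \wedge \cdots \wedge d_m$ with $m = \cnfsize{I}$, and for each iteration $k$ pick some $i_k$ with $\sigma_k \not\models d_{i_k}$ (which exists since $\sigma_k \not\models I$). I claim $k \mapsto i_k$ is injective. A further consequence of \Cref{lem:basis-conj-monox} together with minimality is that $\monox{I}{\sigma_j}$ is equivalent to the conjunction of \emph{all} clauses that are implied by $I$ and excluded by $\sigma_j$; in particular $\monox{I}{\sigma_j} \implies d_{i_j}$. For $j < k$, the inclusion $\varphi_{k-1} \implies H_j \equiv \monox{I}{\sigma_j}$ (using the equality from the previous paragraph) together with $\sigma_k \models \varphi_{k-1}$ gives $\sigma_k \models d_{i_j}$, so if $i_k = i_j$ then $\sigma_k \models d_{i_k}$, contradicting the choice of $i_k$. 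Thus the loop runs at most $\cnfsize{I}$ times.

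Finally, the restart branch in \cref{ln:lambda-dual-itp:check-no-overapprox} is never triggered, because $\sigma_k \not\models I$ together with $\bmc{\tr}{\Init}{s} \subseteq I$ gives $\sigma_k \not\in \bmc{\tr}{\Init}{s}$. Each iteration performs one inductiveness check and one call to \superefficientalgname, which by \Cref{thm:efficient-monox-efficiency} combined with the equality $H_k \equiv \monox{I}{\sigma_k}$ and \Cref{lem:bshouty-monox-dnfsize} costs $O(n^2 \dnfsize{\monox{I}{\sigma_k}}) = O(n^2 \dnfsize{I})$ $s$-BMC queries and time. Multiplying by the $\cnfsize{I}$ iteration bound yields the claimed totals, and on exit the final $\varphi$ satisfies all inductive-invariant conditions: $\varphi \implies \neg \Bad$ by construction of $\varphi_0$, inductiveness by the exit condition, and $\Init \implies \varphi$ preserved throughout since every $H_k$ is implied by $\bmc{\tr}{\Init}{s} \supseteq \Init$.
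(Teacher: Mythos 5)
Your proof is correct and follows essentially the same route as the paper: you establish $H_k \equiv \monox{I}{\sigma_k}$ (the paper's \Cref{lem:monox-bmc-inv}), bound the iterations by $\cnfsize{I}$ via clauses of a minimal CNF of $I$ (the paper's \Cref{lem:lambda-itp-overapprox}), and combine with \Cref{thm:efficient-monox-efficiency} and \Cref{lem:bshouty-monox-dnfsize}. The only cosmetic difference is that you re-derive the containment $I \implies \monox{\bmc{\tr}{\Init}{s}}{\sigma_k}$ by a clause-wise Hamming-path/boundary-crossing argument, whereas the paper invokes the imported \Cref{lem:mhull-boundary}, whose proof rests on the same idea.
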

\begin{example}
$I$ from~\Cref{ex:fence-twobits} has poly-size representations in both CNF and DNF. We write:
\begin{align*}
	\intertext{\emph{DNF}: there is a choice of four bits with two $0$'s and two $1$'s,}
		I \quad &\equiv \quad \bigvee_{1 \leq i_1 \neq i_2 \neq i_3 \neq i_4 \leq n}{\left(x_{i_1} = 0 \land x_{i_2} = 0 \land x_{i_3} = 1 \land x_{i_4} = 1\right)}.
	\\
	\intertext{\emph{CNF}: it is impossible that $n-1$ bits or more are $1$, likewise for $0$,}
		I \quad &\equiv \quad \left(\bigwedge_{i=1}^{n}{\bigvee_{j \neq i} x_j=0}\right) \land \left(\bigwedge_{i=1}^{n}{\bigvee_{j \neq i} x_j=1}\right).
\end{align*}
The CNF formula has $2n$ clauses, and the DNF has $\binom{n}{4} = \Theta(n^4)$ terms.
\Cref{thm:lambda-dual-itp-efficient} shows that such $I$ satisfying the fence condition can be inferred in a number of queries and time that is polynomial in $n$.
Note that these formulas fall outside the classes that previous results can handle efficiently (see~\Cref{sec:related}) as they are not monotone nor almost-monotone (the number of terms/clauses with negated variables is not constant). %
\end{example}

As noted by Bshouty~\cite{DBLP:journals/iandc/Bshouty95}, the class of formulas with short DNF and CNF includes the formulas that can be expressed by a small \emph{decision tree}:
a binary tree in which every internal node is labeled by a variable and a leaf by $\true$/$\false$, and $\sigma$ satisfies the formula if the path defined by starting from the root, turning left when the $\sigma$ assigns $\false$ to the variable labeling the node and right otherwise, reaches a leaf $\true$. The size of a decision tree is the number of leaves in the tree.
We conclude that (see the proof in~\refappendix{sec:proofs-appendix}):
\begin{corollary}
\label{cor:decision-tree-efficient}
Let $I$ be a forwards $s$-fenced inductive invariant for $(\Init,\tr,\Bad)$, that can be expressed as a decision tree of size $m$.
Then $\textsc{CDNF-ITP}(\Init,\tr,\Bad,s)$ finds an inductive invariant in at most $m^2 \cdot n^2$ of $s$-BMC checks, $m$ inductiveness checks, and $\bigO(m^2 \cdot n^2)$ time.
\end{corollary}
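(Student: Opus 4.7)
The plan is to reduce the corollary to \Cref{thm:lambda-dual-itp-efficient} by showing that a decision tree of size $m$ yields both a DNF representation with at most $m$ terms and a CNF representation with at most $m$ clauses. Once we establish $\dnfsize{I}\leq m$ and $\cnfsize{I}\leq m$, plugging these bounds into \Cref{thm:lambda-dual-itp-efficient} gives $\cnfsize{I}\cdot\dnfsize{I}\cdot n^{2}\leq m^{2}n^{2}$ BMC checks, $\cnfsize{I}\leq m$ inductiveness checks, and $\bigO(m^{2}n^{2})$ time, which is exactly what the corollary asserts.

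For the DNF bound, I would observe that every $\true$-leaf $\lambda$ of the decision tree is reached by a unique root-to-leaf path, which labels each internal node with a variable and specifies a value ($\false$ for the left branch, $\true$ for the right). Collecting the corresponding literals produces a term $t_{\lambda}$ whose models are exactly the valuations routed by the tree to $\lambda$. Since the paths to distinct leaves route disjoint sets of valuations, $I\equiv\bigvee_{\lambda\text{ a $\true$-leaf}} t_{\lambda}$. The number of $\true$-leaves is at most $m$, so $\dnfsize{I}\leq m$. Dually, for each $\false$-leaf $\lambda$, the negation of the path term is a clause $c_{\lambda}$ excluding exactly the valuations routed to $\lambda$, and $I\equiv\bigwedge_{\lambda\text{ a $\false$-leaf}} c_{\lambda}$, giving $\cnfsize{I}\leq m$.

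The main (and essentially only) step that requires care is the decision-tree-to-CNF/DNF translation above; it is a standard construction, and once it is in place the corollary follows immediately by substituting into the bounds of \Cref{thm:lambda-dual-itp-efficient}. No new algorithmic ingredient is needed, since \textsc{CDNF-ITP} does not depend on the decision-tree representation directly --- its complexity is already parameterized by $\cnfsize{I}$ and $\dnfsize{I}$.
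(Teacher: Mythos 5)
Your proposal is correct and matches the paper's proof exactly: both derive $\dnfsize{I}\leq m$ from the $\true$-leaf paths and $\cnfsize{I}\leq m$ from the negated $\false$-leaf paths, and then instantiate \Cref{thm:lambda-dual-itp-efficient}. No differences worth noting.
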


The algorithm \textsc{CDNF-ITP} which attains~\Cref{thm:lambda-dual-itp-efficient} is presented in~\Cref{alg:lambda-itp}. Its overall structure is similar to~\Cref{alg:dual-itp-termmin},
except the formula used to block a counterexample is the monotonization of the $s$-reachable states.
Specifically, starting from the candidate $\varphi = \true$ (\cref{ln:lambda-dual-itp-frame0}, the algorithm iteratively samples counterexamples to induction (\cref{ln:lambda-dual-itp:cti}) and blocks the pre-state $\sigma$ from $\varphi$ by conjoining $\monox{\bmc{\tr}{\Init}{s}}{\sigma}$, computed by invoking~\Cref{alg:efficient-monox}.
The SAT queries of the form $\SAT{\varphi \land \theta}$ that~\Cref{alg:efficient-monox} performs (see~\Cref{sec:efficient-monox}) have $\varphi = \bmc{\tr}{\Init}{s}$, and they amount to the BMC checks of whether $\bmc{\tr}{\Init}{s} \cap \theta \overset{?}{=} \emptyset$.

It is important for the efficiency result that \Cref{alg:lambda-itp} uses~\Cref{alg:efficient-monox} as a subprocedure. Using Bshouty's procedure (see~\Cref{remark:bshouty-monotonization}) would yield a bound of $n \cdot \dnfsize{\bmc{\tr}{\Init}{s}}$ checks of $s$-BMC, and it is likely that $\bmc{\tr}{\Init}{s}$ is complex to capture in a formula when $s$ is significant (as common for sets defined by exact reachability, such as the set of the reachable states).

We now proceed to prove the correctness and efficiency of the algorithm (\Cref{thm:lambda-dual-itp-efficient}).
Throughout, assume that $I$ is an inductive invariant for $(\Init,\tr,\Bad)$. $I$ will be $s$-forwards fenced; we state this explicitly in the premise of lemmas where this assumption is used.
The idea behind the correctness and efficiency of~\Cref{alg:lambda-itp} is that $\monox{I}{\sigma}$ is a stronger
formula than the clauses that are produced in~\Cref{alg:dual-itp-termmin}, causing the candidate to converge down to the invariant in fewer iterations, while never excluding states that belong to $I$ (because $I \subseteq \monox{I}{\sigma}$, as used in~\Cref{lem:lambda-itp-candidate}). As we will show (in~\Cref{lem:lambda-itp-overapprox}), this strategy results in a number of iterations that is bounded by the CNF size of $I$ (without further assumptions on the syntactic structure of $I$).
The trick, however, is to show (in~\Cref{lem:monox-bmc-inv}) that what the algorithm computes in~\cref{ln:lambda-dual-itp:block} is indeed $\monox{I}{\sigma}$, even though $I$ is unknown.
The crucial observation is that under the fence condition, the monotonization of the $s$-reachable states matches the monotonization of the invariant (even though these are different sets!). Note that this holds for any invariant that satisfies the fence condition.
To prove this we need to recall a fact about the monotonization of the boundary of a set:
\begin{lemma}[\cite{DBLP:journals/pacmpl/FeldmanSSW22}]
\label{lem:mhull-boundary}
Let $I,S$ be sets of states s.t.\ $\boundarypos{I} \subseteq S$ and $\sigma$ a state s.t.\ $\sigma \not\models I$.
Then $I \subseteq \monox{S}{\sigma}$.
\end{lemma}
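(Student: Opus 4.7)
The plan is to use a Hamming-path argument: since any state $x \in I$ has $\sigma \not\in I$, walking from $\sigma$ to $x$ one bit-flip at a time must cross the inner boundary of $I$, and the crossing point will serve as the witness $v \leq_\sigma x$ with $v \in S$ needed by \Cref{def:monox}.

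Concretely, I would first fix an arbitrary $x \models I$ and need to exhibit $v$ with $v \leq_\sigma x$ and $v \models S$. I would enumerate the set of variables $D = \{p \in \voc \mid x[p] \neq \sigma[p]\}$ on which $x$ and $\sigma$ disagree as $p_1, \ldots, p_k$ in some order, and define the sequence of states $\sigma = x_0, x_1, \ldots, x_k = x$ where $x_i$ is obtained from $x_{i-1}$ by flipping $p_i$ from the value in $\sigma$ to the value in $x$. Every intermediate $x_i$ agrees with $x$ on $\voc \setminus D$ and disagrees with $\sigma$ only on a subset of $D$, so by \Cref{def:b-monotone-order} we have $x_i \leq_\sigma x$ for each $i$.

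Next, since $x_0 = \sigma \not\models I$ and $x_k = x \models I$, there is a smallest index $i$ such that $x_i \models I$; then $x_{i-1} \not\models I$ and $x_{i-1}$ differs from $x_i$ in exactly one variable ($p_i$). By the definition of $\boundarypos{I}$ this gives $x_i \in \boundarypos{I}$, and by assumption $\boundarypos{I} \subseteq S$, so $x_i \models S$. Setting $v := x_i$, we have $v \leq_\sigma x$ and $v \models S$, so $x \models \monox{S}{\sigma}$ by \Cref{def:monox}.

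There is essentially no obstacle here; the only point requiring slight care is to check that the intermediate states of the bit-flip walk really do satisfy $\leq_\sigma$, which is immediate from \Cref{def:b-monotone-order} once one observes that flipping bits only within $D$ (away from $\sigma$'s values) keeps the disagreement set with $\sigma$ as a subset of the disagreement set of $x$ with $\sigma$. Since $x$ was arbitrary in $I$, we conclude $I \subseteq \monox{S}{\sigma}$.
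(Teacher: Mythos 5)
Your proof is correct and follows exactly the argument the paper sketches for \Cref{lem:mhull-boundary} (which it otherwise defers to the cited prior work): walk along a shortest Hamming path from $\sigma$ to $x$, take the first state where the path enters $I$ as the boundary witness $v \in \boundarypos{I} \subseteq S$ with $v \leq_\sigma x$. The details you supply (that the intermediate states satisfy $x_i \leq_\sigma x$, and that the first entry point is in $\boundarypos{I}$) are the right ones and check out against \Cref{def:b-monotone-order,def:monox}.
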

The idea is that for every $x \in I$, there is a state on the boundary $v \in \boundarypos{I}$ s.t.\ $v \leq_\sigma x$ (where a shortest path between $x,b$ crosses $I$), and because also $v \in S$ we would have that $x \in \monox{S}{\sigma}$.

We proceed to relate the monotonizations of $\bmc{\tr}{\Init}{s},I$:
\begin{lemma}
\label{lem:monox-bmc-inv}
If $I$ is forwards $s$-fenced for $(\Init,\tr,\Bad)$, and $\sigma \not\models I$,
then $\monox{\bmc{\tr}{\Init}{s}}{\sigma} = \monox{I}{\sigma}$.
\end{lemma}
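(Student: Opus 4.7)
The plan is to prove the two inclusions $\monox{\bmc{\tr}{\Init}{s}}{\sigma} \implies \monox{I}{\sigma}$ and $\monox{I}{\sigma} \implies \monox{\bmc{\tr}{\Init}{s}}{\sigma}$ separately, each by a short appeal to previously established lemmas. Let me write $B = \bmc{\tr}{\Init}{s}$ for brevity.

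For the forward direction, I would observe that $B \subseteq I$: since $I$ is an inductive invariant, it contains $\Init$ and is closed under $\tr$, so every state reachable in at most $s$ steps from $\Init$ satisfies $I$. Then~\Cref{lem:bshouty-monox-monotone} (monotonicity of the monotonization operator in its first argument) immediately yields $\monox{B}{\sigma} \implies \monox{I}{\sigma}$.

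For the reverse direction, which is the more interesting one, I would apply~\Cref{lem:mhull-boundary}: the fence condition gives $\boundarypos{I} \subseteq B$, and we are assuming $\sigma \not\models I$, so the hypotheses of that lemma are met with $S := B$, and we conclude $I \subseteq \monox{B}{\sigma}$. Now $\monox{B}{\sigma}$ is a $\sigma$-monotone formula (it is a monotonization, and by~\Cref{lem:bshouty-mon-mindnf} its monotonization is itself), and it overapproximates $I$. By the minimality property of $\monox{I}{\sigma}$ (\Cref{lem:monox-minimality}), any $\sigma$-monotone overapproximation of $I$ is implied by $\monox{I}{\sigma}$, which gives $\monox{I}{\sigma} \implies \monox{B}{\sigma}$.

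The main conceptual point worth emphasizing—rather than any technical obstacle—is that although the sets $B$ and $I$ may differ enormously (typically $B$ is a strict and possibly much smaller subset of $I$), viewed from a vantage point $\sigma \not\models I$ through the $\sigma$-monotone lens their overapproximations coincide. This is exactly because the fence condition places the entire inner boundary $\boundarypos{I}$—the only part of $I$ that ``shades'' further states along shortest Hamming paths away from $\sigma$—inside $B$, so $B$ already suffices to generate everything $I$ would contribute to the monotonization. This is what makes it possible for~\Cref{alg:lambda-itp} to compute $\monox{I}{\sigma}$ without any access to $I$ itself, using only BMC queries against $B$.
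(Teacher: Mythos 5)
Your proof is correct and follows essentially the same route as the paper's: $\bmc{\tr}{\Init}{s} \subseteq I$ plus \Cref{lem:bshouty-monox-monotone} for one inclusion, and \Cref{lem:mhull-boundary} under the fence condition followed by the minimality property of \Cref{lem:monox-minimality} for the other. The only addition is your explicit justification that $\monox{\bmc{\tr}{\Init}{s}}{\sigma}$ is itself $\sigma$-monotone, which the paper leaves implicit.
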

\begin{proof}
Since $I$ is an inductive invariant, $\bmc{\tr}{\Init}{s} \subseteq I$, so $\monox{\bmc{\tr}{\Init}{s}}{\sigma} \subseteq \monox{I}{s}$ from~\Cref{lem:bshouty-monox-monotone}.
For the other direction we use~\Cref{lem:mhull-boundary}: by the fence condition, $\boundarypos{I} \subseteq \bmc{\tr}{\Init}{s}$ and hence, as $\sigma \not\models I$, we obtain $I \subseteq \monox{\bmc{\tr}{\Init}{s}}{\sigma}$. By~\Cref{lem:monox-minimality}, this implies that $\monox{I}{\sigma} \subseteq \monox{\bmc{\tr}{\Init}{s}}{\sigma}$.
\qed
\end{proof}
We use this to characterize the candidate invariant the algorithm constructs:
\begin{lemma}
\label{lem:lambda-itp-candidate}
If $I$ is forwards $s$-fenced for $(\Init,\tr,\Bad)$,
then in each step of $\mbox{\textsc{CDNF-ITP}}(\Init,\tr,\Bad,k)$, $\varphi = \mhull{I}{\mathcal{C}_i} \land \neg \Bad$, where $\mathcal{C}_i$ is the set of counterexamples $\sigma$ the algorithm has observed so far.
In particular, $I \subseteq \varphi$.
\end{lemma}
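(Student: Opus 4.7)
The plan is to prove the equality by induction on the iteration count $i$, with the key bridge being that the state $\sigma$ blocked in each iteration necessarily lies outside $I$, which lets us invoke \Cref{lem:monox-bmc-inv} to identify the formula $H$ computed in~\cref{ln:lambda-dual-itp:block} with $\monox{I}{\sigma}$.

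For the base case $i=0$ we have $\mathcal{C}_0 = \emptyset$, so $\mhull{I}{\emptyset}$ is the empty conjunction $\true$, and the initial assignment $\varphi \gets \neg \Bad$ in~\cref{ln:lambda-dual-itp-frame0} matches $\mhull{I}{\emptyset} \land \neg \Bad$. For the inductive step, assume $\varphi_i = \mhull{I}{\mathcal{C}_i} \land \neg \Bad$, and let $(\sigma,\sigma')$ be the counterexample to induction found in~\cref{ln:lambda-dual-itp:cti} at iteration $i+1$. I first establish that $\sigma \not\models I$: the inductive hypothesis together with \Cref{lem:mhull-overapproximation} gives $I \subseteq \mhull{I}{\mathcal{C}_i}$, and since $I \implies \neg \Bad$ as $I$ is an inductive invariant, also $I \subseteq \varphi_i$; if $\sigma \models I$ then by inductiveness of $I$ we would have $\sigma' \models I \subseteq \varphi_i$, contradicting $\sigma' \models \neg \varphi_i'$.

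With $\sigma \not\models I$ and the fence condition in hand, \Cref{lem:monox-bmc-inv} gives $\monox{\bmc{\tr}{\Init}{s}}{\sigma} = \monox{I}{\sigma}$. By \Cref{thm:efficient-monox-efficiency}, the call $\textsc{\superefficientalgname}(\bmc{\tr}{\Init}{s},\sigma)$ in~\cref{ln:lambda-dual-itp:block} returns exactly $\monox{\bmc{\tr}{\Init}{s}}{\sigma}$, so $H = \monox{I}{\sigma}$. Consequently,
\[
	\varphi_{i+1} \;=\; \varphi_i \land H \;=\; \mhull{I}{\mathcal{C}_i} \land \neg \Bad \land \monox{I}{\sigma} \;=\; \mhull{I}{\mathcal{C}_i \cup \set{\sigma}} \land \neg \Bad \;=\; \mhull{I}{\mathcal{C}_{i+1}} \land \neg \Bad,
\]
where the middle equality uses the definition of $\mhull{\cdot}{\cdot}$ as a conjunction of monotonizations over states in the basis (\Cref{def:monotone-hull}). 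This closes the induction. The ``in particular'' clause then follows immediately: $I \implies \mhull{I}{\mathcal{C}_i}$ by \Cref{lem:mhull-overapproximation}, and $I \implies \neg \Bad$ by inductiveness, hence $I \subseteq \varphi$.

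The main obstacle is the step $\sigma \not\models I$, since the applicability of \Cref{lem:monox-bmc-inv} depends on it; thankfully this is exactly where the inductive hypothesis $I \subseteq \varphi_i$ feeds back in, creating a clean induction. Everything else reduces to bookkeeping about the monotone hull and to invoking the guarantees of the subroutine \textsc{\superefficientalgname} and the fence condition.
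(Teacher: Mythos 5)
Your proof is correct and follows essentially the same route as the paper's: induction on iterations, using the inductive hypothesis to get $I \subseteq \varphi_i$, deducing $\sigma \not\models I$ from inductiveness of $I$, and then applying \Cref{lem:monox-bmc-inv} together with the correctness of \textsc{\superefficientalgname} to identify $H$ with $\monox{I}{\sigma}$. Your write-up of the contradiction step ($\sigma' \models I \subseteq \varphi_i$ versus $\sigma' \not\models \varphi_i'$) is in fact slightly cleaner than the paper's phrasing.
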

\begin{proof}
First, $I \subseteq \varphi$ holds from the rest of the lemma because $I \subseteq \neg \Bad$ (it is an inductive invariant), and $I \subseteq \mhull{I}{\mathcal{C}_i}$ %
by~\Cref{lem:mhull-overapproximation}.
The proof of $\varphi = \mhull{I}{\mathcal{C}_i} \land \neg \Bad$ is by induction on iterations of the loop in~\cref{ln:lambda-dual-itp:ind-check}. Initially, $\mathcal{C} = \emptyset$ and indeed $\varphi = \neg \Bad$.
In each iteration, $I \subseteq \varphi$ using the argument above and the induction hypothesis.
Hence, the counterexample to induction of~\cref{ln:lambda-dual-itp:cti} has $\sigma \not\models I$ (otherwise $\sigma' \models I$ because $I$ is an inductive invariant, and this would imply also $\sigma \models \varphi$, in contradiction). %
Then \Cref{lem:lambda-itp-monox} ensures that $H = \monox{\bmc{\tr}{\Init}{s}}{\sigma}$.
\Cref{lem:monox-bmc-inv} shows that this is $\monox{I}{\sigma}$, as required.
\qed
\end{proof}
Essentially, the algorithm gradually learns a monotone basis (\Cref{def:monotone-basis}) for $I$ from the counterexamples to induction, and constructs $I$ via the monotone hull w.r.t.\ this basis. %
The next lemma shows that the size of the basis that the algorithm finds is bounded by $\cnfsize{I}$.
\begin{lemma}
\label{lem:lambda-itp-overapprox}
If $I$ is forwards $s$-fenced for $(\Init,\tr,\Bad)$,
then $\mbox{\textsc{CDNF-ITP}}(\Init,\tr,\Bad,k)$ successfully finds an inductive invariant.
Further, the number of iterations of the loop in~\cref{ln:lambda-dual-itp:ind-check} is at most $\cnfsize{I}$.
\end{lemma}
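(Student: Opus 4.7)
The plan is twofold: first establish that the algorithm terminates successfully with an inductive invariant, and then bound the number of iterations by $\cnfsize{I}$ using a ``fresh clause per iteration'' argument.

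For correctness, \Cref{lem:lambda-itp-candidate} already gives $I \subseteq \varphi$ throughout the run, while $\varphi \subseteq \neg\Bad$ holds by construction (as $\neg\Bad$ is the initial conjunct and conjunctions are only added). When the inductiveness check in \cref{ln:lambda-dual-itp:ind-check} succeeds, $\varphi$ is an inductive invariant that separates $\Init$ from $\Bad$. It remains to argue that the restart branch in \cref{ln:lambda-dual-itp-termmin:fail} is never triggered: any CTI pre-state $\sigma$ must satisfy $\sigma \not\models I$ (otherwise $\sigma' \models I \subseteq \varphi$ by $I$'s inductiveness, contradicting $\sigma'\not\models\varphi'$), and since $\bmc{\tr}{\Init}{s}\subseteq I$ for any inductive invariant, $\sigma \not\in \bmc{\tr}{\Init}{s}$ as needed.

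For the iteration bound, fix a minimum CNF representation $I \equiv c_1 \land \ldots \land c_k$ with $k = \cnfsize{I}$. For the CTI pre-state $\sigma_i$ produced in iteration $i$, the argument above gives $\sigma_i \not\models I$, so we can choose an index $j(i)$ with $\sigma_i \not\models c_{j(i)}$. I claim $j$ is injective across iterations, which immediately yields the $\cnfsize{I}$ bound. The key syntactic ingredient---derived directly from \Cref{lem:bshouty-mon-mindnf} applied to $c_{j(i)}$ viewed as a trivial DNF over single-literal terms---is that for a clause $c$, $\monox{c}{b} \equiv c$ whenever $b \not\models c$ (each single-literal term is retained because no literal of $c$ agrees with $b$). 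Now suppose for contradiction that some earlier $\sigma_{i'} \in \mathcal{C}_{i-1}$ also violates $c_{j(i)}$. Then $\monox{c_{j(i)}}{\sigma_{i'}} \equiv c_{j(i)}$, and combined with $I \implies c_{j(i)}$ and the monotonicity of the $\monox{\cdot}{\sigma_{i'}}$ operator (\Cref{lem:bshouty-monox-monotone}), we obtain $\monox{I}{\sigma_{i'}} \implies c_{j(i)}$. Hence $\varphi_i \implies \mhull{I}{\mathcal{C}_{i-1}} \implies \monox{I}{\sigma_{i'}} \implies c_{j(i)}$ using \Cref{lem:lambda-itp-candidate}, so $\sigma_i \models c_{j(i)}$, contradicting the choice of $j(i)$.

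The main obstacle is the ``fresh clause'' step; once the syntactic fact that a violated clause is monotone with respect to its violating assignment is in hand, the rest reduces to chasing implications along the monotonization operator via already-established lemmas. Note that the minimum CNF of $I$ is used only for the counting argument; the algorithm itself never constructs it, and the bound applies irrespective of which clause $c_{j(i)}$ we pick as a witness in each iteration.
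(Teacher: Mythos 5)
Your proposal is correct and follows essentially the same route as the paper's proof: both hinge on the observation that a clause $c$ of a fixed minimal CNF representation of $I$ that is violated by the CTI pre-state $\sigma$ is $\sigma$-monotone (via \Cref{lem:bshouty-mon-mindnf}), hence $\monox{I}{\sigma} \implies c$, so each iteration accounts for a distinct clause of that representation. Your phrasing of the count as injectivity of the clause-selection map is just a repackaging of the paper's ``each iteration newly implies some $c_i$'' argument, and your explicit note that the restart branch is never taken is a welcome (if minor) addition.
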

\begin{proof}
Since $\sigma \not\models I$, also it is not a model of the monotonization w.r.t.\ to itself, $\sigma \not\models \monox{I}{\sigma}$ (because the only state $x \leq_\sigma \sigma$ is $x=\sigma$---see~\Cref{def:b-monotone-order,def:monox}).
This shows, using~\Cref{lem:lambda-itp-candidate}, that at least one state is excluded from the candidate $\varphi$ in each iteration.
By the same lemma always $I \implies \varphi$, and the algorithm terminates when $\varphi$ is inductive, so this shows that the algorithm successfully converges to an inductive invariant.

To see that this occurs in at most $\cnfsize{I}$ iterations,
consider a minimal CNF representation of $I$, $I = c_1 \land \ldots \land c_{\cnfsize{I}}$. We argue that in each iteration produces at least one new clause from that representation, in the sense that for some $i$, $\varphi \land \monox{I}{\sigma_b} \implies c_i$ whereas previously $\varphi \notimplies c_i$.
Let $c_i$ be the clause that $\sigma \not\models c_i$ (recall e.g.\ that $\sigma \not\models \varphi$ and $I \subseteq \varphi$).
Then $\monox{I}{\sigma} \subseteq c_i$, since $c_i$ is $\sigma$-monotone
($\monox{c_i}{\sigma} = c_i$, using~\Cref{lem:bshouty-mon-mindnf}, because all the literals disagree with $\sigma$) and $I \subseteq c_i$, and $\monox{I}{\sigma}$ is the smallest such (\Cref{lem:monox-minimality}).
Thus when we conjoin $H = \monox{I}{\sigma}$ to $\varphi$ we conjoin at least one new $c_i$ that was not present in a CNF representation of $\varphi$; this can happen at most $\cnfsize{I}$ times.
\qed
\end{proof}

Overall:
\begin{proof}[Proof of~\Cref{thm:lambda-dual-itp-efficient}]
The algorithm's success in finding an invariant is established in~\Cref{lem:lambda-itp-overapprox}.
As for efficiency,
by~\Cref{lem:lambda-itp-overapprox}, there are at most $\cnfsize{I}$ iterations of the loop in \textsc{CDNF-ITP}, each performs a single inductiveness query, and calls \textsc{\superefficientalgname}. By~\Cref{thm:efficient-monox-efficiency} each such call performs at most $\bigO(n^2 \dnfsize{\monox{I}{\sigma}})$ $s$-BMC queries.
The claim follows because $\dnfsize{\monox{I}{\sigma}} \leq \dnfsize{I}$ (\Cref{lem:bshouty-monox-dnfsize}).
\qed
\end{proof}

\begin{remark}[Backwards fence condition]
\label{rem:backwards-fence}
Our main theorem in this section, \Cref{thm:lambda-dual-itp-efficient}, also has a dual version that applies to a fence condition concerning \emph{backwards} reachability. $I$ is $s$-backwards fenced if every state in the \emph{outer boundary} $\boundaryneg{I} \eqdef \boundarypos{\neg I}$ can reach a state in $\Bad$ in at most $s$ steps~\cite{DBLP:journals/pacmpl/FeldmanSSW21}. %
The dual of~\Cref{thm:lambda-dual-itp-efficient} is that there is an algorithm that achieves the same complexity bound under the assumption that $I$ is s-\emph{backwards} fenced (instead of $s$-forwards fenced). The dual-CDNF algorithm is obtained by running our CDNF algorithm on the dual transition system $(\Bad,\tr^{-1},\Init)$ (see e.g.~\cite[][Appendix A]{DBLP:journals/pacmpl/FeldmanISS20}) and negate the invariant; notice that the CDNF class is closed under negation.
This algorithm also achieves the same bound for decision trees as in~\Cref{cor:decision-tree-efficient}, under the backwards fence assumption.
\end{remark}

\begin{remark}[Comparison to Bshouty's CDNF algorithm]
\label{rem:comparison-bshouty-cdnf}
Our CDNF algorithm, \Cref{alg:lambda-itp}, is inspired by Bshouty's CDNF algorithm~\cite{DBLP:journals/iandc/Bshouty95}, but diverges from it in several ways.
The reason is the different queries available in each setting.
(The code for Bshouty's CDNF algorithm is provided in~\refappendix{sec:bshouty-cdnf}.)
Structurally, while the candidate in both algorithms is gradually constructed to be $\mhull{I}{\mathcal{C}_i} = \bigwedge_{\sigma \in \mathcal{C}_i}{\monox{I}{\sigma}}$ ($I$ being the unknown invariant/formula, and $\mathcal{C}_i$ the set of negative examples so far), \Cref{alg:lambda-itp} constructs each monotonization separately, one by one, whereas Bshouty's algorithm increases all monotonizations simultaneously.
Bshouty's design follows from having the source of examples---both positive and negative---equivalence queries, checking whether the candidate matches $I$. A membership query is necessary to decide whether the differentiating example is positive or negative for $I$ in order to decide whether to add disjuncts to the existing monotonizations or to add a new monotonization, respectively.
This procedure is problematic in invariant inference, because we cannot in general decide, for a counterexample $(\sigma,\sigma')$ showing that our candidate is not inductive, whether $\sigma \not\models I$ (negative) or $\sigma' \models I$ (positive)~\cite{ICELearning,DBLP:journals/pacmpl/FeldmanISS20}.
The solution in previous work~\cite{DBLP:journals/pacmpl/FeldmanSSW21} was to assume that the invariant satisfies both the forwards \emph{and} backwards fence condition (see~\Cref{rem:backwards-fence}). Under this assumption it is possible to decide whether $\sigma \models I$ for an arbitrary state $\sigma$. However, this condition is much stronger than a one-sided version of the fence condition.
Instead, in our inference algorithm, the candidate is ensured to be an overapproximation of the true $I$, so each counterexample to induction in~\cref{ln:lambda-dual-itp:cti} yields a negative example. Positive examples are obtained in~\cref{ln:lambda-dual-itp:positive-example} from $\bmc{\tr}{\Init}{s} \subseteq I$; there is no obvious counterpart to that in exact learning, because in that setting we have no a-priori knowledge of some set $S$ that underapproximates $I$, let alone one where we know---as the fence condition guarantees through~\Cref{lem:mhull-boundary}---that covering $S$ in the monotonization is enough to cover $I$.
\end{remark}  
\section{Efficient Implementation of Abstract Interpretation}
\label{sec:efficient-eedpr}
In this section we build on the algorithm of~\Cref{sec:efficient-monox} to prove a complexity upper bound on abstract interpretation in the domain based on the monotone theory (\Cref{thm:efficient-eepdr}). We begin with background on this domain.

\subsection{Background: Abstract Interpretation in the Monotone Theory}
Recall that given a set of states $B$, the monotone span (\Cref{def:monotone-span}) of $B$, $\mspan{B}$, is the set of formulas $\varphi$ s.t.\ $\mhull{\varphi}{B} \equiv \varphi$, or, equivalently, the set of formulas that can be written as conjunctions of clauses that exclude states from $B$ (\Cref{lem:basis-conj-monox}).
The \emph{abstract domain} $\madom{B}=\langle \mspan{B}, \implies, \join_{B}, \false \rangle$, introduced in~\cite{DBLP:journals/pacmpl/FeldmanSSW22}, is a join-semilattice over the monotone span of $B$, ordered by logical implication, with bottom element $\false$. %
The lub $\join_{B}$ exists because the domain is finite and closed under conjunction (follows from~\Cref{lem:basis-conj-monox}). A Galois connection $(2^{\States}, \subseteq) \galois{\malpha{B}}{\gamma} (\mspan{B}, \implies)$ with the concrete domain is obtained through the \emph{concretization} $\gamma(\varphi) = \set{\sigma \, | \, \sigma \models \varphi}$ and the \emph{abstraction} $\malpha{B}(\psi) = \mhull{\psi}{B}$~\cite{DBLP:journals/pacmpl/FeldmanSSW22}.\footnote{
$\madom{B}$ is parametrized by a choice of a monotone basis $B$. When $B$ is large, the abstraction is more precise; it is precise enough to prove safety when there exists an inductive invariant that can be expressed in CNF such that each clause excludes at least one state from $B$ (through~\Cref{lem:basis-conj-monox}). The fewer states $B$ includes, the more extrapolation is performed in each abstraction step. However, since $B$ also changes the available inductive invariants, the overall convergence might actually be faster with a larger (less extrapolating) $B$. Understanding how to choose $B$ is an important direction for future work. (In~\cite{DBLP:journals/pacmpl/FeldmanSSW22}, $B$ was obtained from the states that reach a bad state in a fixed number of steps, mimicking PDR’s scheme for generating proof obligations.)
}

Given a transition system $(\Init,\tr)$, iterations %
of abstract interpretation with the abstract transformer are given by
$\xi_0 = \malpha{B}(\Init), \xi_{i+1} = \malpha{B}\left(\tr(\gamma(\xi_i)) \cup \Init\right)$.
Substituting $\gamma,\malpha{B}$ %
yields the iterations as shown in~\Cref{alg:eepdr-ai}. %
\begin{algorithm}[H]
\caption{Kleene Iterations in $\madom{B}$}
\label{alg:eepdr-ai}
\begin{algorithmic}[1]
\begin{footnotesize}
\Procedure{AI-$\madom{B}$}{$\Init$, $\tr$, $\Bad$}
	\State $i \gets 0$
	\State $\Frameai_{-1} \gets \false$
	\State $\Frameai_0 \gets \mhull{\Init}{B}$ $\label{ln:eepdr-ai:frame0}$
	\While{$\Frameai_{i} \notimplies \Frameai_{i-1}$}
		\State $\Frameai_{i+1} = \mhull{\tr(\Frameai_i) \cup \Init}{B}$ $\label{ln:eepdr-ai:transformer}$
		\State $i \gets i+1$
	\EndWhile
	\State \Return $\Frameai_i$
\EndProcedure
\end{footnotesize}
\end{algorithmic}
\end{algorithm}

Each iterate in~\Cref{alg:eepdr-ai} involves a monotone hull (\cref{ln:eepdr-ai:frame0,ln:eepdr-ai:transformer}), which is a conjunction of monotonizations. Using~\Cref{alg:efficient-monox} this can be computed efficiently. We follow on this idea to prove efficient complexity upper bounds on~\Cref{alg:eepdr-ai}.

\subsection{Complexity Upper Bound}
To obtain a complexity upper bound on~\Cref{alg:eepdr-ai} we need to bound %
the time needed to compute each $\Frameai_i$
as well as the number of $\Frameai_i$'s.
A bound for the latter is provided by~\cite{DBLP:journals/pacmpl/FeldmanSSW22}:
\begin{theorem}[\cite{DBLP:journals/pacmpl/FeldmanSSW22}]
\label{thm:abstract-hyperdiamter-bound}
Let $(\Init,\tr,\Bad)$ be a transition system.
Then $\textsc{AI-$\madom{B}$}(\Init,\tr,\Bad)$ converges in %
iteration number at most
\begin{equation*}
	\aibound \, \eqdef \, \prod_{i=1}^{m}{\left(
					\dnfsize{\monox{\tr}{\reflect{\cubejoin{B}}\land\bkcube'_i}}
					+
					\dnfsize{\monox{\Init}{\bkcube_i}}
					\right)
				},
\end{equation*}
where
$B$ can be written in DNF as $\bkcube_1 \lor \ldots \lor \bkcube_m$, %
the cube $\cubejoin{B}$ consists of the literals that appear in all $\bkcube_1,\ldots,\bkcube_m$ (i.e., $\cubejoin{B} = \bigcap_{i=1}^{m}{\bkcube_i}$ as sets of literals), and the reflection of a cube $d = \ell_1 \land \ldots \land \ell_r$ is $\reflect{d} = \neg \ell_1 \land \ldots \land \neg \ell_r$.
\end{theorem}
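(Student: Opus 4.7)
The plan is to bound the number of iterations by bounding the lattice of reachable iterates. Since each iterate $\Frameai_i$ lies in $\mspan{B}$, Lemma~\ref{lem:mhull-dnf-base} yields the decomposition $\Frameai_i \equiv \bigwedge_{j=1}^m \monox{\Frameai_i}{\bkcube_j}$, so $\Frameai_i$ is fully determined by the $m$-tuple of its $\bkcube_j$-monotone components.

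First I would use disjunction-preservation of monotonization, $\monox{\varphi \lor \psi}{\bkcube} \equiv \monox{\varphi}{\bkcube} \lor \monox{\psi}{\bkcube}$ (immediate from~\Cref{def:monox}), to split each $\monox{\Frameai_{i+1}}{\bkcube_j} \equiv \monox{\tr(\Frameai_i) \cup \Init}{\bkcube_j}$ into $\monox{\tr(\Frameai_i)}{\bkcube_j} \lor \monox{\Init}{\bkcube_j}$. The contribution from $\Init$ has DNF size at most $\dnfsize{\monox{\Init}{\bkcube_j}}$ directly. The central technical step, and the main obstacle, is to bound $\dnfsize{\monox{\tr(\Frameai_i)}{\bkcube_j}}$ by $\dnfsize{\monox{\tr}{\reflect{\cubejoin{B}} \land \bkcube'_j}}$ uniformly in $i$. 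The idea is that $\tr(\Frameai_i)$ is the projection of $\tr \land \Frameai_i$ onto $V'$, and because $\Frameai_i \in \mspan{B}$ it is pre-state monotone away from every literal shared by all $\bkcube_k \in B$; the cube $\reflect{\cubejoin{B}}$ is precisely the direction opposite to this common part, so any model of $\monox{\tr \land \Frameai_i}{\reflect{\cubejoin{B}} \land \bkcube'_j}$ can be taken to a model of $\tr \land \Frameai_i$ by moving its pre-state variables toward $\cubejoin{B}$, which preserves membership in $\Frameai_i$. Consequently $\monox{\tr(\Frameai_i)}{\bkcube_j}$ is overapproximated by the $V'$-projection of $\monox{\tr}{\reflect{\cubejoin{B}} \land \bkcube'_j}$, and projection cannot increase DNF size.

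Combining the two contributions via the disjunction bound gives $\dnfsize{\monox{\Frameai_i}{\bkcube_j}} \le s_j$ uniformly in $i$, where $s_j = \dnfsize{\monox{\tr}{\reflect{\cubejoin{B}} \land \bkcube'_j}} + \dnfsize{\monox{\Init}{\bkcube_j}}$. To conclude, I would observe that the terms of $\monox{\Frameai_i}{\bkcube_j}$'s (unique) minimal $\bkcube_j$-monotone DNF are drawn from the fixed pool of at most $s_j$ prime implicants contributed by the two generators above, so the $j$-th component ranges over a set of size at most $s_j$. Since each iterate is determined by its component tuple, and each iteration strictly grows $\Frameai_i$ until convergence, the number of iterations is bounded by $\prod_{j=1}^m s_j = \aibound$. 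The delicate point to verify in this last step is that the set of candidate prime implicants for $\monox{\Frameai_i}{\bkcube_j}$ is indeed drawn from a pool of size $s_j$ rather than needing the Dedekind-style bound on general $\bkcube_j$-monotone formulas; this uses that the iteration produces these components inductively from the same two generators.
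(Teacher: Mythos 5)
First, note that this paper does not prove \Cref{thm:abstract-hyperdiamter-bound}; it imports it from~\cite{DBLP:journals/pacmpl/FeldmanSSW22}, and the closest it comes to a proof is the (also imported) \Cref{lem:hypertr-step} together with the term-combination view of the transformer in \Cref{alg:efficient-eepdr-ai}. Your skeleton is essentially the intended one: decompose each iterate into its $m$ monotone components, show that the $j$-th component is built from a fixed pool of terms of $\monox{\tr\lor\Init'}{\reflect{\cubejoin{B}}\land\bkcube'_j}$ (whose size is at most $s_j$ by \Cref{lem:bshouty-mon-mindnf}), and count. Your justification of why $\reflect{\cubejoin{B}}$ is the right pre-state direction has the monotonicity inverted, though: a formula in $\mspan{B}$ is a conjunction of clauses each excluding a state of $B$, hence it is $\cubejoin{B}$-monotone, i.e.\ closed under moving the pre-state \emph{away} from $\cubejoin{B}$; the witness of the $\reflect{\cubejoin{B}}$-monotonization is obtained from a model by moving \emph{toward} $\reflect{\cubejoin{B}}$, which is exactly the direction that $\Frameai_i$-membership is preserved in. Also, "overapproximated by the $\voc'$-projection" is not enough to bound a DNF size; what you need (and what \Cref{lem:hypertr-step} gives) is that the component is \emph{exactly} the disjunction of $\restrict{t}{\voc'}$ over those terms $t$ of the fixed pool whose $\voc$-part intersects $\Frameai_i$.

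The genuine gap is in the final counting step. From "$\monox{\Frameai_i}{\bkcube_j}$ is a disjunction of terms drawn from a fixed pool of size $s_j$" it does \emph{not} follow that "the $j$-th component ranges over a set of size at most $s_j$": a sub-disjunction of a pool of $s_j$ terms can take up to $2^{s_j}$ values, so the tuple argument as written bounds nothing useful. What rescues the count is monotonicity of the iteration: the Kleene iterates increase, the set of pool terms whose $\voc$-part intersects $\Frameai_i$ is monotone in $i$, and $\Frameai_{i+1}$ is determined by the selected \emph{combinations} $(t_1,\ldots,t_m)$ from the product pool of size at most $\prod_j s_j = \aibound$ (this is literally the disjunction in \cref{ln:efficient-eepdr-ai:transformer} of \Cref{alg:efficient-eepdr-ai}). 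Each strict increase of the iterate must add at least one new combination to this monotone set, so the number of iterations is at most $\aibound$. (Counting per component instead of per combination also works and even gives $\sum_j s_j$ up to an additive constant, but the set-of-size-$s_j$ claim as you stated it is the step that fails.)
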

We fix a DNF representation of $B = \bkcube_1 \lor \ldots \lor \bkcube_m$.
For brevity, we use $\aibound$ to refer to the bound in~\Cref{thm:abstract-hyperdiamter-bound}.
When $\aibound$ is small, it reflects the benefit of using abstract interpretation in $\mspan{B}$ over exact reachability (even though $\aibound$ is not always a tight bound)~\cite{DBLP:journals/pacmpl/FeldmanSSW22}.
An example of $\aibound$ for a simple system appears in~\Cref{ex:ai-parity}.

In this section we prove that it is possible to implement~\Cref{alg:eepdr-ai} so that its overall complexity is polynomial in the same quantity $\aibound$, the number of variables $n$, and the number $m$ of terms in the representation of $B$:
\begin{theorem}
\label{thm:efficient-eepdr}
\Cref{alg:eepdr-ai} can be implemented to terminate in $\bigO(n^2 \aibound + (n+m)\aibound^2)$
SAT queries and time.
\yotam{actually $\bigO(n^2 \aibound + m\aibound^2)$ queries and $\bigO(n^2 \aibound + (n+m)\aibound^2)$ time}
\end{theorem}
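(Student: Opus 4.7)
My plan has three parts. First, apply~\Cref{thm:abstract-hyperdiamter-bound} to bound the number of outer iterations of~\Cref{alg:eepdr-ai} by $\aibound$. Each iteration computes $\Frameai_{i+1} = \bigwedge_{j=1}^{m} \monox{\tr(\Frameai_i) \cup \Init}{\bkcube_j}$ by~\Cref{lem:mhull-dnf-base}, so a single iteration reduces to computing $m$ monotonizations.

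Second, each conjunct $\monox{\tr(\Frameai_i) \cup \Init}{\bkcube_j}$ is to be computed by invoking~\Cref{alg:efficient-monox} with the implicitly represented $\varphi \eqdef \tr(\Frameai_i) \cup \Init$. Crucially, each SAT query $\SAT{\varphi \land \theta}$ issued by~\Cref{alg:efficient-monox} can be discharged as a single SAT call $\SAT{(\Frameai_i \land \tr \land \theta[\voc \mapsto \voc']) \lor (\Init \land \theta)}$ over $\voc \uplus \voc'$, so we never materialize the post-image. By~\Cref{thm:efficient-monox-efficiency} the cost of this call is $\bigO(n^2 \cdot \dnfsize{\monox{\tr(\Frameai_i) \cup \Init}{\bkcube_j}})$ SAT queries, and by~\Cref{lem:bshouty-mon-mindnf} this DNF size is at most $\dnfsize{\monox{\tr(\Frameai_i)}{\bkcube_j}} + \dnfsize{\monox{\Init}{\bkcube_j}}$. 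The second summand is already a factor of $\aibound$.

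The main obstacle is the first summand: bounding $\dnfsize{\monox{\tr(\Frameai_i)}{\bkcube_j}}$ by $\dnfsize{\monox{\tr}{\reflect{\cubejoin{B}} \land \bkcube'_j}}$, the matching factor of $\aibound$. My plan is to exploit the invariant that $\Frameai_i \in \mspan{B}$ throughout the run, which by~\Cref{lem:basis-conj-monox} gives a CNF representation of $\Frameai_i$ whose clauses each exclude some $\bkcube_k \in B$. This structure forces every minimal state of $\tr(\Frameai_i)$ with respect to $\leq_{\bkcube_j}$ to come from a transition whose pre-state projects into $\reflect{\cubejoin{B}}$, so that the prime implicants of $\monox{\tr(\Frameai_i)}{\bkcube_j}$ correspond to prime implicants of $\monox{\tr}{\reflect{\cubejoin{B}} \land \bkcube'_j}$ via the uniqueness of the minimal DNF for $\bkcube_j$-monotone formulas. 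I expect this step to closely mirror the key technical lemma behind~\Cref{thm:abstract-hyperdiamter-bound} in~\cite{DBLP:journals/pacmpl/FeldmanSSW22}.

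Finally, I would amortize the costs across iterations: for each $j$, run~\Cref{alg:efficient-monox} \emph{incrementally} across successive main iterations, keeping the growing candidate $H_j$ so that each new term added costs $\bigO(n^2)$ SAT queries and is charged against the DNF bound from the previous paragraph. Summing the charges over the $m$ indices yields $\bigO(n^2 \aibound)$ total generalization queries, and adding $\bigO((n+m)\aibound)$ per-iteration SAT queries and bookkeeping for reassembling the candidate $\Frameai_{i+1}$ and testing convergence $\Frameai_{i+1} \implies \Frameai_i$ over $\aibound$ iterations yields the stated bound $\bigO(n^2 \aibound + (n+m)\aibound^2)$.
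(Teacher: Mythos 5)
Your first two steps are sound and match the paper's setup: the iteration count comes from \Cref{thm:abstract-hyperdiamter-bound}, and the post-image $\tr(\Frameai_i)\cup\Init$ can indeed be handled implicitly inside \textsc{\superefficientalgname} via a two-vocabulary SAT encoding. You also correctly identify the crux --- relating $\dnfsize{\monox{\tr(\Frameai_i)\cup\Init}{\bkcube_j}}$ to the factor $\dnfsize{\monox{\tr}{\reflect{\cubejoin{B}}\land\bkcube'_j}}$ of $\aibound$ --- and this is essentially \Cref{lem:hypertr-step}, which the paper imports from prior work. However, your final amortization step has a genuine gap. Resuming \textsc{\superefficientalgname} across iterations with the accumulated candidate $H_j$ is \emph{sound} (the targets $\monox{\tr(\Frameai_i)\cup\Init}{\bkcube_j}$ form an increasing chain, so by \Cref{lem:bshouty-monox-monotone} old terms remain implicants), but the term-counting argument of \Cref{lem:lambda-itp-monox-efficiency} breaks: that argument charges each disjoined cube to a term of the \emph{unique minimal DNF of the current target}, and a prime implicant of the stage-$i$ target need not be prime for, nor a term of the minimal DNF of, the stage-$(i{+}1)$ target. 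So the total number of terms accumulated for index $j$ is bounded only by $\sum_i \dnfsize{\monox{\tr(\Frameai_i)\cup\Init}{\bkcube_j}}$, not by the single final DNF size, and your generalization cost degrades to roughly $\bigO(n^2\aibound^2)$ --- this is exactly the ``asymptotically inferior'' direct implementation the paper discusses in \Cref{rem:simple-efficient-eepdr-ai}.

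The paper's device, which you are missing, is to stop monotonizing post-images altogether: it runs \textsc{\superefficientalgname} \emph{once per $j$} on the two-vocabulary formula $\tr\lor\Init'$ with respect to the cube $\reflect{\cubejoin{B}}\land\bkcube'_j$, obtaining explicit DNFs $\absr{\tr}_j$ at total cost $\bigO(n^2\aibound)$ outside the main loop. Then, by \Cref{lem:hypertr-step}, each iterate $\Frameai_{i+1}$ is assembled purely combinatorially: enumerate combinations of terms $t_1,\ldots,t_m$ from $\absr{\tr}_1,\ldots,\absr{\tr}_m$, keep those whose $\voc$-restrictions each intersect $\Frameai_i$ (one SAT check per term), and conjoin the $\voc'$-restrictions. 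This costs $\bigO(m\aibound)$ queries and $\bigO((n+m)\aibound)$ time per iteration, giving the stated $\bigO(n^2\aibound + (n+m)\aibound^2)$. To repair your proof you would need to replace the amortization with this one-shot monotonization of the transformer itself.
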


\begin{example}
\label{ex:ai-parity}
Let $n$ be an odd number.
Consider a transition system over $\vec{x} = x_1,\ldots,x_n$, where $\Init$ is $\vec{x} = 00\ldots00$ and the transition relation chooses an even number of variables that are $0$ from the initial state
and turns them into $1$.
If we take $B$ to be the singleton set containing the state $\vec{x} = 11\ldots11$ (hence, $B$ is a cube and $m=1$), then~\Cref{lem:bshouty-mon-mindnf} yields that $\monox{\tr}{\vec{x}=00\ldots00\land\vec{x}'=11\ldots11} = \bigvee_{i=1}^{n}{(x'_i = 0)}$ (see~\Cref{sec:parity-example-monotonization} for details) so $\aibound = \dnfsize{\monox{\tr}{\vec{x}=00\ldots00\land\vec{x}'=11\ldots11}} = \bigO(n)$.
\Cref{thm:efficient-eepdr} shows that an implementation of abstract interpretation in $\madom{B}$ for this system terminates in $\bigO(n^3)$ SAT queries and time.
This is significant because a naive implementation of~\Cref{alg:eepdr-ai} would start, for the first iteration of~\cref{ln:eepdr-ai:transformer}, by computing the exact post-image $\tr(\Init)$; in our example this is the set of states where the parity of $\vec{x}$ is $0$, which cannot be represented in polynomial-size DNF nor CNF~\cite[e.g.][]{DBLP:books/daglib/0028067}. Our implementation is able to compute the abstraction of the post-image without constructing the post-image and avoids the blowup in complexity.
\cite{DBLP:journals/pacmpl/FeldmanSSW22} contains other examples with small $\aibound$.
\end{example}

\yotam{easy example with multiple cubes: direct product of parities over disjoint parts of the states, $\cubejoin{B}$ not a problem if the only initial state in $\tr$ is $\vec{x}=00\ldots00$.}

At this point, the direct approach to implement~\Cref{alg:eepdr-ai} is to perform $\mhull{\varphi}{B}$ in~\cref{ln:eepdr-ai:frame0,ln:eepdr-ai:transformer} through $\bigwedge_{j=1}^{m}{\mbox{\Call{\superefficientalgname}{$\varphi$, $\bkcube_j$}}}$, invoking~\Cref{alg:efficient-monox} on $\varphi$.
Indeed, this achieves a bound that is only slightly worse than~\Cref{thm:efficient-eepdr} (see~\Cref{rem:simple-efficient-eepdr-ai}). In what follows we provide an implementation that both explicates the connection to $\aibound$, and achieves exactly the bound of~\Cref{thm:efficient-eepdr}.

\begin{algorithm}[H]
\caption{Efficient Kleene Iterations in $\madom{B}$}
\label{alg:efficient-eepdr-ai}
\begin{algorithmic}[1]
\begin{footnotesize}
\Procedure{AI-$\madom{B}$}{$\Init$, $\tr$, $\Bad$}
	\State $i \gets 0$
	\State $\Frameai_{-1} \gets \false$
	\State $\Frameai_0 \gets \bigwedge_{j=1}^{m}{\mbox{\Call{\superefficientalgname}{$\Init$, $\bkcube_j$}}}$ $\label{ln:efficient-eepdr-ai:frame0}$
	\For{$j=1..m$}
		\State $\absr{\tr}_j \gets \textsc{\superefficientalgname}(\tr \lor \Init', \reflect{\cubejoin{B}} \land \bkcube'_j)$ $\label{ln:efficient-eepdr-ai:mon-tr}$
	\EndFor
	\While{$\Frameai_{i} \notimplies \Frameai_{i-1}$}
		\State $\Frameai_{i+1} = \bigvee{\set{\left(\restrict{t_1}{\voc'}\right) \land \ldots \land \left(\restrict{t_m}{\voc'}\right) \ \Big{|} \ t_j \mbox{ a term of } \absr{\tr}_j, \; \exists \sigma_j \in \xi_i. \, \sigma_j \models \left(\restrict{t_j}{\voc}\right)}}$  $\label{ln:efficient-eepdr-ai:transformer}$
		\State $i \gets i+1$
	\EndWhile
	\State \Return $\Frameai_i$
\EndProcedure
\end{footnotesize}
\end{algorithmic}
\end{algorithm}

Our implementation is displayed in~\Cref{alg:efficient-eepdr-ai}.
The first iterate is computed as described above by invoking~\Cref{alg:efficient-monox} on $\Init$ (\cref{ln:efficient-eepdr-ai:frame0}). %
The SAT queries performed by~\Cref{alg:efficient-monox} are in this case straightforward, with $\varphi = \Init$.

To compute the next iterates, we first compute monotnizations of the concrete transformer, $\tr \lor \Init'$ (\cref{ln:efficient-eepdr-ai:mon-tr}).
This is a two-vocabulary formula, and accordingly the monotonizations are w.r.t.\ two-vocabulary cubes.
The monotonizations are computed in DNF form and stored in $\absr{\tr}_j$.
The next iterate $\Frameai_{i+1}$ is formed from the $\absr{\tr}_j$'s by taking all the combinations of terms from $\absr{\tr}_1,\ldots,\absr{\tr}_m$ whose pre-state part is satisfied by at least one state in $\xi_i$, and forming the conjunction of the post-state parts: for a term $t = \ell_1 \lor \ldots \lor \ell_{i_1} \lor \ell'_{i_1+1} \lor \ldots \lor \ell'_{i_2}$ over $\voc \uplus \voc'$, the restriction $\restrict{t}{\voc} = \ell_1 \lor \ldots \lor \ell_{i_1}$ and $\restrict{t}{\voc'} = \ell'_{i_1+1} \lor \ldots \lor \ell'_{i_2}$.

The intuition is that in the original algorithm, given a set of states $\Frameai_i$, we find the set of states in $\Frameai_{i+1}$ by taking the result of the transformer $\tr \lor \Init'$ on the specific $\Frameai_i$, then, for the monotonization of the result, adding also the states that are required by the $\bkcube_j$-monotone order, and this we do for every disjunct $\bkcube_j$ in $B$, letting $\Frameai_{i+1}$ be the conjunction of the said monotonizations. Here, instead, we monotonize $\tr \lor \Init'$ itself w.r.t.\ every $b_j$, such that for every pre-state we have ready the monotonization of the corresponding post-state. We then form $\Frameai_{i+1}$ by picking and conjoining the right monotonizations---the ones whose pre-state is in the previous frame.
(The monotonization of the pre-state w.r.t.\ $\reflect{\cubejoin{B}}$ is useful for decreasing $\aibound$, and hence the obtained bound, without altering $\Frameai_{i+1}$; the latter stems from the fact that the input $\Frameai_i$ is also the result of such a procedure, so the presence of a pre-state in $\Frameai_i$ indicates the presence all the states in its monotone hull w.r.t.\ $\reflect{\cubejoin{B}}$ in $\Frameai_i$.)

The invocation of~\Cref{alg:efficient-monox} in~\cref{ln:efficient-eepdr-ai:mon-tr} is on a double-vocabulary formula; still, the SAT queries to be performed in the invocation of~\Cref{alg:efficient-monox} are simple SAT queries about two-vocabulary formulas (and a counterexample is a pair of states). %

It is important for the efficiency result that \Cref{alg:lambda-itp} uses~\Cref{alg:efficient-monox} as a subprocedure. Using Bshouty's procedure (see~\Cref{remark:bshouty-monotonization}) would yield a bound in terms of the DNF size of the original transition relation, which could be significantly larger, especially in cases where the abstract interpretation procedure terminates faster than exact forward reachability~\cite{DBLP:journals/pacmpl/FeldmanSSW22}.

The rest of this section proves that~\Cref{alg:efficient-eepdr-ai} realizes~\Cref{thm:efficient-eepdr}.
To show that it correctly implements~\Cref{alg:eepdr-ai}, we need the following fact about~\Cref{alg:eepdr-ai}:
\begin{lemma}[\cite{DBLP:journals/pacmpl/FeldmanSSW22}]
\label{lem:hypertr-step}
In~\Cref{alg:eepdr-ai}, $\sigma' \models \Frameai_{i+1}$ iff there exist $\sigma_1,\ldots,\sigma_m \models \Frameai_i$ s.t.\
\begin{equation}
\label{eq:hypertr-step}
	(\sigma_1,\sigma') \models \monox{\tr \lor \Init'}{\reflect{\cubejoin{B}}\land\bkcube'_1} \land \ldots \land (\sigma_m,\sigma') \models \monox{\tr \lor \Init'}{\reflect{\cubejoin{B}}\land\bkcube'_m}.
\end{equation}
\end{lemma}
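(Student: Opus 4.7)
The plan is to unfold both sides into explicit witness-chasing statements and show they match. Writing $B \equiv \bkcube_1 \vee \ldots \vee \bkcube_m$ in DNF, \Cref{lem:mhull-dnf-base} combined with the definition of the abstract transformer gives $\Frameai_{i+1} \equiv \bigwedge_{j=1}^{m}\monox{\tr(\Frameai_i) \cup \Init}{\bkcube_j}$. Hence $\sigma' \models \Frameai_{i+1}$ iff for every $j$ there is $\tilde{\sigma}^{post}_j \leq_{\bkcube_j} \sigma'$ such that either $\tilde{\sigma}^{post}_j \models \Init$ or $(\sigma_j, \tilde{\sigma}^{post}_j) \models \tr$ for some $\sigma_j \models \Frameai_i$. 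On the other hand, by~\Cref{def:monox}, the $j$-th conjunct of~\Cref{eq:hypertr-step} unpacks into the existence of $\tilde{\sigma}^{pre}_j \leq_{\reflect{\cubejoin{B}}} \sigma_j$ and $\tilde{\sigma}^{post}_j \leq_{\bkcube_j} \sigma'$ with $(\tilde{\sigma}^{pre}_j, \tilde{\sigma}^{post}_j) \models \tr \vee \Init'$. The lemma reduces to reconciling these two forms.

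The technical crux is a closure lemma: every $\varphi \in \mspan{B}$ satisfies that $\sigma \models \varphi$ and $\tilde{\sigma} \leq_{\reflect{\cubejoin{B}}} \sigma$ imply $\tilde{\sigma} \models \varphi$. I would establish it by first observing that $\tilde{\sigma} \leq_{\reflect{\cubejoin{B}}} \sigma$ forces $\sigma \leq_{\bkcube_k} \tilde{\sigma}$ for every $k$: any variable $p$ on which $\sigma,\tilde{\sigma}$ differ lies in $\cubedom{\cubejoin{B}} \subseteq \cubedom{\bkcube_k}$ with $\sigma[p] = \cubejoin{B}[p] = \bkcube_k[p]$, which is exactly the condition of~\Cref{def:b-monotone-order}. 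Then \Cref{lem:mhull-dnf-base} lets us write $\varphi \equiv \bigwedge_k \monox{\psi}{\bkcube_k}$, and each conjunct is $\bkcube_k$-monotone, so membership transfers from $\sigma$ to $\tilde{\sigma}$. Applicability to $\Frameai_i$ is secured by a straightforward induction on $i$: $\Frameai_0 \in \mspan{B}$ by~\cref{ln:eepdr-ai:frame0}, and~\cref{ln:eepdr-ai:transformer} preserves membership in $\mspan{B}$.

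With the closure lemma in hand, the ``if'' direction proceeds by a case split on the disjunct of $\tr \vee \Init'$ witnessing $(\tilde{\sigma}^{pre}_j, \tilde{\sigma}^{post}_j) \models \tr \vee \Init'$. If $\tilde{\sigma}^{post}_j \models \Init$, then $\tilde{\sigma}^{post}_j \in \tr(\Frameai_i) \cup \Init$ and $\tilde{\sigma}^{post}_j \leq_{\bkcube_j} \sigma'$, so $\sigma' \models \monox{\tr(\Frameai_i) \cup \Init}{\bkcube_j}$. Otherwise $(\tilde{\sigma}^{pre}_j, \tilde{\sigma}^{post}_j) \models \tr$, and closure applied to $\sigma_j \models \Frameai_i$ with $\tilde{\sigma}^{pre}_j \leq_{\reflect{\cubejoin{B}}} \sigma_j$ yields $\tilde{\sigma}^{pre}_j \models \Frameai_i$, so $\tilde{\sigma}^{post}_j \in \tr(\Frameai_i)$ and again $\sigma' \models \monox{\tr(\Frameai_i) \cup \Init}{\bkcube_j}$. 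Conjoining across $j$ gives $\sigma' \models \Frameai_{i+1}$. For the ``only if'' direction, for each $j$ pick a post-state witness $\tilde{\sigma}^{post}_j \leq_{\bkcube_j} \sigma'$ as above. If $\tilde{\sigma}^{post}_j \models \Init$, any $\sigma_j \models \Frameai_i$ paired with $\tilde{\sigma}^{pre}_j = \sigma_j$ works: then $(\tilde{\sigma}^{pre}_j, \tilde{\sigma}^{post}_j) \models \Init'$ and $\tilde{\sigma}^{pre}_j \leq_{\reflect{\cubejoin{B}}} \sigma_j$ trivially (the degenerate case $\Frameai_i = \emptyset$, which can only happen if $\Init = \emptyset$, is vacuous). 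Otherwise the $\sigma_j \models \Frameai_i$ supplied by $(\sigma_j, \tilde{\sigma}^{post}_j) \models \tr$ is itself the pre-state witness, with $\tilde{\sigma}^{pre}_j = \sigma_j$.

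I expect the main obstacle to be the closure lemma---pinning down the interplay between $\reflect{\cubejoin{B}}$ on the pre-state vocabulary and the $\bkcube_k$-monotonicity of the conjuncts that make up $\Frameai_i$, and verifying the inductive invariant $\Frameai_i \in \mspan{B}$ that the closure relies on. This is also the conceptual content of the refinement: replacing $\reflect{\cubejoin{B}}$ by any strictly larger or smaller pre-state cube would break one of the two directions of the lemma, and hence soundness or completeness of the one-shot monotonization $\absr{\tr}_j$ used in~\Cref{alg:efficient-eepdr-ai}.
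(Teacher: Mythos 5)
Your proof is correct, but note that the paper does not actually prove this lemma: it is imported verbatim from~\cite{DBLP:journals/pacmpl/FeldmanSSW22} (the $\Lambda$-PDR paper), so there is no in-paper argument to compare against. Your self-contained derivation is sound. The componentwise decomposition of the two-vocabulary order $\leq_{\reflect{\cubejoin{B}}\land\bkcube'_j}$ into a pre-state part and a post-state part is valid because \Cref{def:b-monotone-order} is a per-variable condition, and your closure lemma is the genuine crux: from $\tilde{\sigma} \leq_{\reflect{\cubejoin{B}}} \sigma$ you correctly derive $\sigma \leq_{\bkcube_k} \tilde{\sigma}$ for every $k$ (any disagreeing variable lies in $\cubedom{\cubejoin{B}}$ with $\sigma[p]=\cubejoin{B}[p]=\bkcube_k[p]$), so membership in each $\bkcube_k$-monotone conjunct of $\Frameai_i \in \mspan{B}$ transfers downward; the induction showing $\Frameai_i \in \mspan{B}$ is immediate from \cref{ln:eepdr-ai:frame0,ln:eepdr-ai:transformer}. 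Both directions of the case analysis on $\tr$ versus $\Init'$ check out, including taking $\tilde{\sigma}^{pre}_j = \sigma_j$ reflexively in the ``only if'' direction. One inaccuracy in your closing remark: replacing $\reflect{\cubejoin{B}}$ by a \emph{smaller} pre-state cube (even the empty one) does \emph{not} break either direction---your own ``only if'' argument never exploits a nontrivial pre-state descent, and the ``if'' direction only gets easier with a weaker hypothesis. As the paper itself notes after \Cref{alg:efficient-eepdr-ai}, the role of $\reflect{\cubejoin{B}}$ is not correctness but shrinking $\dnfsize{\monox{\tr}{\cdot}}$ and hence the bound $\aibound$; only enlarging the pre-state cube beyond $\reflect{\cubejoin{B}}$ threatens soundness, by the failure of your closure lemma.
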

We use this to show the correctness of~\Cref{alg:efficient-eepdr-ai}:
\begin{lemma}
\label{lem:efficient-eepdr-correct}
$\Frameai_i$ in~\Cref{alg:efficient-eepdr-ai} is logically equivalent to $\Frameai_i$ in~\Cref{alg:eepdr-ai}.
\end{lemma}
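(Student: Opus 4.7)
The plan is to proceed by induction on $i$. For the base case $i=0$, I would invoke Lem.~\ref{lem:mhull-dnf-base} to rewrite the $\Frameai_0$ of Alg.~\ref{alg:eepdr-ai} as $\bigwedge_{j=1}^{m}{\monox{\Init}{\bkcube_j}}$, and then use the correctness of \textsc{\superefficientalgname} (Lem.~\ref{lem:lambda-itp-monox}) to identify this with the $\Frameai_0$ computed in~\cref{ln:efficient-eepdr-ai:frame0} of Alg.~\ref{alg:efficient-eepdr-ai}.

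For the inductive step, the plan is to characterize membership $\sigma' \models \Frameai_{i+1}$ under each algorithm and show they agree. For Alg.~\ref{alg:eepdr-ai}, Lem.~\ref{lem:hypertr-step} already gives the condition: there exist $\sigma_1,\ldots,\sigma_m \models \Frameai_i$ such that $(\sigma_j,\sigma') \models \monox{\tr\lor\Init'}{\reflect{\cubejoin{B}}\land\bkcube'_j}$ for every $j$. Using the induction hypothesis, we may take the witnesses $\sigma_j$ from $\Frameai_i$ as computed by Alg.~\ref{alg:efficient-eepdr-ai}. By Lem.~\ref{lem:lambda-itp-monox}, the DNF $\absr{\tr}_j$ produced in~\cref{ln:efficient-eepdr-ai:mon-tr} is exactly $\monox{\tr\lor\Init'}{\reflect{\cubejoin{B}}\land\bkcube'_j}$, so $(\sigma_j,\sigma') \models \absr{\tr}_j$ is equivalent to the existence of a term $t_j$ of $\absr{\tr}_j$ with $\sigma_j \models \restrict{t_j}{\voc}$ and $\sigma' \models \restrict{t_j}{\voc'}$.

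The bookkeeping step is then to swap the order of quantifiers: instead of choosing $\sigma_1,\ldots,\sigma_m$ first and then terms, choose an $m$-tuple of terms $(t_1,\ldots,t_m)$ with $t_j \in \absr{\tr}_j$ first, then check that each $t_j$ admits a pre-state witness $\sigma_j \in \Frameai_i$ (this decouples across $j$, since the $\sigma_j$ do not interact beyond lying in $\Frameai_i$), and finally require $\sigma' \models \bigwedge_{j=1}^m \restrict{t_j}{\voc'}$. This is precisely the disjunction ranging over admissible tuples $(t_1,\ldots,t_m)$ that defines $\Frameai_{i+1}$ in~\cref{ln:efficient-eepdr-ai:transformer}, completing the equivalence.

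The main (minor) obstacle is justifying the quantifier swap cleanly, and in particular observing that the existence of each pre-state witness $\sigma_j$ really is an independent, per-$j$ condition; this follows because the constraint on $\sigma_j$ in Lem.~\ref{lem:hypertr-step} involves only $\Frameai_i$ and the $j$-th monotonization, with $\sigma'$ being the only variable shared across the conjuncts. Once this is spelled out, the rest of the argument is essentially a tautological unfolding of DNF membership and the definition of the restriction operators $\restrict{\cdot}{\voc}$, $\restrict{\cdot}{\voc'}$.
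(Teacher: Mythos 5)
Your proposal is correct and follows essentially the same route as the paper's proof: induction on $i$, identifying $\absr{\tr}_j$ with $\monox{\tr\lor\Init'}{\reflect{\cubejoin{B}}\land\bkcube'_j}$ via the correctness of \textsc{\superefficientalgname}, and then the quantifier swap that decouples the existentials over the $\sigma_j$ per index $j$ (which the paper carries out as its chain of ``DNF manipulation'' equivalences) before invoking \Cref{lem:hypertr-step} and the induction hypothesis. The point you flag as the main obstacle—that $\sigma'$ is the only variable shared across the conjuncts—is exactly the justification the paper relies on.
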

\begin{proof}
By induction over $i$.
The correctness of $\Frameai_0$ follows from the correctness of~\Cref{alg:efficient-monox} (\Cref{thm:efficient-monox-efficiency}).
For the same reasons, $\absr{\tr}_j$ of~\Cref{alg:efficient-eepdr-ai} is equivalent to $\monox{\tr \lor \Init'}{\reflect{\cubejoin{B}} \land \bkcube'_j}$.
Now for some DNF manipulation: for every $\sigma'$,
\begin{align*}
	&\exists \sigma_1,\ldots,\sigma_m.
	(\sigma_1,\sigma') \models \absr{\tr}_1 \land \ldots \land (\sigma_m,\sigma') \models \absr{\tr}_m
\\
\iff \quad
	&\exists \sigma_1,\ldots,\sigma_m. \ \exists t_1 \mbox{ term of } \absr{\tr}_1, \ldots, \exists t_m \mbox{ term of } \absr{\tr}_m. \
	(\sigma_1,\sigma') \models t_1 \land \ldots \land (\sigma_m,\sigma') \models t_m
\\
\iff \quad
	&\exists \sigma_1,\ldots,\sigma_m. \ \exists t_1 \mbox{ term of } \absr{\tr}_1, \ldots, \exists t_m \mbox{ term of } \absr{\tr}_m. \
\\
	& \qquad
	\sigma_1 \models \left(\restrict{t_1}{\voc}\right) \land \sigma' \models \left(\restrict{t_1}{\voc'}\right) \land \ldots \land \sigma_1 \models \left(\restrict{t_m}{\voc}\right) \land \sigma' \models \left(\restrict{t_m}{\voc'}\right)
\\
\iff \quad
	&\exists \sigma_1, \, \exists t_1 \mbox{ term of } \absr{\tr}_1. \ \sigma_1 \models \left(\restrict{t_m}{\voc}\right) \land \sigma' \models \left(\restrict{t_m}{\voc'}\right)
\\
	&
	\land \ldots \land
\\
	&
	\exists \sigma_m, \, \exists t_m \mbox{ term of } \absr{\tr}_m. \ \sigma_m \models \left(\restrict{t_m}{\voc}\right) \land \sigma' \models \left(\restrict{t_m}{\voc'}\right).
\end{align*}
Hence, $\exists \sigma_1,\ldots,\sigma_m \in \Frameai_i$ of~\Cref{alg:efficient-eepdr-ai} that with $\sigma'$ satisfy~\Cref{eq:hypertr-step} iff $\sigma' \in \Frameai_{i+1}$ of~\Cref{alg:efficient-eepdr-ai}. \Cref{lem:hypertr-step} and the induction hypothesis complete the proof.
\qed
\end{proof}

We can now proceed to prove the complexity bound for~\Cref{alg:efficient-eepdr-ai}.
\begin{lemma}
\label{lem:efficient-eepdr-ai-efficiency}
\Cref{alg:efficient-eepdr-ai} terminates in $\bigO(n^2 \aibound + (n+m)\aibound^2)$ SAT queries and time.
\end{lemma}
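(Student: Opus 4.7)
The plan is to decompose \Cref{alg:efficient-eepdr-ai} into three components---the initial frame $\Frameai_0$, the precomputation of the $\absr{\tr}_j$'s, and the main while loop---and charge SAT queries and time to each, combining \Cref{thm:efficient-monox-efficiency} for each \textsc{\superefficientalgname} call with \Cref{thm:abstract-hyperdiamter-bound} to bound the number of loop iterations. Throughout, write the factors of $\aibound$ as $a_j = \dnfsize{\monox{\tr}{\reflect{\cubejoin{B}}\land\bkcube'_j}}$ and $b_j = \dnfsize{\monox{\Init}{\bkcube_j}}$, so that $\aibound = \prod_{j=1}^m (a_j + b_j)$, and use that each individual factor $a_j + b_j$ is at most $\aibound$.

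For the precomputation, the cost of $\Frameai_0$ in~\cref{ln:efficient-eepdr-ai:frame0} is the sum over $j=1..m$ of~\Cref{thm:efficient-monox-efficiency} applied to $(\Init,\bkcube_j)$, giving $\bigO(n^2 b_j)$ queries per call. For each call in~\cref{ln:efficient-eepdr-ai:mon-tr}, I will use~\Cref{lem:bshouty-mon-mindnf} on the disjunction $\tr \lor \Init'$ to argue that $\dnfsize{\monox{\tr \lor \Init'}{\reflect{\cubejoin{B}}\land\bkcube'_j}} \leq a_j + b_j$ (the $\Init'$ summand reduces to $\monox{\Init'}{\bkcube'_j}$ since $\reflect{\cubejoin{B}}$ mentions only $\voc$, which yields the same DNF size as $\monox{\Init}{\bkcube_j}$). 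Applying~\Cref{thm:efficient-monox-efficiency} then bounds each such call by $\bigO(n^2 (a_j + b_j))$, and aggregating across both precomputation stages gives $\bigO(n^2 \aibound)$ using that the individual factors are dominated by $\aibound$.

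For the main loop, \Cref{thm:abstract-hyperdiamter-bound} guarantees at most $\aibound$ iterations. Each iteration computes $\Frameai_{i+1}$ in~\cref{ln:efficient-eepdr-ai:transformer} by identifying, for each $j=1..m$ and each term $t$ of $\absr{\tr}_j$, whether there exists $\sigma_j \in \Frameai_i$ with $\sigma_j \models \restrict{t}{\voc}$; this is a single SAT query on $\Frameai_i \land \restrict{t}{\voc}$. Since $|\absr{\tr}_j|$ equals the DNF size produced by \textsc{\superefficientalgname}, which is at most $a_j + b_j \leq \aibound$, and since the query is independent for each $j$, each iteration costs $\bigO(m \aibound)$ SAT queries, and $\bigO(m \aibound^2)$ queries in total across all iterations. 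Extra work per iteration (collecting the ``valid'' terms into the per-$j$ DNF, copying them into the post-state vocabulary, and storing the result) is linear in the output size, which is $\bigO(n m \aibound)$, contributing $\bigO(n m \aibound^2)$ time; accounting for the SAT queries' formulas having size $\bigO(nm\aibound)$ and the linear overhead of computing each cube $\hamminginterval{x}{\project{x}{\bkcube}}$ inside \textsc{\superefficientalgname} yields the time bound $\bigO(n^2 \aibound + (n+m)\aibound^2)$ after summation.

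The main obstacle I anticipate is the representation issue: to keep every SAT query polynomial-size, $\Frameai_i$ must be stored as the conjunction of $m$ DNFs (one per $\bkcube_j$) rather than reduced to a single DNF, which would risk a combinatorial blowup across iterations. With this representation, each SAT query $\Frameai_i \land \restrict{t}{\voc}$ is a conjunction of $m$ DNFs and a cube, hence of size $\bigO(nm\aibound)$, so counting it as a single oracle call is justified; and the tight bookkeeping between per-call DNF sizes $a_j + b_j$ and the product $\aibound$ is what delivers $n^2 \aibound$ rather than $n^2 m \aibound$ for the precomputation.
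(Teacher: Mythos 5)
Your decomposition and all the key steps coincide with the paper's proof: the precomputation is charged $\bigO(n^2\aibound)$ via \Cref{thm:efficient-monox-efficiency} together with the subadditivity of DNF size under monotonization of the disjunction $\tr\lor\Init'$ (\Cref{lem:bshouty-mon-mindnf}, with the observation that the $\Init'$ summand reduces to $\monox{\Init}{\bkcube_j}$ up to renaming), the loop runs for at most $\aibound$ iterations by \Cref{thm:abstract-hyperdiamter-bound}, and each iteration performs $\bigO(m\aibound)$ SAT queries. The SAT-query count---the primary complexity measure here---is therefore established correctly and by the same route as the paper.

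The one place your argument does not close is the non-query time for the loop body: you bound the extra work per iteration by $\bigO(nm\aibound)$ and hence $\bigO(nm\aibound^2)$ overall, and then assert that this ``yields'' $\bigO((n+m)\aibound^2)$; it does not, since $nm\neq\bigO(n+m)$. The paper charges $\bigO(n+m)$ per combination of terms (of which there are at most $\prod_j\card{\absr{\tr}_j}\leq\aibound$), which is achievable if you split each term of each $\absr{\tr}_j$ into its $\voc$ and $\voc'$ parts once, outside the enumeration of combinations, and represent each output disjunct as a single cube of at most $n$ literals assembled from the $m$ pre-split pointers; your accounting instead re-does $\bigO(n)$ work for each of the $m$ terms inside every combination. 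This is a repairable bookkeeping slip, not a wrong approach. Relatedly, your ``main obstacle'' about representing $\Frameai_i$ as a conjunction of $m$ DNFs is moot for \Cref{alg:efficient-eepdr-ai} as written: \cref{ln:efficient-eepdr-ai:transformer} produces $\Frameai_{i+1}$ directly as a single DNF with at most $\aibound$ disjuncts (this is exactly the observation $\dnfsize{\Frameai_i}\leq\aibound$ of \Cref{rem:simple-efficient-eepdr-ai}), so the queries $\Frameai_i\land\restrict{t}{\voc}$ stay small without any alternative representation.
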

\begin{proof}
By~\Cref{thm:efficient-monox-efficiency}, each invocation of~\Cref{alg:efficient-monox} in~\cref{ln:efficient-eepdr-ai:frame0} takes $\bigO\left(n^2 \cdot \dnfsize{\monox{\Init}{\bkcube_i}}\right) = \bigO(n^2 \aibound)$ queries and time.
Similarly, each invocation in~\cref{ln:efficient-eepdr-ai:mon-tr} takes $\bigO\left((2n)^2 \cdot \dnfsize{\monox{\tr \lor \Init'}{\reflect{\cubejoin{B}}\land\bkcube'_i}}\right)$ queries and time.
This quantity is $\bigO(n^2 \aibound)$, because by~\Cref{lem:bshouty-mon-mindnf} $\dnfsize{\monox{\tr \lor \Init'}{\reflect{\cubejoin{B}}\land\bkcube'_i}} \leq \dnfsize{\monox{\tr}{\reflect{\cubejoin{B}}\land\bkcube'_i}} + \dnfsize{\monox{\Init'}{\reflect{\cubejoin{B}}\land\bkcube'_i}} = \dnfsize{\monox{\tr}{\reflect{\cubejoin{B}}\land\bkcube'_i}} + \dnfsize{\monox{\Init}{\bkcube_i}}$.
In each iteration,
the number of combinations of terms in~\cref{ln:efficient-eepdr-ai:transformer} is at most
$\prod_{i=1}^{m}{
	\dnfsize{\monox{\tr \lor \Init'}{\reflect{\cubejoin{B}}\land\bkcube'_i}}
}$.
For each of the $m$ terms in the combination, we split the term to $\voc,\voc'$ parts in time linear term size which is at most $n$, and perform a SAT check for whether the term intersects $\Frameai_i$.
Overall this step involves $\bigO\left(m \cdot \aibound\right)$ queries and $\bigO\left((n+m) \cdot \aibound\right)$ time.
This is the cost of each iteration; the number of iterations is bounded by $\aibound$ by~\Cref{thm:abstract-hyperdiamter-bound}.
The claim follows.
\qed
\end{proof}

The proof of~\Cref{thm:efficient-eepdr} follows from~\Cref{lem:efficient-eepdr-correct,lem:efficient-eepdr-ai-efficiency}.

\begin{remark}
\label{rem:simple-efficient-eepdr-ai}
\Cref{lem:efficient-eepdr-correct,lem:efficient-eepdr-ai-efficiency} have the consequence that in~\Cref{alg:eepdr-ai}, $\dnfsize{\Frameai_i} \leq \aibound$ (interestingly, this is true in particular for the resulting inductive invariant). This is a proof that the direct implementation of the monotone hull by $m$ calls to~\Cref{alg:efficient-monox} amounts to $\bigO(n^2 m \aibound)$ SAT queries in each iteration, and $\bigO(n^2 m \aibound^2)$ time thanks to~\Cref{thm:abstract-hyperdiamter-bound}. Though asymptotically inferior, this implementation approach may be more efficient than~\Cref{alg:efficient-eepdr-ai} when $\dnfsize{\Frameai_i} \ll \aibound$.
\end{remark}  
\section{Related Work}
\label{sec:related}

\para{Complexity bounds for invariant inference}
Conjunctive/disjunctive invariants can be inferred in a linear number of SAT calls~\cite{DBLP:conf/fm/FlanaganL01,DBLP:conf/cade/LahiriQ09}.
On the other hand, inferring CNF/DNF invariants for general transition systems is $\NP$-hard with access to a SAT solver~\cite{DBLP:conf/cade/LahiriQ09}, even when the invariants are restricted to monotone formulas~\cite{DBLP:journals/pacmpl/FeldmanISS20}.
Complexity results for model-based interpolation-based invariant inference (stemming from the analysis of the algorithm by~\cite{DBLP:conf/hvc/ChocklerIM12,DBLP:conf/lpar/BjornerGKL13}) were presented in~\cite{DBLP:journals/pacmpl/FeldmanSSW21}, based on the (backwards) fence condition, which tames reachability enough to efficiently infer monotone and almost-monotone DNF invariants, or monotone and almost-monotone CNF invariants under the (forwards) fence condition with the dual algorithms.
Our algorithm (\Cref{alg:efficient-eepdr-ai}) can achieve the same bounds if, similar to~\cite{DBLP:journals/pacmpl/FeldmanSSW21}, the monotone basis (set of counterexamples) is fixed in advance. (This alteration is needed because some short monotone DNF formulas have large CNF size~\cite{MILTERSEN2005325}.)
Our algorithm is more versatile because it applies to CDNF formulas which are not monotone or almost-monotone, and alleviates the need to know a monotone basis in advance.
A CDNF complexity bound similar to ours was obtained in~\cite{DBLP:journals/pacmpl/FeldmanSSW21} under the much stronger assumption that both the backwards and the forwards fence condition hold simultaneously (see~\Cref{rem:backwards-fence}).
Property-directed reachability algorithms~\cite{ic3,pdr} were shown efficient on several parametrized examples~\cite{DBLP:conf/mbmv/SeufertS17,DBLP:journals/pacmpl/FeldmanSSW22} and the very special case of maximal transition systems for monotone invariants~\cite{DBLP:journals/pacmpl/FeldmanISS20}. We show (\Cref{sec:efficient-eedpr}) that abstract interpretation in the monotone theory, studied under the name $\Lambda$-PDR~\cite{DBLP:journals/pacmpl/FeldmanSSW22} in relation to standard PDR, is efficient in broader circumstances, when the DNF size of certain monotonizations of the transition relation are small, using the same quantity $\aibound$ that in previous work~\cite{DBLP:journals/pacmpl/FeldmanSSW22} was established as an upper bound on the number of iterations (without an overall complexity result). %

\para{Monotone theory in invariant inference}
The monotone theory has been employed in previous works on invariant inference.
The aforementioned previous results on inference under the fence condition~\cite{DBLP:journals/pacmpl/FeldmanSSW21} also employ the monotone theory, and are based on one-to-one translations of Bshouty's algorithms to invariant inference, replacing equivalence queries by inductiveness checks and membership queries by bounded model checking. For Bshouty's CDNF algorithm this only works under the stronger two-sided fence condition, which is why our algorithm differs significantly (see~\Cref{rem:comparison-bshouty-cdnf}). The one-sided fence condition suffices for the translation of Bshouty's $\Lambda$-algorithm, which is suitable when a monotone basis is known in advance, e.g.\ for almost-monotone invariants, whereas the CDNF algorithm learns a monotone basis on-the-fly.
Another translation of Bshouty's algorithm to invariant inference is by Jung et al.~\cite{DBLP:journals/mscs/JungKDWY15}, who combine the CDNF algorithm of Bshouty~\cite{DBLP:journals/iandc/Bshouty95} with predicate abstraction and templates to infer quantified invariants. They overcome the problem of membership queries in the original algorithm (see~\Cref{rem:comparison-bshouty-cdnf}) heuristically, using under- and over-approximations and sometimes random guesses, which could lead to the need to restart.
The monotone theory is also used in a non-algorithmic way in~\cite{DBLP:journals/pacmpl/FeldmanSSW22} to analyze overapproximation in IC3/PDR through abstract interpretation in the monotone theory, to which we prove a complexity upper bound.

\para{Inferring decision tree invariants}
In machine learning, decision trees are a popular representation of hypotheses and target concepts.
Garg et al.~\cite{DBLP:conf/popl/0001NMR16} adapt an algorithm by Quinlan~\cite{DBLP:journals/ml/Quinlan86} to infer invariants (later, general Horn clauses~\cite{DBLP:journals/pacmpl/EzudheenND0M18}) in the form of decision trees over numerical and Boolean attributes, which is guaranteed to converge, but not necessarily efficiently overall (even though the algorithm efficiently generates the candidates, the number of candidates could be large).
Similarly to our algorithm, the translation of the CDNF algorithm to invariant inference in~\cite{DBLP:journals/pacmpl/FeldmanSSW21} is applicable also to Boolean decision trees, but, as previously mentioned, requires the stronger two-sided fence condition, whereas our result is the first to do so under the (one-sided) fence condition.

\para{Complexity of abstract interpretation}
The efficiency of the abstract transformers is crucial to the overall success of abstract interpretation, which is often at odds with the domain accuracy; a famous example is the octagon abstract domain~\cite{DBLP:journals/lisp/Mine06}, whose motivation is the prohibitive cost of the expressive polyhedra domain~\cite{DBLP:conf/popl/CousotH78}. We provide a way to compute abstract transformers in the monotone span domain that is efficient in terms of the DNF size of the result (see also~\Cref{rem:simple-efficient-eepdr-ai}).
The computation of the abstract transformer in~\Cref{alg:efficient-eepdr-ai} is inspired by works in symbolic abstraction~\cite{DBLP:conf/vmcai/RepsSY04,DBLP:journals/entcs/ThakurLLR15} about finding \emph{representations} of the best abstract transformer, rather than computing it anew per input~\cite{DBLP:conf/sas/ElderLSAR11,DBLP:conf/sas/ThakurER12,DBLP:conf/vmcai/RepsT16}. %
\section{Conclusion}
\label{sec:conclusion}
This work has accomplished invariant inference algorithms with efficient complexity guarantees in two settings---model-based interpolation and property-directed reachability---resolving open problems where the missing component (as it turns out) was a new way to compute monotone overapproximations.
A common theme is the use of rich syntactic forms of the formulas that the algorithms maintain: in our model-based interpolation algorithm, the candidate invariant is a conjunction of DNFs, even though the target invariant has both short CNF and DNF representations. In our efficient implementation of abstract interpretation, each iterate is again a conjunction of DNFs, although the natural definition (\Cref{lem:basis-conj-monox}, and as inspired by PDR) is with CNF formulas.
We hope that these ideas could inspire new interpolation-based and property-directed reachability algorithms that would benefit in practice from richer hypotheses and techniques from the monotone theory.  
\subsubsection*{Acknowledgement}
We thank the anonymous reviewers and Hila Peleg for insightful comments.
The research leading to these results has received funding from the
European Research Council under the European Union's Horizon 2020 research and
innovation programme (grant agreement No [759102-SVIS]).
This research was partially supported by the Israeli Science Foundation (ISF) grant No.\ 1810/18. %
\bibliographystyle{splncs04}
\bibliography{refs}

\begin{thebibliography}{10}
\providecommand{\url}[1]{\texttt{#1}}
\providecommand{\urlprefix}{URL }
\providecommand{\doi}[1]{https://doi.org/#1}

\bibitem{DBLP:journals/ml/Angluin87}
Angluin, D.: Queries and concept learning. Machine Learning  \textbf{2}(4),
  319--342 (1987)

\bibitem{DBLP:conf/tacas/BiereCCZ99}
Biere, A., Cimatti, A., Clarke, E.M., Zhu, Y.: Symbolic model checking without
  bdds. In: Tools and Algorithms for Construction and Analysis of Systems, 5th
  International Conference, {TACAS} '99, Held as Part of the European Joint
  Conferences on the Theory and Practice of Software, ETAPS'99, Amsterdam, The
  Netherlands, March 22-28, 1999, Proceedings. pp. 193--207 (1999).
  \doi{10.1007/3-540-49059-0\_14},
  \url{https://doi.org/10.1007/3-540-49059-0\_14}

\bibitem{DBLP:conf/lpar/BjornerGKL13}
Bj{\o}rner, N., Gurfinkel, A., Korovin, K., Lahav, O.: Instantiations, zippers
  and {EPR} interpolation. In: {LPAR} 2013, 19th International Conference on
  Logic for Programming, Artificial Intelligence and Reasoning, December 12-17,
  2013, Stellenbosch, South Africa, Short papers proceedings. pp. 35--41
  (2013), \url{https://easychair.org/publications/paper/XtN}

\bibitem{ic3}
Bradley, A.R.: Sat-based model checking without unrolling. In: Verification,
  Model Checking, and Abstract Interpretation - 12th International Conference,
  {VMCAI} 2011, Austin, TX, USA, January 23-25, 2011. Proceedings. pp. 70--87
  (2011). \doi{10.1007/978-3-642-18275-4_7},
  \url{http://dx.doi.org/10.1007/978-3-642-18275-4_7}

\bibitem{DBLP:journals/iandc/Bshouty95}
Bshouty, N.H.: Exact learning boolean function via the monotone theory. Inf.
  Comput.  \textbf{123}(1),  146--153 (1995). \doi{10.1006/inco.1995.1164},
  \url{https://doi.org/10.1006/inco.1995.1164}

\bibitem{DBLP:conf/hvc/ChocklerIM12}
Chockler, H., Ivrii, A., Matsliah, A.: Computing interpolants without proofs.
  In: Hardware and Software: Verification and Testing - 8th International Haifa
  Verification Conference, {HVC} 2012, Haifa, Israel, November 6-8, 2012.
  Revised Selected Papers. pp. 72--85 (2012).
  \doi{10.1007/978-3-642-39611-3\_12},
  \url{https://doi.org/10.1007/978-3-642-39611-3\_12}

\bibitem{DBLP:conf/popl/CousotH78}
Cousot, P., Halbwachs, N.: Automatic discovery of linear restraints among
  variables of a program. In: Aho, A.V., Zilles, S.N., Szymanski, T.G. (eds.)
  Conference Record of the Fifth Annual {ACM} Symposium on Principles of
  Programming Languages, Tucson, Arizona, USA, January 1978. pp. 84--96. {ACM}
  Press (1978). \doi{10.1145/512760.512770},
  \url{https://doi.org/10.1145/512760.512770}

\bibitem{DBLP:books/daglib/0028067}
Crama, Y., Hammer, P.L.: Boolean Functions - Theory, Algorithms, and
  Applications, Encyclopedia of mathematics and its applications, vol.~142.
  Cambridge University Press (2011),
  \url{http://www.cambridge.org/gb/knowledge/isbn/item6222210/?site\_locale=en\_GB}

\bibitem{pdr}
E{\'{e}}n, N., Mishchenko, A., Brayton, R.K.: Efficient implementation of
  property directed reachability. In: International Conference on Formal
  Methods in Computer-Aided Design, {FMCAD} '11, Austin, TX, USA, October 30 -
  November 02, 2011. pp. 125--134 (2011),
  \url{http://dl.acm.org/citation.cfm?id=2157675}

\bibitem{DBLP:conf/sas/ElderLSAR11}
Elder, M., Lim, J., Sharma, T., Andersen, T., Reps, T.W.: Abstract domains of
  affine relations. In: Yahav, E. (ed.) Static Analysis - 18th International
  Symposium, {SAS} 2011, Venice, Italy, September 14-16, 2011. Proceedings.
  Lecture Notes in Computer Science, vol.~6887, pp. 198--215. Springer (2011).
  \doi{10.1007/978-3-642-23702-7\_17},
  \url{https://doi.org/10.1007/978-3-642-23702-7\_17}

\bibitem{DBLP:journals/pacmpl/EzudheenND0M18}
Ezudheen, P., Neider, D., D'Souza, D., Garg, P., Madhusudan, P.: Horn-ice
  learning for synthesizing invariants and contracts. {PACMPL}
  \textbf{2}({OOPSLA}),  131:1--131:25 (2018)

\bibitem{DBLP:journals/pacmpl/FeldmanISS20}
Feldman, Y.M.Y., Immerman, N., Sagiv, M., Shoham, S.: Complexity and
  information in invariant inference. Proc. {ACM} Program. Lang.
  \textbf{4}({POPL}),  5:1--5:29 (2020). \doi{10.1145/3371073},
  \url{https://doi.org/10.1145/3371073}

\bibitem{DBLP:journals/pacmpl/FeldmanSSW21}
Feldman, Y.M.Y., Sagiv, M., Shoham, S., Wilcox, J.R.: Learning the boundary of
  inductive invariants. Proc. {ACM} Program. Lang.  \textbf{5}({POPL}),  1--30
  (2021). \doi{10.1145/3434296}, \url{https://doi.org/10.1145/3434296}

\bibitem{DBLP:journals/pacmpl/FeldmanSSW22}
Feldman, Y.M.Y., Sagiv, M., Shoham, S., Wilcox, J.R.: Property-directed
  reachability as abstract interpretation in the monotone theory. Proc. {ACM}
  Program. Lang.  \textbf{6}({POPL}),  1--31 (2022). \doi{10.1145/3498676},
  \url{https://doi.org/10.1145/3498676}

\bibitem{DBLP:conf/fm/FlanaganL01}
Flanagan, C., Leino, K.R.M.: Houdini, an annotation assistant for esc/java. In:
  {FME} 2001: Formal Methods for Increasing Software Productivity,
  International Symposium of Formal Methods Europe, Berlin, Germany, March
  12-16, 2001, Proceedings. pp. 500--517 (2001)

\bibitem{DBLP:conf/popl/FlanaganQ02}
Flanagan, C., Qadeer, S.: Predicate abstraction for software verification. In:
  Conference Record of {POPL} 2002: The 29th {SIGPLAN-SIGACT} Symposium on
  Principles of Programming Languages, Portland, OR, USA, January 16-18, 2002.
  pp. 191--202 (2002). \doi{10.1145/503272.503291},
  \url{http://doi.acm.org/10.1145/503272.503291}

\bibitem{ICELearning}
Garg, P., L{\"o}ding, C., Madhusudan, P., Neider, D.: Ice: A robust framework
  for learning invariants. In: Computer Aided Verification. pp. 69--87.
  Springer (2014)

\bibitem{DBLP:conf/popl/0001NMR16}
Garg, P., Neider, D., Madhusudan, P., Roth, D.: Learning invariants using
  decision trees and implication counterexamples. In: Proceedings of the 43rd
  Annual {ACM} {SIGPLAN-SIGACT} Symposium on Principles of Programming
  Languages, {POPL} 2016, St. Petersburg, FL, USA, January 20 - 22, 2016. pp.
  499--512 (2016). \doi{10.1145/2837614.2837664},
  \url{https://doi.org/10.1145/2837614.2837664}

\bibitem{DBLP:conf/cav/GrafS97}
Graf, S., Sa{\"{\i}}di, H.: Construction of abstract state graphs with {PVS}.
  In: Computer Aided Verification, 9th International Conference, {CAV} '97,
  Haifa, Israel, June 22-25, 1997, Proceedings. pp. 72--83 (1997).
  \doi{10.1007/3-540-63166-6_10},
  \url{https://doi.org/10.1007/3-540-63166-6_10}

\bibitem{DBLP:conf/icse/JhaGST10}
Jha, S., Gulwani, S., Seshia, S.A., Tiwari, A.: Oracle-guided component-based
  program synthesis. In: Proceedings of the 32nd {ACM/IEEE} International
  Conference on Software Engineering - Volume 1, {ICSE} 2010, Cape Town, South
  Africa, 1-8 May 2010. pp. 215--224 (2010). \doi{10.1145/1806799.1806833},
  \url{https://doi.org/10.1145/1806799.1806833}

\bibitem{DBLP:journals/acta/JhaS17}
Jha, S., Seshia, S.A.: A theory of formal synthesis via inductive learning.
  Acta Inf.  \textbf{54}(7),  693--726 (2017). \doi{10.1007/s00236-017-0294-5},
  \url{https://doi.org/10.1007/s00236-017-0294-5}

\bibitem{DBLP:journals/mscs/JungKDWY15}
Jung, Y., Kong, S., David, C., Wang, B., Yi, K.: Automatically inferring loop
  invariants via algorithmic learning. Math. Struct. Comput. Sci.
  \textbf{25}(4),  892--915 (2015). \doi{10.1017/S0960129513000078},
  \url{https://doi.org/10.1017/S0960129513000078}

\bibitem{DBLP:conf/pldi/KoenigPIA20}
Koenig, J.R., Padon, O., Immerman, N., Aiken, A.: First-order quantified
  separators. In: Donaldson, A.F., Torlak, E. (eds.) Proceedings of the 41st
  {ACM} {SIGPLAN} International Conference on Programming Language Design and
  Implementation, {PLDI} 2020, London, UK, June 15-20, 2020. pp. 703--717.
  {ACM} (2020), \url{https://doi.org/10.1145/3385412.3386018}

\bibitem{DBLP:conf/cade/LahiriQ09}
Lahiri, S.K., Qadeer, S.: Complexity and algorithms for monomial and clausal
  predicate abstraction. In: Automated Deduction - CADE-22, 22nd International
  Conference on Automated Deduction, Montreal, Canada, August 2-7, 2009.
  Proceedings. pp. 214--229 (2009)

\bibitem{DBLP:conf/cav/McMillan03}
McMillan, K.L.: Interpolation and sat-based model checking. In: Computer Aided
  Verification, 15th International Conference, {CAV} 2003, Boulder, CO, USA,
  July 8-12, 2003, Proceedings. pp. 1--13 (2003)

\bibitem{MILTERSEN2005325}
Miltersen, P.B., Radhakrishnan, J., Wegener, I.: On converting cnf to dnf.
  Theoretical Computer Science  \textbf{347}(1),  325--335 (2005).
  \doi{https://doi.org/10.1016/j.tcs.2005.07.029},
  \url{https://www.sciencedirect.com/science/article/pii/S0304397505004688}

\bibitem{DBLP:journals/lisp/Mine06}
Min{\'{e}}, A.: The octagon abstract domain. High. Order Symb. Comput.
  \textbf{19}(1),  31--100 (2006). \doi{10.1007/s10990-006-8609-1},
  \url{https://doi.org/10.1007/s10990-006-8609-1}

\bibitem{DBLP:reference/cg/ORourke04}
O'Rourke, J.: Visibility. In: Goodman, J.E., O'Rourke, J. (eds.) Handbook of
  Discrete and Computational Geometry, Second Edition, pp. 643--663. Chapman
  and Hall/CRC (2004). \doi{10.1201/9781420035315.ch28},
  \url{https://doi.org/10.1201/9781420035315.ch28}

\bibitem{quine1954two}
Quine, W.: Two theorems about truth-functions. Bolet{\'i}n de la Sociedad
  Matem{\'a}tica Mexicana  \textbf{10}(1--2),  64--70 (1954)

\bibitem{DBLP:journals/ml/Quinlan86}
Quinlan, J.R.: Induction of decision trees. Mach. Learn.  \textbf{1}(1),
  81--106 (1986). \doi{10.1023/A:1022643204877},
  \url{https://doi.org/10.1023/A:1022643204877}

\bibitem{DBLP:conf/vmcai/RepsSY04}
Reps, T.W., Sagiv, S., Yorsh, G.: Symbolic implementation of the best
  transformer. In: Verification, Model Checking, and Abstract Interpretation,
  5th International Conference, {VMCAI} 2004, Venice, Italy, January 11-13,
  2004, Proceedings. pp. 252--266 (2004). \doi{10.1007/978-3-540-24622-0\_21},
  \url{https://doi.org/10.1007/978-3-540-24622-0\_21}

\bibitem{DBLP:conf/vmcai/RepsT16}
Reps, T.W., Thakur, A.V.: Automating abstract interpretation. In: Jobstmann,
  B., Leino, K.R.M. (eds.) Verification, Model Checking, and Abstract
  Interpretation - 17th International Conference, {VMCAI} 2016, St. Petersburg,
  FL, USA, January 17-19, 2016. Proceedings. Lecture Notes in Computer Science,
  vol.~9583, pp. 3--40. Springer (2016). \doi{10.1007/978-3-662-49122-5\_1},
  \url{https://doi.org/10.1007/978-3-662-49122-5\_1}

\bibitem{DBLP:conf/mbmv/SeufertS17}
Seufert, T., Scholl, C.: Sequential verification using reverse {PDR}. In:
  Gro{\ss}e, D., Drechsler, R. (eds.) Methoden und Beschreibungssprachen zur
  Modellierung und Verifikation von Schaltungen und Systemen, {MBMV} 2017,
  Bremen, Germany, February 8-9, 2017. pp. 79--90. Shaker Verlag (2017)

\bibitem{DBLP:journals/fmsd/SharmaA16}
Sharma, R., Aiken, A.: From invariant checking to invariant inference using
  randomized search. Formal Methods in System Design  \textbf{48}(3),  235--256
  (2016). \doi{10.1007/s10703-016-0248-5},
  \url{https://doi.org/10.1007/s10703-016-0248-5}

\bibitem{DBLP:conf/esop/0001GHALN13}
Sharma, R., Gupta, S., Hariharan, B., Aiken, A., Liang, P., Nori, A.V.: A data
  driven approach for algebraic loop invariants. In: Programming Languages and
  Systems - 22nd European Symposium on Programming, {ESOP} 2013, Held as Part
  of the European Joint Conferences on Theory and Practice of Software, {ETAPS}
  2013, Rome, Italy, March 16-24, 2013. Proceedings. pp. 574--592 (2013).
  \doi{10.1007/978-3-642-37036-6\_31},
  \url{https://doi.org/10.1007/978-3-642-37036-6\_31}

\bibitem{DBLP:conf/sas/0001GHAN13}
Sharma, R., Gupta, S., Hariharan, B., Aiken, A., Nori, A.V.: Verification as
  learning geometric concepts. In: Static Analysis - 20th International
  Symposium, {SAS} 2013, Seattle, WA, USA, June 20-22, 2013. Proceedings. pp.
  388--411 (2013)

\bibitem{DBLP:conf/cav/SharmaNA12}
Sharma, R., Nori, A.V., Aiken, A.: Interpolants as classifiers. In: Computer
  Aided Verification - 24th International Conference, {CAV} 2012, Berkeley, CA,
  USA, July 7-13, 2012 Proceedings. pp. 71--87 (2012).
  \doi{10.1007/978-3-642-31424-7\_11},
  \url{https://doi.org/10.1007/978-3-642-31424-7\_11}

\bibitem{DBLP:conf/sas/ThakurER12}
Thakur, A.V., Elder, M., Reps, T.W.: Bilateral algorithms for symbolic
  abstraction. In: Min{\'{e}}, A., Schmidt, D. (eds.) Static Analysis - 19th
  International Symposium, {SAS} 2012, Deauville, France, September 11-13,
  2012. Proceedings. Lecture Notes in Computer Science, vol.~7460, pp.
  111--128. Springer (2012). \doi{10.1007/978-3-642-33125-1\_10},
  \url{https://doi.org/10.1007/978-3-642-33125-1\_10}

\bibitem{DBLP:journals/entcs/ThakurLLR15}
Thakur, A.V., Lal, A., Lim, J., Reps, T.W.: Posthat and all that: Automating
  abstract interpretation. Electr. Notes Theor. Comput. Sci.  \textbf{311},
  15--32 (2015). \doi{10.1016/j.entcs.2015.02.003},
  \url{https://doi.org/10.1016/j.entcs.2015.02.003}

\bibitem{DBLP:journals/cacm/Valiant84}
Valiant, L.G.: A theory of the learnable. Commun. {ACM}  \textbf{27}(11),
  1134--1142 (1984). \doi{10.1145/1968.1972},
  \url{https://doi.org/10.1145/1968.1972}

\bibitem{wiedemann1987hamming}
Wiedemann, D.H.: Hamming geometry. Ph.D. thesis, University of Waterloo (1987)

\end{thebibliography}

\appendix
\clearpage
\section{Missing Proofs}
\label{sec:proofs-appendix}
\begin{proof}[Proof of~\Cref{lem:monox-minimality}]
That $\monox{\varphi}{b}$ is $b$-monotone overapproximation of $\varphi$ is immediate from the definition. For minimality, let $\psi$ be a $b$-monotone formula s.t.\ $\varphi \implies \psi$, we need to show that $\monox{\varphi}{b} \implies \psi$. Let $x \models \monox{\varphi}{b}$. Then, by definition, there is $v \models \varphi$ s.t.\ $v \leq_b x$. By the assumption that $\varphi \implies \psi$ also $v \models \psi$, and, because $\psi$ is assumed to be $b$-monotone, it follows that $x \models \psi$. The claim follows.
\qed
\end{proof}

\begin{proof}[Proof of~\Cref{lem:moncube-model-of-monotonization}]
From the premise it follows that $\cubemon{v}{b} \implies \monox{\monox{\varphi}{b}}{b}$, but by~\Cref{lem:bshouty-mon-mindnf} and conversions to DNF, $\monox{\monox{\varphi}{b}}{b} \equiv \monox{\varphi}{b}$.
\qed
\end{proof}

\begin{proof}[Proof of~\Cref{cor:decision-tree-efficient}]
A decision tree of size $m$ has a DNF representation of $m$ terms: a disjunction of terms representing the paths that reach a leaf $\true$, each is the conjunction of the variables on the path with polarity according to left/right branch. Similarly, it has a CNF representation of $m$ clauses: a conjunction of clauses which are the negations of paths that reach a leaf $\false$. Now apply~\Cref{thm:lambda-dual-itp-efficient}.
\qed
\end{proof} %
\section{Bshouty's CDNF Algorithm}
\label{sec:bshouty-cdnf}

\begin{algorithm}[H]
\caption{CDNF-algorithm: exact learning CDNF formulas~\cite{DBLP:journals/iandc/Bshouty95}}
\label{alg:cdnf-exact}
\vspace{-0.4cm}
\begin{multicols}{2}
\begin{algorithmic}[1]
\begin{footnotesize}
\Procedure{CDNF-Algorithm}{}
	\State $t \gets 0$
	\While{$x$ $\gets$ $\equivalencequery{\bigwedge_{i=1}^{t}{H_i}}$ is not $\bot$} \label{ln:cdnf-exact:branch-eq}
		\If{$x \models \bigwedge_{i=1}^{t}{H_i}$}
			\State $H_{t+1} \gets \false$; $b_{t+1} \gets x$ \label{ln:cdnf-exact:negative}
			\State $t \gets t+1$
		\Else
			\State $\sigma' \gets x$
			\For{$i=1,\ldots,t$}
				\If{$\sigma' \not\models H_i$}
					\State $d$ $\gets$ \Call{genMQ}{$\sigma'$, $b_i$}
					\State $H_i \gets H_i \lor d$
				\EndIf
			\EndFor
		\EndIf
	\EndWhile
	\State \Return $\bigwedge_{i=1}^{t}{H_i}$
\EndProcedure
\Procedure{genMQ}{$\sigma$, $b$}
	\State $v \gets \sigma$
    \State walked $\gets$ $\true$
	\While{walked}
		\State walked $\gets$ $\false$
		\For{$j=1,\ldots,n$}
			\If{$b[p_j] = v[p_j]$}
				\State \textbf{continue}
			\EndIf
			\State $x \gets v[p_j \mapsto b[p_j]]$
			\If{$\membershipquery{x} = \true$} $\label{ln:bshouty-exact:check-membership}$
				\State $v \gets$ x
				\State walked $\gets$ $\true$
			\EndIf
		\EndFor
	\EndWhile
	\State \Return{$\cubemon{v}{b}$}
\EndProcedure
\end{footnotesize}
\end{algorithmic}
\end{multicols}
\vspace{-0.4cm}
\end{algorithm}
The original CDNF algorithm is presented in~\Cref{alg:cdnf-exact}, aiming to identify an unknown formula $\psi$.
$\equivalencequery{H}$ returns $\bot$ if $H \equiv \psi$ and a differentiating counterexample otherwise.
$\membershipquery{x}$ returns $\true$ iff $x \models \psi$.

\section{Calculation of $\aibound$ in~\Cref{ex:ai-parity}}
\label{sec:parity-example-monotonization}
In~\Cref{ex:ai-parity}, $\tr$ can be written in DNF as
\newcommand*{\bbox}{%
  \tcboxmath[colback=white, colframe=black, size=fbox, arc=0pt, boxrule=0.4pt]%
}
\newcommand*{\nbox}{%
  \tcboxmath[colback=white, colframe=white, size=fbox, arc=0pt, boxrule=0.4pt]%
}
\small
\begin{align*}
	\tr
		=
		  &\nbox{(\vec{x}=\litabs{00\ldots000} \land \vec{x}'=00\ldots000)}
		 \\
		 \hline
		  \lor&\nbox{(\vec{x}=\litabs{00\ldots000} \land \vec{x}'=00\ldots0\litabs{11})}
		  \\
		  \lor&\nbox{(\vec{x}=\litabs{00\ldots000} \land \vec{x}'=00\ldots\litabs{1}0\litabs{1})}
		  \\
		  \lor&\nbox{(\vec{x}=\litabs{00\ldots000} \land \vec{x}'=0\litabs{1}\ldots00\litabs{1})}
		  \\
		  \lor&\nbox{(\vec{x}=\litabs{00\ldots000} \land \vec{x}'=\litabs{1}0\ldots00\litabs{1})}
		  \\
		  \lor&\nbox{(\vec{x}=\litabs{00\ldots000} \land \vec{x}'=00\ldots\litabs{1}\litabs{1}0)}
		  \\
		  \lor&\nbox{(\vec{x}=\litabs{00\ldots000} \land \vec{x}'=0\litabs{1}\ldots0\litabs{1}0)}
		  \\
		  \lor&\nbox{(\vec{x}=\litabs{00\ldots000} \land \vec{x}'=\litabs{1}0\ldots0\litabs{1}0)}
		  \\
		  \lor&\nbox{(\vec{x}=\litabs{00\ldots000} \land \vec{x}'=0\litabs{1}\ldots\litabs{1}00)}
		  \\
		  \lor&\nbox{(\vec{x}=\litabs{00\ldots000} \land \vec{x}'=\litabs{1}0\ldots\litabs{1}00)}
		  \\
		  \lor&\nbox{(\vec{x}=\litabs{00\ldots000} \land \vec{x}'=\litabs{1}\litabs{1}\ldots000)}
		  \\
		  \hline
		  \lor&\ldots
		  \\
		  \hline
		  \lor&\bbox{(\vec{x}=\litabs{00\ldots000} \land \vec{x}'=\litabs{1}\litabs{1}\ldots\litabs{1}\litabs{1}0)}
		 \\
		 \lor&\bbox{(\vec{x}=\litabs{00\ldots000} \land \vec{x}'=\litabs{1}\litabs{1}\ldots\litabs{1}0\litabs{1})}
		 \\
		 \lor&\bbox{\ldots}
		 \\
		 \lor&\bbox{(\vec{x}=\litabs{00\ldots000} \land \vec{x}'=\litabs{1}\litabs{1}\ldots0\litabs{1}\litabs{1})}
		 \\
		 \lor&\bbox{(\vec{x}=\litabs{00\ldots000} \land \vec{x}'=\litabs{1}0\ldots\litabs{1}\litabs{1}\litabs{1})}
		 \\
		 \lor&\bbox{(\vec{x}=\litabs{00\ldots000} \land \vec{x}'=0\litabs{1}\ldots\litabs{1}\litabs{1}\litabs{1})}
		.
\end{align*}
\normalsize
The horizontal lines groups disjuncts according to the number of variables that are turned to $1$---zero, two, etc., up to $n-1$.
The literals that appear in color are those that are dropped when computing a DNF for $\monox{\tr}{\vec{x}=00\ldots000\land\vec{x}'=11\ldots111}$ according to~\Cref{lem:bshouty-mon-mindnf}.
In this way we see that the monotonization of the last group of terms, indicated by boxes, subsumes the monotonization of the other terms, yielding
\begin{equation*}
	\monox{\tr}{\vec{x}=00\ldots000\land\vec{x}'=11\ldots111} \equiv \bigvee_{i=1}^{n}{(x'_i = 0)},
\end{equation*}
and hence $\aibound = \dnfsize{\monox{\tr}{\vec{x}=00\ldots000\land\vec{x}'=11\ldots111}} \leq n$. %
\end{document}